\def\fol{f_{\textrm{ol}}}
\def\tol{\leftrightarrow}
\def\qu{\breve{u}} 
\def\qU{\breve{\ccalU}} 
\def\qcm{\breve{c}} 
\def\qV{\breve{\ccalV}} 
\def\qK{\breve{K}} 
\def\qQ{\breve{Q}} 
\definecolor{pennblue}{rgb}{0.004,0.145,0.431} 
\definecolor{penngreen}{rgb}{0,0.557,0} 
\definecolor{pennyellow}{rgb}{0.949,0.757,0} 
\definecolor{pennorange}{rgb}{0.765,0.353,0} 
\definecolor{pennred}{rgb}{0.584,0,0.102} 
\definecolor{pennpurple}{rgb}{0.290,0,0.259} 
\newtheorem{proposition}{\hspace{0pt}\bf Proposition}
\newtheorem{theorem}{\hspace{0pt}\bf Theorem}
\newtheorem{corollary}{\hspace{0pt}\bf Corollary}
\newtheorem{remark}{\hspace{0pt}\bf Remark}
\newtheorem{definition}{\hspace{0pt}\bf Definition}
\begin{document}
%
\title{Hierarchical Overlapping Clustering of Network Data Using Cut Metrics}
%
%
%

\author{Fernando~Gama,~
        Santiago~Segarra,~
        and~Alejandro~Ribeiro
\thanks{Work in this paper is supported by NSF CCF 1217963. F. Gama and A. Ribeiro are with the Department
of Electrical and Systems Engineering, University of Pennsylvania. S. Segarra is with the Institute for Data, Systems, and Society, Massachusetts Institute of Technology. Emails: fgama@seas.upenn.edu, segarra@mit.edu, aribeiro@seas.upenn.edu.}}

%
%

\markboth{IEEE TRANSACTIONS ON SIGNAL AND INFORMATION PROCESSING OVER NETWORKS (ACCEPTED)}%
{Hierarchical Overlapping Clustering of Network Data Using Cut Metrics}
%



\maketitle

\begin{abstract}
A novel method to obtain hierarchical and overlapping clusters from network data -- i.e., a set of nodes endowed with pairwise dissimilarities -- is presented. The introduced method is \emph{hierarchical} in the sense that it outputs a nested collection of groupings of the node set depending on the resolution or degree of similarity desired, and it is \emph{overlapping} since it allows nodes to belong to more than one group. Our construction is rooted on the facts that a hierarchical (non-overlapping) clustering of a network can be equivalently represented by a finite ultrametric space and that a convex combination of ultrametrics results in a cut metric. By applying a hierarchical (non-overlapping) clustering method to multiple dithered versions of a given network and then convexly combining the resulting ultrametrics, we obtain a cut metric associated to the network of interest. We then show how to extract a hierarchical overlapping clustering structure from the aforementioned cut metric. Furthermore, the so-called overlapping function is presented as a tool for gaining insights about the data by identifying meaningful resolutions of the obtained hierarchical structure. Additionally, we explore hierarchical overlapping quasi-clustering methods that preserve the asymmetry of the data contained in directed networks. Finally, the presented method is illustrated via synthetic and real-world classification problems including handwritten digit classification and authorship attribution of famous plays.
\end{abstract}

\begin{IEEEkeywords}
Clustering, Network theory, Cut metrics, Hierarchical clustering, Covering, Dithering.
\end{IEEEkeywords}

%
\IEEEpeerreviewmaketitle


%
%
%
%

 

\section{Introduction}
	\label{sec:intro}

\IEEEPARstart{C}{onsider} a dataset where each element can be represented as a node in a network with a pairwise dissimilarity function. In this setting, the general objective of clustering is to group those nodes that are more similar to each other than to the rest, according to the relationship established by the dissimilarity function \cite{jaindubes88,castro02}. Clustering and its generalizations are ubiquitous tools since they are used in a wide variety of fields such as psychology \cite{cattell43}, social network analysis \cite{lu11}, political science \cite{paulus15}, neuroscience \cite{ozdemir15}, among many others~\cite{dalton15,zao15}.

Traditional clustering methods provide only one partitioning of the node set in such a way that each data point belongs to one and only one block of the partition. An important limitation of these traditional clustering methods is that the dataset may present a complex data structure at several resolutions or levels of similarity, and outputting only one partition may not be adequate in portraying the different grouping degrees that may be present. \emph{Hierarchical} clustering solves this issue by providing a collection of partitions, indexed by a resolution parameter, that can be set by the user to determine the extent up to which nodes are considered similar or different \cite{lance67general}. In other words, each partition in the collection reflects a different degree of similarity between the nodes, ranging from considering all nodes different, to considering all nodes similar. Examples of hierarchical clustering methods are UPGMA \cite{upgma58}, Ward's method \cite{ward63}, complete \cite{defays76} or single linkage~\cite{carlsson10}. 

However, hierarchical methods still have a major limitation, namely, that nodes are assigned to one and only one category or cluster. There are many situations in which this assignment does not constitute a reasonable approach. Specifically, some nodes might inherently have traits of more than one group and hence have similarities to multiple clusters of nodes that otherwise would be considered dissimilar \cite{youssef09, machan09}. Two classes of methods have been proposed to overcome this issue. The first class corresponds to the so-called soft or fuzzy clustering methods that allow the allocation of a node to multiple clusters by assigning to each node a membership degree or probability of belonging to every cluster \cite{bezdek81}. This can also be done in a hierarchical fashion \cite{yu05, bordogna12}. Nevertheless, these methods subscribe to the idea that nodes have more affinity to either one or another subset, as illustrated by their membership degree; or that they belong to only one group but the setting of the problem is not rich enough to determine to which one, thus assigning a probability of belonging to different subsets. Fundamentally, these methods do not contemplate the idea that a node can intrinsically be part in equal terms of more than one cluster. The second class of methods that attempt to solve this issue is the class of \emph{overlapping} clustering methods, which perform a non-hierarchical deterministic assignment of nodes to more than one subset \cite{cleuziou08}. While overlapping methods enable nodes to belong to more than one cluster, they are still myopic -- just like traditional clustering methods -- to multiple affinity resolutions.

In this paper, a \emph{hierarchical overlapping} clustering method is proposed. Our method outputs a collection of groupings where each of them corresponds to different resolutions of similarity between the nodes (hierarchical), while allowing nodes in each grouping to deterministically belong to more than one cluster (overlapping). Essentially, the method is derived by generalizing the concepts of ultrametrics, equivalence relations, and dendrograms -- typical of hierarchical (non-overlapping) clustering methods -- to those of cut metrics \cite{deza96,culbertson15}, tolerance relations \cite{bartol04}, and nested collections of coverings. More precisely, we generate a family of networks derived from dithering multiple times the dissimilarity function of a given network. We then proceed to apply a hierarchical (non-overlapping) clustering method to each of the dithered networks, resulting in a collection of ultrametrics. From this collection we obtain a cut metric related to the network of interest by means of a convex combination of the ultrametrics. Finally, based on the cut metric we construct a tolerance relation that connects the nodes that are at a distance lower than a pre-specified resolution, which we then use to derive a covering of the node set. By considering all possible resolution levels, we obtain a nested collection of coverings. Hierarchical overlapping clustering methods have been previously developed in the context of community detection, but only for undirected, unweighted networks \cite{lancichinetti09,shen09}.

In Section~\ref{sec:prelim}, the concepts of partition, equivalence relation, and dendrogram are defined, and their relations to both traditional and hierarchical (non-overlapping) clustering methods are established. These concepts are then respectively generalized to those of covering, tolerance relation, and nested collection of coverings. This generalization is achieved by dropping the requirements that prevent a node from belonging to more than one group. Moreover, a formal definition of a hierarchical overlapping clustering method is presented. Section~\ref{sec:cut-metrics} relates the introduced concepts with {concrete} finite metric constructions. More specifically, ultrametrics are defined and directly linked to hierarchical clustering methods (Theorem~\ref{theo:ultrametric-dendrogram}). We then introduce cut metrics and explain how they can be used to obtain a nested collection of coverings (Theorem~\ref{theo:cut-metric-nested-covering}). The proposed algorithm to obtain cut metrics from a given network (Algorithm~\ref{a:overlapping-clustering}), and hence nested collection of coverings, is detailed in Section~\ref{sec:algorithm}. The algorithm is grounded on the fact that a convex combination of ultrametrics yields a cut metric (Proposition~\ref{prop:um-cut-metric}) and the concept of dithering \cite{schuchman64}, that can be used to obtain multiple noisy versions of the given network. Also, quasi-clustering methods that allow for a hierarchical structure and overlapping nodes are explored in Section~\ref{sec:quasi}. These methods are particularly useful when there is a need to preserve in the grouping structure the asymmetric nature of directed networks. Section~\ref{sec:syn_exs} illustrates the implementation of the proposed method in synthetic {experiments} to show that the hierarchical overlapping clustering algorithm yields sensitive results that correspond to intuition. Finally, in Section~\ref{sec:apps}, the proposed method is applied to the unsupervised classification problem of determining handwritten digits \cite{mnist} and the authorship attribution problem of identifying plays that have been co-authored \cite{segarra15}. Concluding remarks in Section~\ref{sec:conclusions} close the paper.


\section{Preliminaries} \label{sec:prelim}

Let $N =(X ,A_{X })$ be a network defined by a finite set of nodes $X $ and a dissimilarity function $A_{X }: X  \times X  \to 
\reals_{+}$ that measures how different two nodes are. This function satisfies that, for any two nodes $x,x' \in X $, $A_{X }(x,x') \ge 0$ with $A_{X }(x,x')=0$ if and only if $x=x'$. 
$A_X$ is a dissimilarity function in the sense that the greater the value of $A_{X }$, the more different two nodes are. Observe that the dissimilarity function need not satisfy the triangle inequality nor be symmetric, that is, we may have that $A_{X }(x,x') \ne A_{X }(x',x)$. This is particularly useful when modeling uneven levels of influence among nodes. We denote by $\ccalN$ the set of all possible networks.

\begin{definition}
		\label{def:partition}
A partition $P_{X }=\{B_{1},\ldots,B_{m}\}$ is a collection of subsets of $X $ such that $\cup_{i=1}^{m} B_{i}=X $ and $B_{i} \cap B_{j}=\emptyset$ for $i \ne j$, $i,j=1,\ldots,m$. Denote by $\ccalP$ the set of all possible partitions.
\end{definition}
\begin{definition}
		\label{def:equivalence-relation}
An equivalence relation $\sim$ is a binary set relation that, for any $x,x',x'' \in X $, satisfies the properties of reflexivity ($x \sim x$), symmetry ($x \sim x'$ if and only if $x' \sim x$), and transitivity (if $x \sim x'$ and $x' \sim x''$, then $x \sim x''$).
\end{definition}
\noindent Observe that partitions (Def. \ref{def:partition}) uniquely define equivalence relations (Def. \ref{def:equivalence-relation}) as follows: $x \sim x'$ if and only if $x,x' \in B_{i}$ for some $B_{i} \in P_{X }$, $i=1,\ldots,m$. Equivalently, equivalence relations uniquely define partitions \cite{halmos60}. 
An equivalence relation obtained from a specific partition $P_X$ will be denoted by~$\sim_{P_{X }}$.

With these definitions in place, a (traditional) clustering method $\ccalG$ can be defined as a structure-preserving map from the set of networks to the set of partitions, $\ccalG: \ccalN \to \ccalP$. This map is structure preserving in the sense that the output partition is defined over the same node set $X$ specific to the input network. That is, $\ccalG(N)=P_{X }=\{B_{1},\ldots,B_{m}\}$ for $N=(X,A_{X })$ with $\cup_{i=1}^{m} B_{i}=X $. Intuitively, a desirable clustering method $\ccalG$ is one in which the nodes contained in each subset $B_{i}$ are determined by the dissimilarity function $A_{X }$ so that similar nodes are grouped together.

In many situations, having only one partition as the output of a clustering method may not be appropriate. It might be desirable to output several partitions that depend on a resolution parameter that specifies the affinity required between two nodes to be deemed as similar.
\begin{definition}
		\label{def:dendrogram}
A dendrogram $D_{X }=\{D_{X }(\delta), \delta \ge 0\}$ is a collection of partitions, where $\delta$ is a resolution parameter and $D_{X }(\delta)$ is a partition of the node set $X$, satisfying the following two properties:
\begin{enumerate}[(i)]
\item $D_{X }(0)=\{ \{x\}, x \in X \}$, and there exists a $\delta_{\max}$ such that for all $\delta \ge \delta_{\max}$, $D_{X }(\delta)= \{ X \}$;
\item it is a nested collection of partitions, that is, for $\delta \le \delta'$, if $x \sim_{\delta} x'$, then $x \sim_{\delta'} x'$ for all $x, x' \in X$;
\end{enumerate}
together with a technical condition for right-continuity \cite{carlsson14}.
\end{definition}
\noindent Note that, in an effort to ease the exposition, $x \sim_{\delta} x'$ is being used as a shorthand notation for $x \sim_{D_{X}(\delta)} x'$. Observe that property (i) encodes the fact that at resolution $\delta=0$ all nodes are considered to be different from each other, thus each node forms a singleton cluster, whereas for sufficiently large $\delta$ all nodes are considered to be similar and, hence, clustered in a single block. Property (ii) enforces the agglomeration to occur in a nested fashion in the sense that if two nodes belong to the same cluster at a specific resolution then they will remain co-clustered for any larger resolution.
Denoting by $\ccalD$ the set of all possible dendrograms, we define a hierarchical clustering method $\ccalH$ as follows.
\begin{definition}
		\label{def:hierarchical-clustering}
A hierarchical clustering method $\ccalH$ is a structure-preserving map from the set of networks to the set of dendrograms
\begin{equation}
\ccalH : \ccalN \to \ccalD.
		\label{eq:hierarchical-clustering}
\end{equation}
\end{definition}

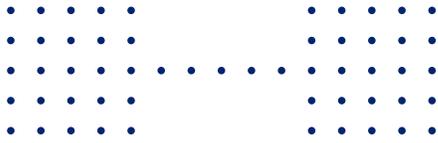
\begin{figure}[!t]
	\centering

\def \thisplotscale {0.5}
\def \unit {\thisplotscale cm}

\tikzstyle{dot} = [ellipse, 
                   inner sep=0pt, 
                   fill=pennblue, 
                   anchor = center,
                   minimum height = 0.2*\unit, 
                   minimum width  = 0.2*\unit]

{\footnotesize\begin{tikzpicture}[x = 1*\unit, y=1*\unit, font=\footnotesize]

   \foreach \i in {-2,...,2}{
      \foreach \j in {-2,...,2}{   
         \path (0.8*\i,0.8*\j) node [dot] {};
      }
   }
   \foreach \i in {-2,...,2}{  
         \path (4,0) ++ (0.8*\i,0) node [dot] {};
   }

   \foreach \i in {-2,...,2}{
      \foreach \j in {-2,...,2}{   
         \path (8,0) ++ (0.8*\i,0.8*\j) node [dot] {};
      }
   }

\end{tikzpicture}}
	\caption{Dumbbell network with dissimilarities given by the Euclidean distance between points.}
	\label{fig:dumbbell-nodes}
\end{figure}
A limitation of both traditional $\ccalG$ and hierarchical $\ccalH$ clustering methods is that each node is required to belong to one and only one cluster at any given resolution. This requirement stems from the nonintersecting nature of the blocks that form a partition. In many problems, however, some nodes may inherently have traits of more than group, making it reasonable for them to belong to more than one cluster. For instance, consider the dumbbell network in Fig.~\ref{fig:dumbbell-nodes} where the dissimilarity between two nodes is given by their Euclidean distance. It is intuitive that the point clouds on each side should form separate clusters. However, it is unclear if, e.g., {\it all} of the handle (bridge) should be a separate cluster as it is not unreasonable to assign its borders to the respective point clouds. 

Throughout the paper, we develop a clustering method that enables both deterministic overlapping of nodes and hierarchical outcomes. This is achieved by generalizing the concepts of equivalence relations, partitions, and dendrograms to those of tolerance relations, coverings, and nested collections of coverings, respectively; see Table \ref{table:parallelism}.

\subsection{Generalizing concepts: Coverings and tolerance relations}

The first step in order to allow a node to belong to more than one cluster is to drop the nonintersecting requirement of partitions.

\begin{definition}
		\label{def:covering}
A covering $Q_{X }=\{C_{1},\ldots,C_{m}\}$ of the node set $X $ is a collection of subsets such that $\cup_{i=1}^{m} C_{i}=X $, but not necessarily $C_{i} \cap C_{j} = \emptyset$ for $i \ne j$. Denote by $\ccalQ$ the set of all possible coverings.
\end{definition}
\noindent Note that partitions (Def. \ref{def:partition}) are particular cases of coverings (Def. \ref{def:covering}), i.e., $\ccalP \subset \ccalQ$. Furthermore, a non-hierarchical overlapping clustering method $\ccalM$ is defined as a structure-preserving map from the set of networks to the set of coverings $\ccalM: \ccalN \to \ccalQ$. Related to the concept of covering, we introduce the notion of a tolerance relation.
\begin{definition}
		\label{def:tolerance-relation}
A tolerance relation $\tol$ is a binary set relation that, for any $x,x' \in X $, satisfies the properties of reflexivity ($x \tol x$), and symmetry ($x \tol x'$ if and only if $x' \tol x$), but is not necessarily transitive.
\end{definition}
\noindent Observe that a tolerance relation is a generalization of an equivalence relation (Def. \ref{def:equivalence-relation}) obtained by dropping transitivity. A tolerance relation defined on a node set $X $ can be represented by an unweighted and undirected graph where an edge exists between nodes $x$ and $x'$ if and only if $x \tol x'$. There are two established ways of inducing a covering from a tolerance relation {based on this graph \cite{bartol04}. The first methodology is referred to as covering by classes, wherein each cover consists of the neighbor set of each node, together with the node itself, and where all covers that are completely contained within a larger one are removed.} The other established procedure is denominated covering by blocks, where each block of the covering is given by the maximal cliques of the graph. For the remainder of the paper, we adopt this latter procedure as the canonical way of inducing coverings from tolerance relations. {Note that,} unlike the case of equivalence relations and partitions, while any tolerance relation can induce a covering (by blocks) not all coverings of a node set $X $ can be induced by a tolerance relation \cite[Theorem 2]{bartol04}. Leveraging the connection between tolerance relations and coverings, we extend the notion of a dendrogram into the realm of overlapping clusters.

\begin{table}[tb]
	\caption{Parallelism between Hierarchical clustering $\ccalH$ and Hierarchical Overlapping clustering $\ccalO$.}
	\label{table:parallelism}
	\centering
	\begin{tabular}{l|ll}
	\textbf{Method}	& Hierarchical $\ccalH$	& Overlapping $\ccalO$ \\
			& (Def. \ref{def:hierarchical-clustering})	& (Def. \ref{def:overlapping-clustering}) \\ \hline
	\textbf{Metric}	& Ultrametric $u_{X}(x,x')$	& Cut Metric  $c_{X }(x,x')$\\
			& (Def. \ref{def:ultrametric})	& (Def. \ref{def:cut-metric}) \\
	\textbf{Relation}	& Equivalence $\sim$	& Tolerance $\tol$ \\
			& (Def. \ref{def:equivalence-relation})	& (Def. \ref{def:tolerance-relation}) \\
	\textbf{Grouping}	& Partition $P_{X }=\{B_{i}\}$	& Covering $Q_{X }=\{C_{i}\}$ \\
			& (Def. \ref{def:partition})	& (Def. \ref{def:covering}) \\
	\textbf{Hierarchy}	& Dendrogram $D_{X }$	& Nested Covering $K_{X }$ \\
			& (Def. \ref{def:dendrogram})	& (Def. \ref{def:nested-coverings}) \\ \hline
	\textbf{Construction}& Theorem \ref{theo:ultrametric-dendrogram} & Theorem \ref{theo:cut-metric-nested-covering}
	\end{tabular}
\end{table}

\begin{definition}
		\label{def:nested-coverings}
A nested covering $K_{X }=\{K_{X }(\delta), \delta \ge 0\}$ is a collection of coverings, where $\delta$ is a resolution parameter and $K_{X }(\delta)$ is a covering induced by a tolerance relation $\tol_{\delta}$, satisfying the following properties:
\begin{enumerate}[(i)] 
\item $K_{X }(0)=\{ \{x\}, x \in X \}$ and there exists a $\delta_{\max}$ such that for all $\delta \ge \delta_{\max}$, $K_{X }(\delta)= \{ X\} $;
\item For $\delta \leq \delta'$, if $x \tol_{\delta} x'$, then $x \tol_{\delta'} x'$ for all $x,x' \in X $.
\end{enumerate}
\end{definition}

\noindent Notice the resemblance between properties (i) and (ii) above and those in Def.~\ref{def:dendrogram}. In this way, we extend the notion of nestedness from its intuitive definition for (non-overlapping) clusters to the setting of overlapping clusters.
Denoting by $\ccalK$ the set of all nested coverings, we define hierarchical overlapping clustering methods as follows.
\begin{definition}
		\label{def:overlapping-clustering}
A hierarchical overlapping clustering method $\ccalO$ is a structure-preserving map from the set of networks to the set of nested coverings
\begin{equation}
\ccalO: \ccalN \to \ccalK.
		\label{eq:overlapping-clustering}
\end{equation}
\end{definition}
Given a network $N=(X, A_X)$, a clustering method $\ccalO$ outputs a structure $\ccalO(N) = K_X$ which takes into account different resolution levels of similarity and, at each resolution, allows overlapping clusters. Building on metric concepts developed in Section~\ref{sec:cut-metrics}, {we present the practical construction of one such method $\ccalO$ in Section~\ref{sec:algorithm}}.


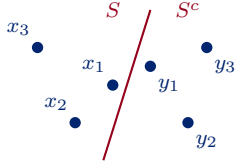
\begin{figure}[!t]
	\centering

\def \thisplotscale {0.5}
\def \unit {\thisplotscale cm}

\tikzstyle{dot} = [ellipse, 
                   inner sep=0pt, 
                   fill=pennblue, 
                   anchor = center,
                   minimum height = 0.3*\unit, 
                   minimum width  = 0.3*\unit]

{\footnotesize\begin{tikzpicture}[x = 1*\unit, y=1*\unit, font=\footnotesize]



	\path (0,0) node [dot] {}; \node [color=pennblue]at (-0.5,0.5) {$x_{1}$};
	\path (-1,-1) node [dot] {}; \node [color=pennblue]at (-1.5,-0.5) {$x_{2}$};
	\path (-2,1) node [dot] {}; \node [color=pennblue]at (-2.5,1.5) {$x_{3}$};
	\path (1,0.5) node [dot] {}; \node [color=pennblue]at (1.5,0) {$y_{1}$};
	\path (2,-1) node [dot] {}; \node [color=pennblue] at (2.5,-1.5) {$y_{2}$};
	\path (2.5,1) node [dot] {}; \node [color=pennblue] at (3,0.5) {$y_{3}$};
	\draw [thick,color=pennred] (-0.25,-2)--(1,2);
	
	\node[color=pennred] at (0,2) {$S$}; \node[color=pennred] at (2,2) {$S^{c}$};
	
\end{tikzpicture}}
	\vspace{0.3cm}
	\caption{A cut semimetric on the node set $X =\{x_{1},x_{2},x_{3},y_{1},y_{2},y_{3}\}$. A cut that partitions the space into $S$ and $S^{c}$ is drawn. The resulting cut semimetric is $\delta_{S}(x_{i},x_{j})=\delta_{S}(y_{i},y_{j})=0$ and $\delta_{S}(x_{i},y_{j})=1$ for all $i,j=1,2,3$.}
	\label{fig:cut-semimetric}
\end{figure}

\section{Nested coverings derived from cut metrics}\label{sec:cut-metrics}

Hierarchical clustering methods output dendrograms, which are nested collections of partitions. However, dendrograms are cumbersome to handle mathematically so their equivalence to ultrametric spaces proves to be useful \cite{carlsson10}.
\begin{definition}
		\label{def:ultrametric}
An ultrametric is a function $u_{X}: X  \times X  \to \reals_{+}$ that, for any $x,x',x'' \in X $ satisfies 
\begin{enumerate}[(i)]
\item $u_{X}(x,x') \ge 0$ and $u_{X}(x,x')=0$ if and only if $x=x'$;
\item $u_{X}(x,x')=u_{X}(x',x)$;
\item $u_{X}(x,x'') \le \max\{u_{X}(x,x'),u_{X}(x',x'')\}$.
\end{enumerate}
\end{definition}
\noindent Ultrametrics can be interpreted as metrics that satisfy a more stringent version of the triangle inequality [Property (iii)] where the addition in the regular triangle inequality is replaced by a maximization. Dendrograms can be equivalently represented as ultrametrics, as shown in the following theorem. 

\begin{theorem}
		\label{theo:ultrametric-dendrogram}
Given an ultrametric $u_{X}(x,x')$ defined over the node set $X $, define a collection of equivalence relations as follows
\begin{equation}
u_{X}(x,x') \le \delta \Longleftrightarrow x \sim_{\delta} x'.
		\label{eq:um-dendrogram}
\end{equation}
The resulting collection of equivalence relations is such that it generates a dendrogram $D_{X }$.

Likewise, given a dendrogram $D_{X }$ that uniquely determines a collection of equivalence relations $\sim_{\delta}$ defined on the node set $X $, an ultrametric $u_{X}$ can be obtained as
\begin{equation}
u_{X}(x,x'):=\min\{\delta:x \sim_{\delta} x'\}.
		\label{eq:dendrogram-um}
\end{equation}
\end{theorem}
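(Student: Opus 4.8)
The plan is to prove the two directions separately, in each case verifying the defining properties of the target object (Def.~\ref{def:dendrogram} in the first direction, Def.~\ref{def:ultrametric} in the second). The conceptual engine behind both directions is a single correspondence: the strong triangle inequality [Property (iii) of Def.~\ref{def:ultrametric}] is exactly what makes the thresholded relation $u_{X}(x,x') \le \delta$ transitive, and conversely transitivity of the $\sim_{\delta}$ at the relevant resolution is exactly what delivers the ultrametric inequality. This shared argument is the heart of the statement and will be invoked once in each direction.

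For the first direction, I would fix $\delta \ge 0$ and check that $\sim_{\delta}$ defined by \eqref{eq:um-dendrogram} is an equivalence relation. Reflexivity follows from $u_{X}(x,x)=0 \le \delta$ and symmetry from Property (ii) of the ultrametric. The only nontrivial point is transitivity: if $u_{X}(x,x') \le \delta$ and $u_{X}(x',x'') \le \delta$, then Property (iii) gives $u_{X}(x,x'') \le \max\{u_{X}(x,x'),u_{X}(x',x'')\} \le \delta$, so $x \sim_{\delta} x''$. Having an equivalence relation at each $\delta$, I would verify the dendrogram axioms. Property (i) uses $u_{X}(x,x')=0 \Leftrightarrow x=x'$ to force $D_{X}(0)$ to consist of singletons, and finiteness of $X$ to set $\delta_{\max} = \max_{x,x'} u_{X}(x,x')$, beyond which every pair is merged. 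Nestedness [Property (ii)] is immediate, since $\delta \le \delta'$ together with $u_{X}(x,x') \le \delta$ imply $u_{X}(x,x') \le \delta'$. Right-continuity holds because for each pair the set of admissible resolutions, $\{\delta : u_{X}(x,x') \le \delta\} = [u_{X}(x,x'),\infty)$, is closed at its left endpoint.

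For the converse, given a dendrogram with its relations $\sim_{\delta}$, I would define $u_{X}$ by \eqref{eq:dendrogram-um} and check the three ultrametric properties. First I would note that the minimum is attained: the set $\{\delta : x \sim_{\delta} x'\}$ is nonempty by Property (i) of Def.~\ref{def:dendrogram}, and right-continuity guarantees it has the form $[\delta_{0},\infty)$, so its infimum is achieved. Nonnegativity is clear; $u_{X}(x,x')=0 \Leftrightarrow x=x'$ follows because $D_{X}(0)$ is the singleton partition; and symmetry is inherited from the symmetry of each $\sim_{\delta}$. For the strong triangle inequality, set $\delta^{\ast} = \max\{u_{X}(x,x'),u_{X}(x',x'')\}$; by nestedness both $x \sim_{\delta^{\ast}} x'$ and $x' \sim_{\delta^{\ast}} x''$ hold, and transitivity of the equivalence relation $\sim_{\delta^{\ast}}$ yields $x \sim_{\delta^{\ast}} x''$, whence $u_{X}(x,x'') \le \delta^{\ast}$.

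I expect the main obstacle to be bookkeeping around the technical right-continuity condition rather than any deep difficulty. It is precisely this condition that is needed to guarantee the minimum in \eqref{eq:dendrogram-um} is genuinely achieved (as opposed to being only an infimum), and it is the property one must check most carefully to close the loop between the two constructions. The substantive mathematical content --- the equivalence between the ultrametric inequality and transitivity of the level relations --- is the same short argument applied twice, in opposite directions.
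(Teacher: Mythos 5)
Your proof is correct: the paper itself does not argue this theorem but simply defers to \cite[Theorem 1]{carlsson14}, and your argument is the standard one given there --- the strong triangle inequality is exchanged for transitivity of the thresholded relations in each direction, with finiteness of $X$ and right-continuity handling the boundary conditions and the attainment of the minimum in \eqref{eq:dendrogram-um}. No gaps.
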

\begin{myproof}
See \cite[Theorem 1]{carlsson14}.
\end{myproof}

\noindent Theorem~\ref{theo:ultrametric-dendrogram} establishes a structure-preserving bijection between ultrametrics and dendrograms. Consequently, we may reinterpret hierarchical clustering methods as structure-preserving maps from the set of networks to the set $\ccalU$ of all possible ultrametrics, i.e., $\ccalH: \ccalN \to \ccalU$ (cf. Def.~\ref{def:hierarchical-clustering}).

With the objective of deriving a relation between nested coverings $K_X$ and a metric structure akin to that between dendrograms and ultrametrics, the notion of cut metrics is introduced.
Let $S$ be a subset of a given node set $X $ and define a cut semimetric $\delta_{S}(x,x')$ as
\begin{equation}
\delta_{S}(x,x') = \ind{S   \cap \{x,x'\}\neq\emptyset}\ind{S^{c} \cap \{x,x'\}\neq \emptyset}
		\label{eq:cut-semimetric}
\end{equation}
where $\ind{}$ is the indicator function. Essentially, the cut semimetric partitions the node set into two blocks $S$ and $S^c$ and assigns a dissimilarity of $1$ to nodes that are in different sides of this partition, and $0$ to nodes that fall in the same side; see Fig. \ref{fig:cut-semimetric}.

This cut semimetric can also be understood as being induced by a binary classifier which partitions the node set in two and assigns a unit dissimilarity to nodes belonging to different categories. If multiple cuts, or partitions, are considered, then a cut metric $c_{X }$ can be obtained as the combination of all the associated dissimilarities; see Fig. \ref{fig:cut-metric-figure}.
\begin{definition}
		\label{def:cut-metric}
A cut metric $c_{X }:X  \times X  \to \reals_{+}$ is a metric that can be written as a nonnegative linear combination of cut semimetrics
\begin{equation}
c_{X }(x,x')=\sum_{S \subseteq X } \lambda_{S}\delta_{S}(x,x')
		\label{eq:cut-metric}
\end{equation}
with $\lambda_{S} \ge 0$ and where the sum ranges over all possible subsets of $X $.
\end{definition}
\noindent Note that $\lambda_{S'}=0$ for some $S'$ amounts to the associated cut not being considered in the construction of $c_{X }$. It follows from Def.~\ref{def:cut-metric} that $c_X$ is a metric, thus, in particular it satisfies that for any $x,x' \in X $, $c_{X }(x,x')=c_{X }(x',x)\ge 0$ (symmetry and nonnegativity) and $c_{X }(x,x')=0$ if and only if $x=x'$ (identity). We denote by $\ccalV$ the set of cut metrics.
\begin{figure}[!t]
	\centering

\def \thisplotscale {0.5}
\def \unit {\thisplotscale cm}

\tikzstyle{dot} = [ellipse, 
                   inner sep=0pt, 
                   fill=pennblue, 
                   anchor = center,
                   minimum height = 0.3*\unit, 
                   minimum width  = 0.3*\unit]

{\footnotesize\begin{tikzpicture}[x = 1*\unit, y=1*\unit, font=\footnotesize]



	\path (0,0) node [dot] {}; \node [pennblue] at (-0.5,0.5) {$x_{1}$};
	\path (-1,-1) node [dot] {}; \node[pennblue] at (-1.5,-0.6) {$x_{2}$};
	\path (-2,1) node [dot] {}; \node [pennblue]at (-2.5,1.5) {$x_{3}$};
	\path (1,0.5) node [dot] {}; \node [pennblue]at (1.5,0) {$y_{1}$};
	\path (2,-1) node [dot] {}; \node[pennblue] at (2.5,-1.5) {$y_{2}$};
	\path (2.5,1) node [dot] {}; \node[pennblue] at (3,0.5) {$y_{3}$};
	\draw [thick,pennred] (-0.25,-2)--(1,2); \node [pennred] at (0.25,2) {$S_{1}$}; \node [pennred] at (1.75,1.5) {$S_{1}^{c}$}; \node [pennred] at (1.25,2.5) {$\lambda_{S_{1}}$};
	\draw [thick,pennyellow] (-5,0)--(5,-0.75); \node [pennyellow] at (4.5,-1.25) {$S_{2}$}; \node[pennyellow]at (4.5,0) {$S_{2}^{c}$}; \node [pennyellow] at (5.75,-0.75) {$\lambda_{S_{2}}$};
	\draw [thick, penngreen] (-5,-1) to[in=90,out=45] (3.5,-1.5); \node [penngreen] at (-5.5,-0.5) {$S_{3}$}; \node [penngreen] at (-3.8,-0.8) {$S_{3}^{c}$}; \node [penngreen] at (-5.25,-1.25) {$\lambda_{S_{3}}$};
		
\end{tikzpicture}}
	\caption{Construction of a cut metric $c_{X }$. Consider the node set $X =\{x_{1},x_{2},x_{3},y_{1},y_{2},y_{3}\}$. The first cut separates $\{x_{i}\}$ from $\{y_{i}\}$, $i=1,2,3$ and is assigned a dissimilarity $\lambda_{S_{1}}$. The cuts associated with $S_{i}$ for $i=2, 3$ separate $\{x_{i},y_{i}\}$ from the rest of $X$ and are assigned dissimilarity $\lambda_{S_{i}}$. Finally, the cut metric is obtained as $c_{X }=\lambda_{S_{1}}\delta_{S_{1}}+\lambda_{S_{2}}\delta_{S_{2}}+\lambda_{S_{3}}\delta_{S_{3}}$.}
	\label{fig:cut-metric-figure}
\end{figure}
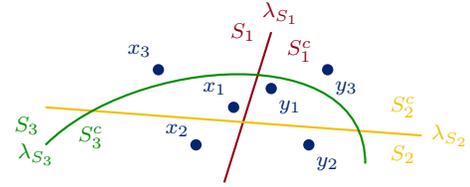

In the same way that ultrametrics induce dendrograms (Theorem \ref{theo:ultrametric-dendrogram}), cut metrics induce nested collections of coverings, as shown next.

\begin{theorem}
		\label{theo:cut-metric-nested-covering}
Let $X $ be a set of nodes and let $c_{X }$ be a cut metric defined on $X$. Let $K_{X }=\{K_{X }(\delta), \delta \ge 0\}$ be a collection of coverings. If, for each $\delta \ge 0$, the corresponding covering $K_{X }(\delta)$ is obtained from the tolerance relation given by
\begin{equation}
c_{X }(x,x') \le \delta \Longrightarrow x \tol_{\delta} x'
		\label{eq:cut-metric-tol}
\end{equation}
then $K_{X }$ is a nested collection of coverings as defined in Def. \ref{def:nested-coverings}.
\end{theorem}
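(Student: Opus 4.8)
The plan is to verify directly the two defining properties of a nested collection of coverings from Def.~\ref{def:nested-coverings}, after first confirming that each $K_X(\delta)$ is a legitimate covering. First I would observe that the prescription $x \tol_{\delta} x' \Leftrightarrow c_X(x,x') \le \delta$ defines a genuine tolerance relation for every $\delta \ge 0$: reflexivity holds because $c_X(x,x)=0 \le \delta$ (the identity property of the metric), and symmetry holds because $c_X(x,x')=c_X(x',x)$. Since $\tol_{\delta}$ is reflexive and symmetric, the associated graph carries a self-loop at every node, so every node lies in at least one maximal clique; hence covering by blocks yields a collection of subsets whose union is $X$, i.e.\ a valid covering (Def.~\ref{def:covering}).

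For property (ii), nestedness, I would argue that the threshold condition $c_X(x,x') \le \delta$ is monotone in $\delta$. Fixing $\delta \le \delta'$ and assuming $x \tol_{\delta} x'$, i.e.\ $c_X(x,x') \le \delta$, we obtain $c_X(x,x') \le \delta \le \delta'$, so $x \tol_{\delta'} x'$. This step is immediate and uses nothing beyond the definition of $\tol_{\delta}$.

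For property (i), the two boundary resolutions, I would treat the endpoints separately, passing each time from the relation $\tol_{\delta}$ to the covering it induces through maximal cliques. At $\delta=0$, the identity property of the cut metric gives $c_X(x,x')=0 \Leftrightarrow x=x'$, so $\tol_{0}$ reduces to equality; the graph then has no edge between distinct nodes and its maximal cliques are exactly the singletons, whence $K_X(0)=\{\{x\} : x \in X\}$. For the high-resolution end, finiteness of $X$ lets me set $\delta_{\max}:=\max_{x,x' \in X} c_X(x,x')$; for every $\delta \ge \delta_{\max}$ we have $c_X(x,x') \le \delta$ for all pairs, so $\tol_{\delta}$ relates every pair, the graph is complete, and its unique maximal clique is $X$ itself, giving $K_X(\delta)=\{X\}$.

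The argument is essentially bookkeeping, and no single step is a genuine obstacle. The only point demanding care is property (i), where one must reason about the covering induced by $\tol_{\delta}$ via maximal cliques rather than about the relation in isolation; the finiteness of $X$ is precisely what guarantees that $\delta_{\max}$ exists and is attained, closing the high-resolution case.
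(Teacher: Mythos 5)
Your proposal is correct and follows essentially the same route as the paper's proof: establish that the thresholded cut metric yields a valid tolerance relation via the identity and symmetry properties, obtain the boundary coverings at $\delta=0$ and $\delta \ge \delta_{\max}$ from the identity property and the boundedness of $c_X$ on the finite set $X$, and derive nestedness from the monotonicity of the threshold condition. Your added detail on how the maximal-clique (covering by blocks) construction produces the singleton covering and the full-set covering at the two extremes is a slightly more explicit rendering of a step the paper leaves implicit, but it is not a different argument.
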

\begin{myproof}
Observe that \eqref{eq:cut-metric-tol} defines a valid tolerance relation since $c_{X }(x,x)=0$ and $c_{X }(x,x')=c_{X }(x',x)$ imply reflexivity and symmetry of $\tol_{\delta}$, respectively. Using the relation $\tol_{\delta}$ in \eqref{eq:cut-metric-tol} to define a collection of coverings by blocks $K_{X }(\delta)$ \cite{bartol04}, it can be shown that $K_X$ satisfies properties (i) and (ii) in Def.~\ref{def:nested-coverings}. To be more specific, $K_{X }(0)=\{ \{x \}, x \in X \}$ follows from the fact that $c_{X }(x,x')=0$ if and only if $x=x'$ and, since there exists a $\delta_{\max}$ such that $c_{X}(x,x') \le \delta_{\max}$ for all $x,x' \in X$, it follows that $K_{X }(\delta)= \{X \}$ for all $\delta \geq \delta_{\max}$.
Finally, observe that $K_{X }$ is nested since, for any $\delta<\delta'$, $c_{X }(x,x') \le \delta < \delta'$ implies that $x \tol_{\delta} x' \Rightarrow x \tol_{\delta'} x'$.
\end{myproof}

\noindent Notice that \eqref{eq:cut-metric-tol} is similar to the connection between ultrametrics and equivalence relations in \eqref{eq:um-dendrogram} but in a unidirectional fashion; see Remark~\ref{R:difference_theorems}. For $0 < \delta < \delta_{\max}$ the resulting tolerance relations define coverings in such a way that nodes might belong to more than one cluster. If there exists at least one such node, it is said that there is \emph{overlap}. Consequently, we are interested in computing the number of overlapping nodes for each value of $\delta$.
\begin{definition}
Let $K_{X}=\{K_{X}(\delta), \delta \ge 0\}$ be a nested collection of coverings defined over node set $X$ and where each covering $K_{X}(\delta)$ is comprised of $m(\delta)$ blocks or covers, i.e. $K_{X}(\delta)=\{C_{1},\ldots,C_{m(\delta)}\}$. Then, the overlapping function $\fol:\reals_{+} \to \mathbb{Z}_{+}$ is given by
\begin{equation}
\fol(\delta)=\sum_{k=1}^{n} \mathbb{I}  \left\{ 
	x_{k} \in C_{i} \cap C_{j} , \ i \ne j, \ i,j=1,\ldots,m(\delta)\right\}.
		\label{eq:overlap-function}
\end{equation}
\end{definition}

\noindent The function $\fol(\delta)$ counts the nodes that are overlapping at each value of $\delta$, i.e. that belong to more than one block of the covering $K_{X }(\delta)$. From property (i) in Def.~\ref{def:nested-coverings}, it follows that $\fol(0)=0$ and $\fol(\delta)=0$ for all $\delta \ge \delta_{\max}$. Moreover, we use the overlapping function to define the notion of clusterability as follows.

\begin{definition}\label{def:clusterability}
Given a network $N =(X ,A_{X })$, a cut metric $c_X$, and an associated nested covering $K_X$, we say that $(X,c_{X })$ is \emph{clusterable} if there exists a $\delta$ such that $\fol(\delta)=0$, $K_{X }(\delta) \neq \{ \{x \}, x \in X \}$, and $K_{X }(\delta) \ne \{X\} $.
\end{definition}
\noindent In other words, a network is clusterable if for some resolution $\delta$ we obtain a valid partition (i.e., there is no overlap) that is different from those partitions obtained for extreme resolutions ($\delta = 0$ where all nodes are clustered separately, and $\delta \geq \delta_{\max}$ where all nodes are clustered together).
Even for non-clusterable networks, the overlapping function provides valuable information about the underlying grouping structure and may help to identify  nodes that cannot be fully classified into one subset. Also, the overlapping function is useful in determining meaningful values of the resolution $\delta$; see Sections~\ref{sec:syn_exs} and \ref{sec:apps} for additional details.

To sum up, in the same way that in hierarchical (non-overlapping) clustering, ultrametrics are used to define equivalence relations that determine dendrograms, in the proposed hierarchical overlapping clustering method, cut metrics are used to define tolerance relations that determine nested coverings; see Table~\ref{table:parallelism}.
\begin{remark}\label{R:difference_theorems} \normalfont
Recall that defining a partition on a node set is equivalent to defining an equivalence relation on the same set (cf. Defs.~\ref{def:partition} and \ref{def:equivalence-relation}). This means that specifying any of them uniquely leads to the other. The same occurs for the definitions of dendrograms and ultrametrics; see Theorem~\ref{theo:ultrametric-dendrogram}. However, in the context of hierarchical overlapping clustering these bijections no longer hold. 
While it is true that given a cut metric, we can obtain a collection of tolerance relations that leads to a nested collection of coverings (see Theorem~\ref{theo:cut-metric-nested-covering}), we cannot, in general, uniquely determine a tolerance relation from a given covering \cite[Theorem 2]{bartol04} or a cut metric from a given nested collection of coverings.
\end{remark}

\begin{remark}\label{R:different_metrics} \normalfont
In a strict sense, the proof of Theorem~\ref{theo:cut-metric-nested-covering} does not require $c_{X}$ to be a cut metric. In fact, any nonnegative, symmetric function satisfying the identity property would suffice to create a relationship between nodes as in \eqref{eq:cut-metric-tol}. {Constructing such a relation is known as the Rips complex \cite{hausmann95}}. However, since we ultimately want to group nodes that are more similar to each other than to the rest, we are looking for functions that actually reflect the relationship between nodes encoded in the dissimilarity function. The cut metric obtained as a convex combination of ultrametrics achieves this. {Moreover, due to its construction (cf. Figs.~\ref{fig:cut-semimetric} and \ref{fig:cut-metric-figure}), cut metrics are intrinsically related to the formation of coverings and partitions.}
Another recent alternative to use in \eqref{eq:cut-metric-tol} is projections on A-spaces as done in \cite{culbertson16}. It is also worth pointing out that the process of generating a Rips complex as in \eqref{eq:cut-metric-tol} involves the computation of the maximal cliques of a graph which is a computationally intensive operation; however, there exists algorithms to carry this operation more efficiently by means of parallelization \cite{bronkerbosch73}.
\end{remark}


\section{Hierarchical overlapping clustering algorithm}
	\label{sec:algorithm}

Given a network, our goal is to output a nested collection of coverings such that the grouping of nodes is done in accordance to the difference between them at varying levels of resolution. In virtue of Theorem~\ref{theo:cut-metric-nested-covering}, a cut metric can be used to induce a nested collection of coverings. Hence, the idea is to systematically construct a cut metric in such a way that it reflects the dissimilarity across nodes as given by the dissimilarity function $A_{X }$. The following proposition states a key relation between ultrametrics and cut metrics.
\begin{proposition}
		\label{prop:um-cut-metric}
A convex combination of ultrametrics yields a cut metric. 
\end{proposition}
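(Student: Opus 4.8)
\noindent The plan is to split the claim into two parts: that a convex combination of ultrametrics is genuinely a metric, and---the crux---that a \emph{single} ultrametric can itself be written as a nonnegative linear combination of cut semimetrics, i.e., that it is a cut metric in the sense of Def.~\ref{def:cut-metric}. Granting both, if $c_{X}=\sum_{k}\alpha_{k}u_{X}^{(k)}$ with $\alpha_{k}\ge 0$ and $\sum_{k}\alpha_{k}=1$, then substituting the cut-semimetric expansion of each $u_{X}^{(k)}$ exhibits $c_{X}$ as a nonnegative combination of cut semimetrics, after which it only remains to confirm the metric axioms.

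\noindent The metric property is the easy part. Each $u_{X}^{(k)}$ is in particular a metric, and nonnegativity, symmetry, and the triangle inequality are all preserved under nonnegative combinations; the identity property $c_{X}(x,x')=0 \Leftrightarrow x=x'$ holds because $\sum_{k}\alpha_{k}u_{X}^{(k)}(x,x')=0$ with nonnegative summands forces $u_{X}^{(k)}(x,x')=0$ for every $k$ with $\alpha_{k}>0$, and at least one such $k$ exists since the weights sum to one.

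\noindent The core step is the expansion of a fixed ultrametric $u_{X}$. Since $X$ is finite, $u_{X}$ takes finitely many distinct values $0=\delta_{0}<\delta_{1}<\cdots<\delta_{L}$, and for each threshold $t$ the sublevel relation $u_{X}(\cdot,\cdot)\le t$ is an equivalence relation by the strong triangle inequality [Property (iii) of Def.~\ref{def:ultrametric}], whose classes form a partition $Q_{i}$ of $X$ for $t\in[\delta_{i-1},\delta_{i})$. A layer-cake identity then gives
\begin{equation}
u_{X}(x,x')=\sum_{i=1}^{L}(\delta_{i}-\delta_{i-1})\,\ind{u_{X}(x,x')\ge\delta_{i}},
\end{equation}
where $\ind{u_{X}(x,x')\ge\delta_{i}}$ is exactly the \emph{partition metric} of $Q_{i}$, equal to $1$ when $x,x'$ lie in different blocks of $Q_{i}$ and $0$ otherwise. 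The elementary observation that unlocks the proof is that any partition metric is a nonnegative combination of cut semimetrics: summing the block cuts $\delta_{B}$ over all blocks $B\in Q_{i}$ separates a pair lying in distinct blocks exactly twice and a pair lying in the same block never, so the partition metric of $Q_{i}$ equals $\tfrac{1}{2}\sum_{B\in Q_{i}}\delta_{B}$. Chaining the two expansions writes $u_{X}$ as a nonnegative combination of cut semimetrics with coefficients $(\delta_{i}-\delta_{i-1})/2$, and collecting the contributions over all levels $i$, all ultrametrics $k$, and all blocks $B$ into a single coefficient $\lambda_{S}\ge 0$ per subset $S\subseteq X$ delivers the representation required by Def.~\ref{def:cut-metric}.

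\noindent I expect the main obstacle to be precisely this last observation---recognizing that the indicator of ``lying in different blocks'' of a multi-block partition, although not itself a single cut semimetric, decomposes as the half-sum of the individual block cuts (one should double-check the degenerate two-block case, where the two cuts coincide and the factor $\tfrac{1}{2}$ exactly compensates the double counting). Everything else---the layer-cake expansion, the preservation of the metric axioms, and the final regrouping of coefficients---is routine bookkeeping once this decomposition is in place.
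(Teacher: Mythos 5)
Your proof is correct, but it takes a genuinely different route from the paper's. The paper argues by citation through the theory of $\ell_{1}$-embeddable metrics: an ultrametric satisfies the four-point condition and is therefore a tree metric; tree metrics and their convex combinations are $\ell_{1}$-embeddable; and a finite metric is $\ell_{1}$-embeddable if and only if it is a cut metric (all three facts quoted from Deza and Laurent). You instead give a self-contained constructive argument: the layer-cake expansion of a single ultrametric over its finitely many level sets, the observation that each sublevel relation is an equivalence relation whose ``different blocks'' indicator is the half-sum $\tfrac{1}{2}\sum_{B\in Q_{i}}\delta_{B}$ of the block cuts, and the routine closure of nonnegative cut combinations and of the metric axioms under convex combination. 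I checked the half-sum identity (a cross-block pair is separated by exactly the two cuts of its endpoints' blocks, a within-block pair by none, and the two-block case degenerates correctly since $\delta_{B}=\delta_{B^{c}}$) and the identity axiom for the convex combination; both are sound. What your approach buys is an explicit set of cut weights $\lambda_{S}$ --- the cuts are precisely the blocks of the level partitions of the contributing dendrograms, with weights given by the merge-height increments --- and it proves directly the fact, stated in the paper only as an immediate consequence of the proposition, that every ultrametric is itself a cut metric. What the paper's route buys is brevity and extra generality (the same citations cover arbitrary tree metrics, not just ultrametrics) at the price of leaning on nontrivial external results.
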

\begin{myproof}
Given $(X , d_{X })$ where $d_{X }$ is a metric, then $d_{X }$ is in particular a tree metric -- there exists a tree graph in which it is possible to embed the distance on its edges \cite[p. 147]{deza96} -- if and only if it satisfies the four-point condition
\begin{align}
d_{X }(x,x')+d_{X }(x'',x''') \le \max \big\{ 
	& d_{X }(x,x'')+d_{X }(x',x'''),
		\nonumber \\
	& d_{X }(x,x''')+d_{X }(x',x'') \big\}
		\label{eq:4PC}
\end{align}
for all $x,x',x'',x''' \in X $. {Ultrametrics also satisfy the four-point condition so that they are particular cases of tree metrics \cite[p. 311]{deza96}. It can be shown that tree metrics, as well as convex combinations of tree metrics, are $\ell_{1}$-embeddable \cite[Proposition 11.1.4, Fact 11.1.15]{deza96}. Finally, since a metric is $\ell_{1}$-embeddable if and only if it is a cut metric \cite[Proposition 4.4.2]{deza96}, a convex combination of ultrametrics yields a cut metric.}
\end{myproof}

\noindent From Proposition~\ref{prop:um-cut-metric} it immediately follows that ultrametrics are particular cases of cut metrics. Intuitively, this feature is inherited from the facts that equivalence relations are particular cases of tolerance relations and that partitions are particular cases of coverings; see Table~\ref{table:parallelism}. Also, Proposition~\ref{prop:um-cut-metric} plays a key role in the generation of cut metrics as it gives a systematic way to obtain cut metrics from ultrametrics, which are readily available from the application of hierarchical (non-overlapping) clustering methods $\ccalH$.

\begin{algorithm}[t]
 	\caption{Hierarchical overlapping clustering algorithm~$\ccalO$.}
	\label{a:overlapping-clustering}
	
	\begin{algorithmic}[1]
 \Statex \textbf{Input:} $J$: no. of perturbations, $N$: network,
 \Statex $\qquad \ccalH$: hierarchical clustering method, perturbation$(\cdot)$
 \Statex \textbf{Output:} $K_{X}$: nested collection of coverings.
 \Statex
 \Procedure{$\ccalO$}{$J,N,\ccalH,\textrm{perturbation}$}
 \For{i\,=\,1:$J$}
   \State $\tilde{N }$\, =\, perturbation($N $);
   \State $\tilde{u}_{X}^{(i)}$\,=\,$\ccalH(\tilde{N })$;
 \EndFor
\State $c_{X }$\,=\,avg($\tilde{u}_{X}^{(1)},\ldots, \tilde{u}_{X}^{(J)}$);
\State $K_{X }$\,=\,obtain\_coverings($c_{X }$);
 \EndProcedure
 	\end{algorithmic}
\end{algorithm}

The next step is to obtain multiple ultrametrics related to the dissimilarity function $A_{X }$ so that the resulting combination of these yields a cut metric that bears a relation with $A_{X }$. In order to achieve this, the concept of dithering \cite{schuchman64} proves to be useful, where we intentionally apply random noise to the dissimilarity function $A_{X }$ to obtain $\tilde{A}_{X }$. Denoting by $\tilde{N }=(X ,\tilde{A}_{X })$ the resulting perturbed network, and by $\tilde{u}_{X}= \ccalH(\tilde{N })$ the ultrametric obtained by hierarchically clustering $\tilde{N }$, the following corollary of Proposition~\ref{prop:um-cut-metric} {results}.

\begin{corollary}
The function $c_{X }$ obtained as
\begin{equation}
c_{X }=\mbE \left[ \tilde{u}_{X} \right]
\end{equation}
is a cut metric, {where the expectation is taken with respect to the probability distribution of the randomization introduced by dithering.}
\end{corollary}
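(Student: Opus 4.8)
The plan is to read the expectation $\mbE[\tilde{u}_{X}]$ as a (possibly continuous) convex combination of ultrametrics and then lean on Proposition~\ref{prop:um-cut-metric}. Since $\ccalH$ sends every network to an ultrametric, each realization $\tilde{u}_{X}=\ccalH(\tilde{N})$ is an ultrametric and therefore, by Proposition~\ref{prop:um-cut-metric}, a cut metric. Identifying any symmetric function on the finite set $X$ with its vector of $\binom{n}{2}$ off-diagonal entries (where $n=|X|$), I would view each $\tilde{u}_{X}$ as a point of the \emph{cut cone} $\ccalC$, namely the set of all nonnegative linear combinations of the cut semimetrics $\delta_{S}$ of Definition~\ref{def:cut-metric} (so that a cut metric is exactly an element of $\ccalC$ that is also a metric). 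The task then reduces to showing that membership in $\ccalC$ is preserved under averaging over the dithering distribution, and that the resulting average is genuinely a metric.

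First I would exploit the finiteness of $X$: there are only finitely many subsets $S\subseteq X$, hence only finitely many generators $\delta_{S}$, so $\ccalC$ is a finitely generated cone in $\reals^{\binom{n}{2}}$ and is therefore polyhedral, convex, and closed. Writing $\ccalC$ as a finite intersection of supporting half-spaces $\{y:\langle a_{k},y\rangle\le 0\}$, each realization obeys $\langle a_{k},\tilde{u}_{X}\rangle\le 0$, whence $\langle a_{k},\mbE[\tilde{u}_{X}]\rangle=\mbE[\langle a_{k},\tilde{u}_{X}\rangle]\le 0$ for every $k$; thus $\mbE[\tilde{u}_{X}]$ lies in every defining half-space and so belongs to $\ccalC$. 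Integrability poses no difficulty because the dithered dissimilarities, and hence the ultrametrics they produce, take values in a bounded range. To upgrade this to membership in $\ccalV$, I would check that $c_{X}=\mbE[\tilde{u}_{X}]$ is a metric: symmetry and nonnegativity pass through the expectation pointwise, while the identity property holds because every realization satisfies $\tilde{u}_{X}(x,x')>0$ for $x\ne x'$, so its expectation is strictly positive off the diagonal and equals $0$ on it.

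The main obstacle is exactly the gap between the finite convex combinations that Proposition~\ref{prop:um-cut-metric} literally supplies and a general expectation. If the randomization induced by dithering is discrete with finite support, the result is immediate, since $\mbE[\tilde{u}_{X}]=\sum_{j}p_{j}\,\tilde{u}_{X}^{(j)}$ is then an honest finite convex combination of ultrametrics. In the continuous case the average becomes an integral rather than a sum, and the crux is the closedness of the cut cone, which is what allows the cut-metric property to survive the limiting procedure. It is precisely the finiteness of the node set $X$ -- forcing the cut cone to be finitely generated and hence closed -- that makes this last step go through.
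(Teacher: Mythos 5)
Your proof is correct, and it is in fact more careful than what the paper itself provides: the paper states this corollary without proof, presenting it as an immediate consequence of Proposition~\ref{prop:um-cut-metric}, and in practice only ever instantiates the expectation as the finite empirical average in \eqref{eq:cut-metric-avg}, which is literally a finite convex combination of ultrametrics and hence covered by the proposition directly. You correctly identify the gap between finite convex combinations and a general (possibly continuous) expectation, and you supply the missing step: since $X$ is finite, the cut cone is finitely generated, hence polyhedral and closed, so membership is characterized by finitely many homogeneous linear inequalities, each of which is preserved under expectation by linearity and monotonicity. This closedness argument is exactly what is needed for the corollary to hold as stated rather than only for finite mixtures. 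Two minor remarks. First, your claim that the dithered ultrametrics take values in a bounded range is not literally true for Gaussian dithering (the paper's stated choice), but integrability survives because the ultrametric entries are dominated by the maximum perturbed dissimilarity, which has finite expectation. Second, you verify symmetry, nonnegativity, and the identity property but not the triangle inequality; this is harmless, since the triangle inequality holds for each realization and passes through the expectation (or, alternatively, is automatic for any element of the cone generated by cut semimetrics), so the identity property is indeed the only metric axiom that genuinely needs your strict-positivity argument.
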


In practice, several realizations of random perturbations are implemented, generating a whole family of closely related networks. Each one of these networks yields a different ultrametric when a pre-specified hierarchical clustering method $\ccalH$ is applied to them. When all these ultrametrics are averaged, a cut metric is obtained (cf. Proposition~\ref{prop:um-cut-metric}). Formally, given a network $N = (X, A_X)$ denote by $\tilde{N}_{i}=(X ,\tdA_{X }^{(i)})$ for $i=1,\ldots,J$ the $J$ networks resulting from independently perturbing the dissimilarity function $J$ times. Moreover, denoting by $\tilde{u}_{X}^{(i)}$ the ultrametric output of hierarchically clustering $\tilde{N}_{i}$, we generate the cut metric $c_X$ given by (cf.~Proposition~\ref{prop:um-cut-metric})
\begin{equation}
c_{X }(x,x') = \frac{1}{J} \sum_{i=1}^{J} \tilde{u}_{X}^{(i)}(x,x').
		\label{eq:cut-metric-avg}
\end{equation}
Finally, we use $c_X$ to obtain an associated nested collection of coverings $K_X$ (cf. Theorem~\ref{theo:cut-metric-nested-covering}). The algorithm is summarized in Algorithm~\ref{a:overlapping-clustering}.

\begin{remark}\normalfont \label{rmk:linear-convex}
Proposition~\ref{prop:um-cut-metric} also holds for non-negative linear combinations of ultrametrics, but only a convex combination guarantees that the distance scale described by the ultrametrics is preserved in the resulting cut metric. Therefore, if we want to interpret the scaling parameter $\delta$ in the same units as the ultrametric function, then we should proceed with a convex combination. This is useful for directly comparing how clusters are formed in hierarchical methods, both overlapping and non-overlapping.
\end{remark}

\begin{remark}\normalfont
Observe that cut metrics conform a convex cone and hence, any non-negative combination of cut metrics is still a cut metric. This is not true in the case of ultrametrics: a non-negative linear combination of ultrametrics is not an ultrametric \cite[Section VII-B]{carlsson14}. In fact, it is cut metrics that form the conic hull of ultrametrics.
\end{remark}

\begin{remark}\normalfont
The choice of the hierarchical non-overlapping clustering method $\ccalH$ depends on the specific problem at hand and as such is an input to the proposed algorithm. It is observed in the experiments (Sections~\ref{sec:syn_exs} and \ref{sec:apps}) that for clusterable datasets, Algorithm~\ref{a:overlapping-clustering} outputs the same coverings as those obtained when using $\ccalH$ directly. Furthermore, the number $J$ of perturbations is not a fundamental parameter in the sense that it is only used to create a cut metric. For the experiments considered in this paper, we observed that for values of $J \ge 10$ the obtained output covers do not depend on $J$.
\end{remark}


\section{Quasi-Cut Metrics} \label{sec:quasi}

The output structure of the hierarchical overlapping clustering method $\ccalO$ is a nested collection of coverings $\ccalK$ (cf. Def~\ref{def:overlapping-clustering}). This is a symmetric structure in the sense that if node $x$ shares a block in a covering with node $x'$ then $x'$ must share a block with $x$. This feature is inherited from the fact that cut metrics are symmetric functions. When considering asymmetric (directed) networks, the symmetry in the output necessarily entails a loss of information, which might be an undesirable feature of a grouping algorithm in some applications \cite{slater84,pentney05,zhao05}.

\begin{figure*}[t]
	\captionsetup[subfigure]{justification=centering}
	\centering
	\begin{subfigure}{0.66\columnwidth}
		\centering

\def \thisplotscale {0.42}
\def \unit {\thisplotscale cm}

\tikzstyle{blue vertex} = [ellipse, 
                   inner sep=0pt, 
                   fill=pennblue!30,
                   draw=black,
                   anchor = center,
                   minimum height = 1.75*\unit, 
                   minimum width  = 1.75*\unit]

{\small
\begin{tikzpicture}[-stealth, shorten >=2, scale = \thisplotscale]

    \node [blue vertex] at (1,4) (1) {$x_2$};
    \node [blue vertex] at (5,-1) (2) {$x_3$};    
    \node [blue vertex] at (-3,-1) (3) {$x_1$};

    \path (1) edge [bend left=20, above right] node {$3$} (2);	
    \path (2) edge [bend left=20, below] node {$2$} (3);
    \path (3) edge [bend left=20, above left] node {$1$} (1);    	

    \path (2) edge [bend left=20,left, pos=0.6] node {$1$} (1);	
    \path (3) edge [bend left=20, above] node {$3$} (2);
    \path (1) edge [bend left=20, right, , pos=0.4]  node {$2$} (3);    	
    
    \node [below] at (-3,5) {$\breve{u}_{X}^{(1)}$};
  
\end{tikzpicture}
}
		\caption{Quasi-ultrametric $\qu_{X}^{(1)}$}
		\label{quasi-u1}
	\end{subfigure}
	\hspace{-0.7cm}
	\begin{subfigure}{0.66\columnwidth}
		\centering

\def \thisplotscale {0.42}
\def \unit {\thisplotscale cm}

\tikzstyle{blue vertex} = [ellipse, 
                   inner sep=0pt, 
                   fill=pennblue!30,
                   draw=black,
                   anchor = center,
                   minimum height = 1.75*\unit, 
                   minimum width  = 1.75*\unit]

{\small
\begin{tikzpicture}[-stealth, shorten >=2, scale = \thisplotscale]

    \node [blue vertex] at (1,4) (1) {$x_2$};
    \node [blue vertex] at (5,-1) (2) {$x_3$};    
    \node [blue vertex] at (-3,-1) (3) {$x_1$};

    \path (1) edge [bend left=20, above right] node {$3$} (2);	
    \path (2) edge [bend left=20, below] node {$2$} (3);
    \path (3) edge [bend left=20, above left] node {$1$} (1);    	

    \path (2) edge [bend left=20,left, pos=0.6] node {$1$} (1);	
    \path (3) edge [bend left=20, above] node {$2$} (2);
    \path (1) edge [bend left=20, right, , pos=0.4]  node {$3$} (3);    	
    
    \node [below] at (-3,5) {$\breve{u}_{X}^{(2)}$};
  
\end{tikzpicture}
}
		\caption{Quasi-ultrametric $\qu_{X}^{(2)}$}
		\label{quasi-u2}
	\end{subfigure}
	\hspace{-0.7cm}
	\begin{subfigure}{0.66\columnwidth}
		\centering

\def \thisplotscale {0.42}
\def \unit {\thisplotscale cm}

\tikzstyle{blue vertex} = [ellipse, 
                   inner sep=0pt, 
		   fill=pennblue!30,
                   draw=black,
                   anchor = center,
                   minimum height = 1.75*\unit, 
                   minimum width  = 1.75*\unit]

{\small
\begin{tikzpicture}[-stealth, shorten >=2, scale = \thisplotscale]

    \node [blue vertex] at (1,4) (1) {$x_2$};
    \node [blue vertex] at (5,-1) (2) {$x_3$};    
    \node [blue vertex] at (-3,-1) (3) {$x_1$};

    \path (1) edge [bend left=20, above right] node {$3$} (2);	
    \path (2) edge [bend left=20, below] node {$2$} (3);
    \path (3) edge [bend left=20, above left] node {$1$} (1);    	

    \path (2) edge [bend left=20,left, pos=0.6] node {$1$} (1);	
    \path (3) edge [bend left=20, above] node {$2.5$} (2);
    \path (1) edge [bend left=20, right, , pos=0.4]  node {$2.5$} (3);    	
    
    \node [below] at (-3,5) {$\breve{c}_{X}$};
  
\end{tikzpicture}
}
		\caption{Quasi-cut metric $\qcm_{X}=0.5 (\qu_{X}^{(1)}+\qu_{X}^{(2)})$}
		\label{quasi-c}
	\end{subfigure}
	
	\vspace{0.5cm}
	
	\begin{subfigure}{1.99\columnwidth}
		\centering

\def \thisplotscale {4.2}
\def \unit {\thisplotscale cm}

\def \yheight{0.75}
\def \xdisplaced{0}

\tikzstyle{dot} = 
	[ellipse,
	 inner sep = 0pt,
	 fill = pennblue,
	 anchor = center,
	 minimum height = 0.05*\unit,
	 minimum width  = 0.05*\unit]

\tikzstyle{cover} = 
	[ellipse,
	 inner sep = 0pt,
	 anchor = center]

\def\xpos{0.1}
\def\ypos{0.125}
\def\xlbl{0.07}
\def\ylbl{0.08}

{\small
\begin{tikzpicture}[scale = \thisplotscale]
	
	\node at (-0.25,0) (xbgn) {};
	\node at (0,0) (x0) {};
	\node at (1,0) (x1) {};
	\node at (2,0) (x2) {};
	\node at (2.5,0) (x2p5) {};
	\node at (3,0) (x3) {};
	\node at (3.75,0) (xend) {};

	\path (xbgn.center) edge[thick] (x0.center);
	\path (x0.center) edge[thick] node[above,midway] (x01) {$0 \le \delta < 1$} (x1.center);
	\path (x1.center) edge[thick] node[above,midway] (x12) {$1 \le \delta < 2$} (x2.center); 
	\path (x2.center) edge[thick] node[above,midway] (x22p5) {$2 \le \delta < 2.5$} (x2p5.center);
	\path (x2p5.center) edge[thick] node[above,midway] (x2p53) {$2.5 \le \delta < 3$} (x3.center);
	\path (x3.center) edge[thick,-stealth] node[above,midway] (x3end) {$3 \le \delta$} node[at end,below right] {$\delta$} (xend.center);
	
	\path (x0.center) edge[draw=black!30] node[at start, below] {$0$} ++ (0,\yheight);
	\path (x1.center) edge[draw=black!30] node[at start, below] {$1$} ++ (0,\yheight);
	\path (x2.center) edge[draw=black!30] node[at start, below] {$2$} ++ (0,\yheight);
	\path (x2p5.center) edge[draw=black!30] node[at start, below] {$2.5$} ++ (0,\yheight);
	\path (x3.center) edge[draw=black!30] node [at start, below] {$3$} ++ (0,\yheight);
	\path (-0.2,\yheight) node {$\breve{K}_{X}$};
	
	\path (x01.center) ++ (-\xdisplaced,\yheight/2) node {\input{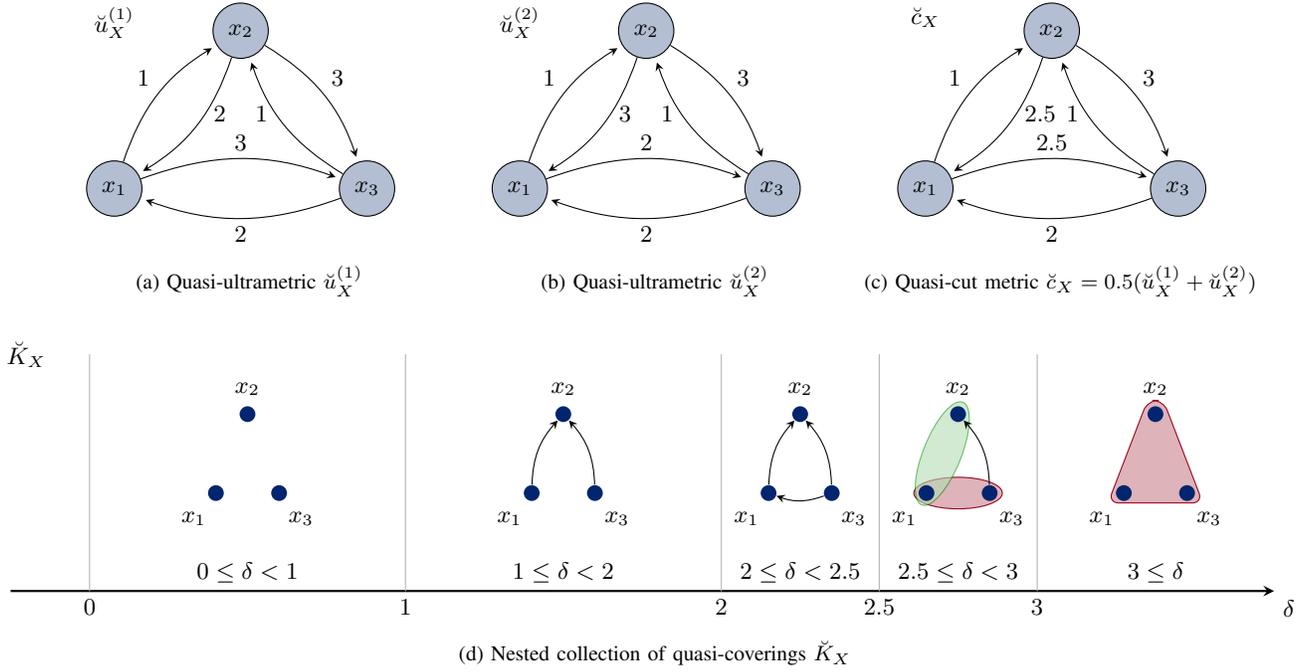}};
	\path (x12.center) ++ (-\xdisplaced,\yheight/2) node {

\begin{tikzpicture}[scale = \thisplotscale]
	
	\path (-\xpos,-\ypos) node[dot] (x11) {} ++ (-\xlbl,-\ylbl) node {$x_{1}$};
	\path (0,\ypos) node[dot] (x12) {} ++ (0,\ylbl) node {$x_{2}$};
	\path (\xpos,-\ypos) node[dot] (x13) {} ++ (\xlbl,-\ylbl) node {$x_{3}$};
	
	\path (x11) edge[bend left=20,-stealth] (x12);
	\path (x13) edge[bend right=20,-stealth] (x12);
	
\end{tikzpicture}};
	\path (x22p5.center) ++ (-\xdisplaced,\yheight/2) node {

\begin{tikzpicture}[scale = \thisplotscale]
	
	\path (-\xpos,-\ypos) node[dot] (x21) {} ++ (-\xlbl,-\ylbl) node {$x_{1}$};
	\path (0,\ypos) node[dot] (x22) {} ++ (0,\ylbl) node {$x_{2}$};
	\path (\xpos,-\ypos) node[dot] (x23) {} ++ (\xlbl,-\ylbl) node {$x_{3}$};
	
	\path (x21) edge[bend left=20,-stealth] (x22);
	\path (x23) edge[bend right=20,-stealth] (x22);
	\path (x23) edge[bend left=20,-stealth] (x21);
	
\end{tikzpicture}};
	\path (x2p53.center) ++ (-\xdisplaced,\yheight/2) node {

\begin{tikzpicture}[scale = \thisplotscale]
	
	\node[cover, 
		minimum height = 0.1*\unit,
		minimum width = 0.28*\unit,
		fill=pennred!30,
		draw=pennred] 
	at (0,-\ypos) {};
	\node[cover,
		minimum height = 0.12*\unit,
		minimum width = 0.35*\unit,
		rotate=atan(2*\ypos/\xpos),
		opacity=0.6,
		fill=penngreen!30,
		draw=penngreen] 
	at (-0.5*\xpos,0) {};
	
	\path (-\xpos,-\ypos) node[dot] (x2p51) {} ++ (-\xlbl,-\ylbl) node {$x_{1}$};
	\path (0,\ypos) node[dot] (x2p52) {} ++ (0,\ylbl) node {$x_{2}$};
	\path (\xpos,-\ypos) node[dot] (x2p53) {} ++ (\xlbl,-\ylbl) node {$x_{3}$};
	
	\path (x2p53) edge[bend right=20,-stealth] (x2p52);
	
\end{tikzpicture}};
	\path (x3end.center) ++ (-\xdisplaced,\yheight/2) node {

\begin{tikzpicture}[scale = \thisplotscale]
	
	\draw[rounded corners,fill=pennred!30,draw=pennred] (-1.5*\xpos,-1.25*\ypos)--(-0.25*\xpos,1.35*\ypos)--(0.25*\xpos,1.35*\ypos)--(1.5*\xpos,-1.25*\ypos)--cycle;
	
	\path (-\xpos,-\ypos) node[dot] (x31) {} ++ (-\xlbl,-\ylbl) node {$x_{1}$};
	\path (0,\ypos) node[dot] (x32) {} ++ (0,\ylbl) node {$x_{2}$};
	\path (\xpos,-\ypos) node[dot] (x33) {} ++ (\xlbl,-\ylbl) node {$x_{3}$};
	
\end{tikzpicture}};
	

\end{tikzpicture}
}
		\caption{Nested collection of quasi-coverings $\qK_{X}$}
		\label{quasi-nested}
	\end{subfigure}
	\caption{Illustrative example of a hierarchical overlapping quasi-clustering method $\breve{\ccalO}$. \subref{quasi-u1}-\subref{quasi-u2} Given quasi-ultrametrics. \subref{quasi-c} Quasi-cut metric obtained by averaging the given quasi-ultrametrics. \subref{quasi-nested} Nested collection of quasi-coverings obtained from $\qcm_{X}$. For $\delta=0$ all nodes belong to separate covers and the edge set is empty. At $\delta=1$ nodes $x_{1}$ and $x_{3}$ start influencing $x_{2}$. At $\delta=2$ node $x_{3}$ also exercises influence over $x_{1}$. The first two covers form at $\delta=2.5$ and they present an overlap, since $x_{1}$ belongs to both covers. Additionally, node $x_{3}$ still exercises influence over $x_{2}$. Finally, for $\delta=3$ all nodes belong to the same single cover and the edge set is empty.}
	\label{fig:quasi_example}
\end{figure*}

Our objective is to provide a framework for developing a hierarchical overlapping quasi-clustering of the node set $X$. A quasi-cluster is a structure that consists of both a grouping of the node set and a collection of edges indicating the influence between different groups of nodes \cite{carlsson14icml}. In extending this notion, we start by defining quasi-coverings, which are the basic units of the proposed method, and we also determine the restrictions required for a collection of quasi-coverings to be nested. 

\begin{definition} \label{def:quasi-covering}
The pair $\qQ_{X}=(Q_{X},F_{X})$ is a quasi-covering if $Q_{X}=\{C_{1},\ldots,C_{m}\}$ is a covering of $X$ and $F_{X} \subseteq Q_{X} \times Q_{X}$ is a set (possibly empty) of ordered pairs representing directed edges between the elements of $Q_{X}$.
\end{definition}

A quasi-covering is a directed graph whose node set is given by the blocks of a covering. The edges represent the asymmetric influence that some blocks can exercise over others. 
The conditions needed for a collection of quasi-coverings to be nested are described next.

\begin{definition}\label{def:quasi-nested}
The collection $\qK_{X}=\{\qK_{X}(\delta), \delta \ge 0\}$ is a nested collection of quasi-coverings if $\qK_{X}(\delta)=(K_{X}(\delta),F_{X}(\delta))$ is a quasi-covering for each resolution parameter $\delta$, $K_{X}=\{K_{X}(\delta),\delta \ge 0\}$ is a nested collection of coverings (cf. Def.~\ref{def:nested-coverings}), and $F_{X}=\{F_{X}(\delta),\delta \ge 0\}$ is a collection of edge sets satisfying
\begin{enumerate}[(i)]
\item $F_X(0)=\emptyset$ and there exists a $\delta_{\max}$ such that $F_{X}(\delta)=\emptyset$ for all $\delta \ge \delta_{\max}$;
\item For arbitrary resolutions $\delta < \delta'$ and covers $C_{i}, C_{j} \in K_{X}(\delta)$ such that $(C_{i}, C_{j}) \in F_{X}(\delta)$, there exist covers $C_{i}',C_{j}' \in K_{X}(\delta')$ such that $C_{i} \subseteq C_{i}'$, $C_{j} \subseteq C_{j}'$ and either $(C_{i}', C_{j}') \in F_{X}(\delta')$ or $|C_{i} \cap C_{j}| < | C_{i}' \cap C_{j}'|$.
\end{enumerate}
Denote by $\breve{\ccalK}$ the space of all nested collections of quasi-coverings.
\end{definition}

A nested collection of quasi-coverings reflects the different levels of similarity present in the network. More specifically, as $\delta$ grows, covers would typically start exercising influence one over the other, then some nodes start to overlap while the remaining ones still exercise influence, until eventually the covers merge into one; see Fig.~\ref{quasi-nested} for an example. Requirement (i) in Def.~\ref{def:quasi-nested} enforces border conditions on the edge set $F_X$ that are consistent with those required for $K_X$ in Def.~\ref{def:nested-coverings}. More precisely, at resolution $\delta = 0$, each node must belong to its own cluster \emph{and} there should be no influence relations among these. By contrast, for a large enough resolution $\delta_{\max}$ every node must belong to a single cluster, hence, no influence relation is possible. Requirement (ii) enforces nestedness in the influence structure $F_X$. Specifically, if an edge exists at resolution $\delta$ between blocks $C_i$ and $C_j$ then this edge must persist for larger resolutions $\delta'$ or the blocks must become more similar, encoded by having a larger intersection.

We consider a nested collection of quasi-coverings to be the desired output of hierarchical overlapping quasi-clustering methods. Formally, define such methods $\breve{\ccalO}$ as a structure-preserving map from the space of networks to the space of nested collections of quasi-coverings, $\breve{\ccalO}:\ccalN \to \breve{\ccalK}$.
In analogy to the development of Sections~\ref{sec:cut-metrics} and~\ref{sec:algorithm}, in which we used hierarchical (non-overlapping) clustering methods to obtain ultrametrics that are then combined to obtain cut metrics, in what follows we propose to use existing hierarchical (non-overlapping) quasi-clustering methods to obtain quasi-ultrametrics that can be combined into quasi-cut metrics (Def.~\ref{def:quasi-cm}). Quasi-ultrametrics are defined as non-negative functions $\qu_{X}:X \times X \to \reals_{+}$ such that $\qu_{X}(x,x')=0$ if and only if $x=x'$ and $\qu_{x}(x,x') \le \max\{\qu_{X}(x,x''),\qu_{X}(x'',x')\}$ for all $x,x',x'' \in X$ but not necessarily symmetric $\qu_{X}(x,x') \ne \qu_{X}(x',x)$; see \cite{carlsson14icml}. Denote by $\qU$ the set of all possible quasi-ultrametrics.

\begin{definition}\label{def:quasi-cm}
Denote as $\qV$ the conic hull of the set of quasi-ultrametrics, $\qV=\textrm{cone}\{\qU\}$. Then, given a node set $X$, an element $\qcm_{X} \in \qV$ is a quasi-cut metric on the node set $X$.
\end{definition}

Notice that, by definition, a convex combination of quasi-ultrametrics outputs a quasi-cut metric. Moreover, nested collections of quasi-coverings can be constructed from quasi-cut metrics as shown in the following theorem.

\begin{theorem} \label{thm:quasi-cm-nested}
Given a node set $X$ and a quasi-cut metric $\qcm_{X} \in \qV$, we can obtain a nested collection of quasi-coverings $\qK_{X}=\{\qK_{X}(\delta),\delta \ge 0\}$, with $\qK_{X}(\delta)=(K_{X}(\delta),F_{X}(\delta))$ a quasi-covering, as follows. The covering $K_{X}(\delta)$ is obtained from the tolerance relation $\tol_{\delta}$ determined by
\begin{equation} \label{eq:construct-tol}
	\max\{\qcm_{X}(x,x'),\qcm_{X}(x',x)\} \le \delta \Rightarrow x \tol_{\delta} x'.
\end{equation}
The edge set $F_{X}(\delta)$ is constructed such that for $C_{i},C_{j} \in K_{X}(\delta)$, we have that $(C_i, C_j) \in F_{X}(\delta)$ if
\begin{equation} \label{eq:construct-edgeset}
	\min_{\substack{x \in C_{i}\backslash C_{i} \cap C_{j}\\x' \in C_{j}\backslash C_{i} \cap C_{j}}} \qcm_{X}(x,x') \le \delta.
\end{equation}
\end{theorem}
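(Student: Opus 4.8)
The plan is to verify, component by component, that the pair $\qK_{X}(\delta)=(K_{X}(\delta),F_{X}(\delta))$ meets every requirement of Def.~\ref{def:quasi-nested}, treating the covering part $K_{X}$ and the influence part $F_{X}$ separately. For the covering part, I would first observe that the symmetrized function $\bbarc_{X}(x,x'):=\max\{\qcm_{X}(x,x'),\qcm_{X}(x',x)\}$ is non-negative, symmetric, and satisfies the identity property $\bbarc_{X}(x,x')=0\Leftrightarrow x=x'$; the latter is inherited from the quasi-ultrametrics in the conic representation of $\qcm_{X}$, since each vanishes only on the diagonal. Because \eqref{eq:construct-tol} is exactly the relation \eqref{eq:cut-metric-tol} applied to $\bbarc_{X}$, and because -- as noted in Remark~\ref{R:different_metrics} -- the proof of Theorem~\ref{theo:cut-metric-nested-covering} uses only non-negativity, symmetry, and the identity property, I can invoke that theorem directly to conclude that $K_{X}$ is a nested collection of coverings. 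This single step yields that each $K_{X}(\delta)$ is a covering, that $K_{X}(0)$ is the singleton covering, that $K_{X}(\delta)=\{X\}$ for $\delta\ge\delta_{\max}$, and that $K_{X}$ is nested. Since $F_{X}(\delta)\subseteq K_{X}(\delta)\times K_{X}(\delta)$ by construction \eqref{eq:construct-edgeset}, each $\qK_{X}(\delta)$ is a valid quasi-covering.

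Next I would dispatch the border conditions on the edge sets, i.e.\ requirement (i) of Def.~\ref{def:quasi-nested}. At $\delta=0$ the covering consists of singletons, so any two distinct covers $\{x\},\{x'\}$ have empty intersection and the minimum in \eqref{eq:construct-edgeset} equals $\qcm_{X}(x,x')>0=\delta$; hence $F_{X}(0)=\emptyset$. For $\delta\ge\delta_{\max}$ the covering is the single cover $\{X\}$, so that for the only pair the set $C_{i}\setminus(C_{i}\cap C_{j})$ is empty and, with the convention that a minimum over an empty index set exceeds every $\delta$, no edge is created, giving $F_{X}(\delta)=\emptyset$.

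The main obstacle is the nestedness of the influence structure, requirement (ii). I would exploit the fact that coverings are formed by maximal cliques: as $\delta$ grows to $\delta'$ the tolerance relation only grows, so any cover $C\in K_{X}(\delta)$ is still a clique at resolution $\delta'$ and is therefore contained in some maximal clique $C'\in K_{X}(\delta')$; this furnishes the required $C_{i}',C_{j}'$ with $C_{i}\subseteq C_{i}'$ and $C_{j}\subseteq C_{j}'$. Now let $(x^{*},x'^{*})$ attain the minimum in \eqref{eq:construct-edgeset} witnessing $(C_{i},C_{j})\in F_{X}(\delta)$, so $\qcm_{X}(x^{*},x'^{*})\le\delta\le\delta'$ with $x^{*}\in C_{i}\setminus C_{j}$ and $x'^{*}\in C_{j}\setminus C_{i}$. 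The argument splits according to whether these minimizers survive outside the enlarged intersection $C_{i}'\cap C_{j}'$. If both $x^{*}\notin C_{i}'\cap C_{j}'$ and $x'^{*}\notin C_{i}'\cap C_{j}'$, then $x^{*}\in C_{i}'\setminus C_{j}'$ and $x'^{*}\in C_{j}'\setminus C_{i}'$ are admissible indices for the edge condition of $(C_{i}',C_{j}')$ at $\delta'$, so its minimum is at most $\qcm_{X}(x^{*},x'^{*})\le\delta'$ and $(C_{i}',C_{j}')\in F_{X}(\delta')$. Otherwise at least one minimizer, say $x^{*}$, lies in $C_{i}'\cap C_{j}'$; but $x^{*}\notin C_{j}$ and $C_{i}\cap C_{j}\subseteq C_{j}$ force $x^{*}\notin C_{i}\cap C_{j}$, so $x^{*}\in(C_{i}'\cap C_{j}')\setminus(C_{i}\cap C_{j})$, and since $C_{i}\cap C_{j}\subseteq C_{i}'\cap C_{j}'$ always holds, this yields the strict inequality $|C_{i}\cap C_{j}|<|C_{i}'\cap C_{j}'|$. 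In either case the disjunction in (ii) is satisfied.

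The delicate point I would be most careful about is precisely this bookkeeping of which off-diagonal parts the minimizers occupy after enlargement, because that dichotomy is exactly what separates the two alternatives in (ii) -- the influence edge persisting versus the two blocks having grown more similar. I would also make explicit that edges are only created between distinct covers (self-pairs give an empty minimizing set), which guarantees the off-diagonal sets entering \eqref{eq:construct-edgeset} are non-empty whenever an edge is asserted, so the minimizers $x^{*},x'^{*}$ genuinely exist.
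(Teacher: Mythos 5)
Your proof is correct and follows essentially the same route as the paper's: both reduce the covering part to Theorem~\ref{theo:cut-metric-nested-covering} via Remark~\ref{R:different_metrics} applied to the symmetrized function, verify the border conditions on $F_X$ directly, and establish requirement (ii) through the identical dichotomy on whether the minimizing pair lands in the enlarged intersection $C_i'\cap C_j'$. Your write-up is in fact slightly more explicit than the paper's in justifying the existence of the enclosing blocks $C_i',C_j'$ via maximal cliques and in handling the empty-minimum convention at $\delta\ge\delta_{\max}$, but these are refinements of the same argument rather than a different one.
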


\begin{proof}

Notice that $\max\{\qcm_{X}(x,x'),\qcm_{X}(x',x)\}$ is a nonnegative, symmetric function that satisfies the identity property. Hence, $\tol_{\delta}$ in \eqref{eq:construct-tol} is a valid tolerance relation and, from the combination of Theorem~\ref{theo:cut-metric-nested-covering} and Remark~\ref{R:different_metrics}, it follows that $K_X=\{K_{X}(\delta),\delta \ge 0\}$ is a nested collection of coverings as required by Def.~\ref{def:quasi-nested}.

We now show that the collection of edge sets $F_X$ satisfies conditions (i) and (ii) in Def.~\ref{def:quasi-nested}.
It follows from the definition of quasi-cut metrics that $\qcm_{X}(x,x')=0$ if and only if $x=x'$ so that $F_{X}(0)=\emptyset$. Also, since $K_X$ is a nested collection of coverings, it must be that for some $\delta_{\max}$ we have $K_{X}(\delta)=\{X\}$ for all $\delta \ge \delta_{\max}$. Consequently, having only one block, the edge set must be empty, i.e., $F_{X}(\delta)=\emptyset$ for all $\delta > \delta_{\max}$.

To prove that $F_X$ satisfies condition (ii), let $C_{i},C_{j} \in K_{X}(\delta)$ and $(C_{i}, C_{j}) \in F_{X}(\delta)$ at a given $\delta$. Then, there is (at least) a pair of nodes $x_{i} \in C_{i}\backslash C_{i} \cap C_{j}$ and $x_{j} \in C_{j}\backslash C_{i} \cap C_{j}$ that satisfy \eqref{eq:construct-edgeset}. 
Focus now on an arbitrary resolution $\delta' > \delta$. The fact that $K_X=\{K_{X}(\delta),\delta \ge 0\}$ is a nested collection of coverings implies the existence of blocks $C_i', C_j' \in K_X(\delta')$ such that $C_i \subseteq C'_i$ and $C_j \subseteq C'_j$. This immediately implies that $x_i \in C'_i$ and $x_j \in C'_j$. Consider then the following two alternatives: (1) if neither $x_i$ nor $x_j$ belong to $C'_i \cap C'_j$, then \eqref{eq:construct-edgeset} forces $(C'_i, C'_j) \in F_X(\delta')$; and (2) if either $x_i$ or $x_j$ belong to $C'_i \cap C'_j$, then we have that $|C'_i \cap C'_j| > |C_i \cap C_j|$. Given that for any $\delta'$ either (1) or (2) must be true, requirement (ii) in Def.~\ref{def:quasi-nested} is satisfied, completing the proof.
\end{proof}

In virtue of Theorem~\ref{thm:quasi-cm-nested} we can readily obtain a nested collection of quasi-coverings from a quasi-cut metric. The quasi-cut metric can thus be constructed by performing a conic combination of quasi-ultrametrics. The algorithm proposed is analogous to the one developed in Section~\ref{sec:algorithm}. Given a network $(X,A_{X})$, first dither $J$ times the dissimilarity function $A_{X}$. Then, obtain a quasi-ultrametric $\tilde{\qu}_{X}^{(i)}$ by applying a hierarchical (non-overlapping) quasi-clustering method -- see, e.g., \cite{carlsson14icml} -- to each of the dithered networks $\tdN_{i}=(X,\tdA_{X}^{(i)})$ for $i=1,\ldots,J$. Finally, generate the quasi-cut metric $\qcm_{X}$ by averaging the $J$ quasi-ultrametrics.

In Fig.~\ref{fig:quasi_example} we show an illustrative example of a hierarchical overlapping quasi-clustering method $\breve{\ccalO}$. We are given two quasi-ultrametrics $\qu_{X}^{(1)}$ and $\qu_{X}^{(2)}$ -- representing those obtained by dithering the same underlying network -- depicted in Figs.~\ref{quasi-u1} and \ref{quasi-u2}, respectively. We then obtain a quasi-cut metric by averaging these two quasi-ultrametrics $\qcm_{X}=0.5(\qu_{X}^{(1)}+\qu_{X}^{(2)})$; see Fig.~\ref{quasi-c}. Observe that $\qcm_{X}$ is not a quasi-ultrametric since $\qcm_{X}(x_{2},x_{3})=3$ which is larger than $\max\{\qcm_{X}(x_{2},x_{1}),\qcm_{X}(x_{1},x_{3})\}=2.5$. In Fig.~\ref{quasi-nested} we portray the nested collection of quasi-coverings resulting from the quasi-cut metric $\qcm_{X}$ as explained in Theorem \ref{thm:quasi-cm-nested}. It is observed that for $\delta=0$ all nodes belong to separate covers and the edge set is empty; and that for $\delta=3$ all nodes belong to the same single cover and the edge set is also empty [cf. (i) in Def.~\ref{def:quasi-nested}]. We can also observe the progression of influence as $\delta$ grows larger. Nodes start influencing each other ($\delta \ge 1$), until covers are created ($\delta=2.5$) based on nodes that were already exercising influence on each other. Then, some nodes overlap ($x_{1}$), while non-overlapping nodes still exercise influence ($x_{3}$ onto $x_{2}$). Finally, all covers and edges collapse into a single cover containing all nodes ($\delta \ge 3$).

\begin{remark} \normalfont
Quasi-coverings in Def.~\ref{def:quasi-covering} act as generalizations of quasi-partitions introduced in \cite{carlsson14icml}. Likewise, a nested collection of quasi-coverings is the generalization of a quasi-dendrogram. These natural extensions are obtained from the constructions in \cite{carlsson14icml} by dropping the non-intersecting requirement.
\end{remark}


\section{Synthetic Experiments}
	\label{sec:syn_exs}
	
\begin{figure*}
	\captionsetup[subfigure]{justification=centering}
	\centering
	\begin{subfigure}{0.495\columnwidth}
		\centering
		\vspace{0.1cm}

\def \thisplotscale {0.12}
\def \unit {\thisplotscale cm}

\tikzstyle{dot} = [ellipse, 
                   inner sep=0pt, 
                   fill=pennblue, 
                   anchor = center,
                   minimum height = 0.65*\unit, 
                   minimum width  = 0.65*\unit]

{\footnotesize\begin{tikzpicture}[x = 1*\unit, y=1*\unit, font=\footnotesize]

    \path (5,0) node [draw, fill = pennblue!30, rectangle, rounded corners, opacity = 1, 
                      minimum width  = 13*\unit, 
                      minimum height = 13*\unit] (Cleft) {}
       ++ (9,8.5) node [left] {$C_1$}; 
    \path (27,0) node [draw, fill = pennblue!30, rectangle, rounded corners, opacity = 1, 
                      minimum width  = 13*\unit, 
                      minimum height = 13*\unit] (Cright) {}
       ++ ( -9,8.5) node [right] {$C_2$};                       

\path(-0.01, 5.01) node [dot] {};
\path(1.15, 4.95) node [dot] {};
\path(2.12, 4.96) node [dot] {};
\path(3.23, 5.05) node [dot] {};
\path(4.03, 5.03) node [dot] {};
\path(4.95, 4.80) node [dot] {};
\path(5.92, 4.89) node [dot] {};
\path(6.91, 4.91) node [dot] {};
\path(7.96, 4.94) node [dot] {};
\path(9.04, 4.97) node [dot] {};
\path(10.02, 4.98) node [dot] {};
\path(-0.23, 4.02) node [dot] {};
\path(1.01, 3.91) node [dot] {};
\path(1.93, 3.89) node [dot] {};
\path(3.01, 3.99) node [dot] {};
\path(3.98, 3.96) node [dot] {};
\path(5.23, 3.92) node [dot] {};
\path(5.96, 3.94) node [dot] {};
\path(6.87, 4.08) node [dot] {};
\path(8.07, 4.06) node [dot] {};
\path(9.05, 3.84) node [dot] {};
\path(9.99, 4.09) node [dot] {};
\path(0.07, 3.15) node [dot] {};
\path(0.93, 3.20) node [dot] {};
\path(1.90, 3.04) node [dot] {};
\path(3.27, 2.75) node [dot] {};
\path(3.91, 2.91) node [dot] {};
\path(4.89, 3.07) node [dot] {};
\path(5.83, 3.00) node [dot] {};
\path(6.98, 3.09) node [dot] {};
\path(7.98, 3.23) node [dot] {};
\path(9.02, 3.16) node [dot] {};
\path(10.00, 2.90) node [dot] {};
\path(-0.12, 1.87) node [dot] {};
\path(1.10, 2.01) node [dot] {};
\path(1.82, 2.08) node [dot] {};
\path(2.99, 2.02) node [dot] {};
\path(3.82, 1.88) node [dot] {};
\path(4.99, 1.94) node [dot] {};
\path(6.04, 1.93) node [dot] {};
\path(7.01, 1.95) node [dot] {};
\path(7.94, 2.16) node [dot] {};
\path(8.98, 2.11) node [dot] {};
\path(10.05, 1.88) node [dot] {};
\path(0.07, 0.84) node [dot] {};
\path(0.97, 1.21) node [dot] {};
\path(2.06, 0.89) node [dot] {};
\path(3.06, 1.02) node [dot] {};
\path(3.89, 0.90) node [dot] {};
\path(5.03, 1.04) node [dot] {};
\path(5.98, 0.87) node [dot] {};
\path(7.03, 0.91) node [dot] {};
\path(7.94, 0.88) node [dot] {};
\path(9.14, 0.94) node [dot] {};
\path(10.01, 1.03) node [dot] {};
\path(0.16, -0.08) node [dot] {};
\path(1.07, -0.09) node [dot] {};
\path(1.99, -0.16) node [dot] {};
\path(2.96, 0.02) node [dot] {};
\path(4.05, -0.12) node [dot] {};
\path(5.08, -0.07) node [dot] {};
\path(5.79, 0.08) node [dot] {};
\path(6.91, 0.05) node [dot] {};
\path(8.13, -0.05) node [dot] {};
\path(9.08, -0.13) node [dot] {};
\path(10.03, -0.07) node [dot] {};
\path(-0.17, -0.99) node [dot] {};
\path(0.94, -0.98) node [dot] {};
\path(1.96, -0.86) node [dot] {};
\path(3.00, -0.98) node [dot] {};
\path(4.05, -1.15) node [dot] {};
\path(4.99, -0.90) node [dot] {};
\path(6.00, -0.98) node [dot] {};
\path(6.93, -0.84) node [dot] {};
\path(7.95, -1.15) node [dot] {};
\path(8.99, -1.25) node [dot] {};
\path(10.03, -0.96) node [dot] {};
\path(-0.18, -2.11) node [dot] {};
\path(1.01, -1.99) node [dot] {};
\path(1.85, -1.86) node [dot] {};
\path(2.90, -2.14) node [dot] {};
\path(4.02, -2.12) node [dot] {};
\path(4.99, -2.18) node [dot] {};
\path(6.01, -2.02) node [dot] {};
\path(7.07, -1.96) node [dot] {};
\path(7.97, -1.91) node [dot] {};
\path(9.13, -2.05) node [dot] {};
\path(9.98, -2.12) node [dot] {};
\path(-0.09, -2.73) node [dot] {};
\path(0.93, -3.13) node [dot] {};
\path(1.98, -3.11) node [dot] {};
\path(3.18, -3.12) node [dot] {};
\path(4.06, -2.97) node [dot] {};
\path(5.23, -3.05) node [dot] {};
\path(6.02, -2.92) node [dot] {};
\path(6.83, -2.99) node [dot] {};
\path(8.07, -2.91) node [dot] {};
\path(9.07, -3.13) node [dot] {};
\path(9.95, -3.07) node [dot] {};
\path(-0.05, -4.05) node [dot] {};
\path(1.24, -4.19) node [dot] {};
\path(2.07, -3.89) node [dot] {};
\path(2.96, -3.99) node [dot] {};
\path(3.92, -3.93) node [dot] {};
\path(4.93, -4.03) node [dot] {};
\path(6.00, -3.99) node [dot] {};
\path(7.11, -4.01) node [dot] {};
\path(8.05, -4.03) node [dot] {};
\path(8.99, -3.93) node [dot] {};
\path(10.03, -4.01) node [dot] {};
\path(0.10, -4.95) node [dot] {};
\path(1.12, -4.86) node [dot] {};
\path(2.07, -4.81) node [dot] {};
\path(3.17, -5.07) node [dot] {};
\path(3.96, -4.84) node [dot] {};
\path(4.95, -5.08) node [dot] {};
\path(6.01, -4.99) node [dot] {};
\path(6.92, -5.08) node [dot] {};
\path(8.01, -5.04) node [dot] {};
\path(8.85, -4.93) node [dot] {};
\path(10.20, -4.92) node [dot] {};
\path(22.13, 4.80) node [dot] {};
\path(22.98, 4.93) node [dot] {};
\path(24.07, 5.11) node [dot] {};
\path(25.13, 4.86) node [dot] {};
\path(25.99, 5.01) node [dot] {};
\path(26.91, 4.96) node [dot] {};
\path(27.96, 5.00) node [dot] {};
\path(29.10, 5.09) node [dot] {};
\path(29.95, 4.99) node [dot] {};
\path(31.10, 5.00) node [dot] {};
\path(31.88, 4.87) node [dot] {};
\path(21.86, 3.89) node [dot] {};
\path(23.12, 4.16) node [dot] {};
\path(24.15, 3.85) node [dot] {};
\path(25.15, 3.95) node [dot] {};
\path(26.05, 4.03) node [dot] {};
\path(27.30, 3.98) node [dot] {};
\path(28.00, 4.11) node [dot] {};
\path(29.12, 3.94) node [dot] {};
\path(29.83, 3.97) node [dot] {};
\path(30.97, 3.95) node [dot] {};
\path(31.85, 3.91) node [dot] {};
\path(22.18, 2.86) node [dot] {};
\path(23.07, 3.06) node [dot] {};
\path(24.15, 3.06) node [dot] {};
\path(24.91, 2.88) node [dot] {};
\path(26.00, 2.90) node [dot] {};
\path(27.20, 2.96) node [dot] {};
\path(27.93, 2.81) node [dot] {};
\path(28.97, 2.96) node [dot] {};
\path(30.09, 3.02) node [dot] {};
\path(31.08, 2.79) node [dot] {};
\path(32.06, 2.93) node [dot] {};
\path(21.95, 2.04) node [dot] {};
\path(23.08, 1.89) node [dot] {};
\path(24.07, 1.95) node [dot] {};
\path(25.03, 2.05) node [dot] {};
\path(26.13, 2.07) node [dot] {};
\path(27.24, 2.09) node [dot] {};
\path(28.18, 1.98) node [dot] {};
\path(29.13, 1.95) node [dot] {};
\path(30.06, 1.99) node [dot] {};
\path(31.04, 1.96) node [dot] {};
\path(32.02, 2.01) node [dot] {};
\path(22.17, 0.83) node [dot] {};
\path(23.00, 0.85) node [dot] {};
\path(23.87, 0.93) node [dot] {};
\path(24.93, 1.29) node [dot] {};
\path(26.11, 0.99) node [dot] {};
\path(26.96, 1.03) node [dot] {};
\path(28.07, 1.03) node [dot] {};
\path(28.89, 1.06) node [dot] {};
\path(30.07, 1.03) node [dot] {};
\path(31.11, 1.03) node [dot] {};
\path(31.90, 1.11) node [dot] {};
\path(22.00, 0.01) node [dot] {};
\path(23.06, 0.16) node [dot] {};
\path(23.97, 0.02) node [dot] {};
\path(25.03, -0.11) node [dot] {};
\path(26.02, -0.06) node [dot] {};
\path(26.74, 0.04) node [dot] {};
\path(28.01, -0.23) node [dot] {};
\path(28.91, 0.18) node [dot] {};
\path(30.10, -0.04) node [dot] {};
\path(31.09, 0.11) node [dot] {};
\path(32.07, -0.07) node [dot] {};
\path(21.91, -1.10) node [dot] {};
\path(22.91, -0.96) node [dot] {};
\path(23.97, -1.12) node [dot] {};
\path(25.00, -0.91) node [dot] {};
\path(25.88, -1.09) node [dot] {};
\path(27.00, -1.08) node [dot] {};
\path(28.01, -0.93) node [dot] {};
\path(29.19, -0.72) node [dot] {};
\path(30.08, -0.91) node [dot] {};
\path(31.06, -1.19) node [dot] {};
\path(32.04, -0.98) node [dot] {};
\path(21.99, -2.00) node [dot] {};
\path(23.14, -1.89) node [dot] {};
\path(23.96, -1.93) node [dot] {};
\path(25.15, -2.09) node [dot] {};
\path(25.98, -2.14) node [dot] {};
\path(27.11, -1.87) node [dot] {};
\path(27.99, -1.67) node [dot] {};
\path(29.12, -1.95) node [dot] {};
\path(29.90, -2.10) node [dot] {};
\path(31.05, -2.06) node [dot] {};
\path(32.10, -1.98) node [dot] {};
\path(21.81, -3.03) node [dot] {};
\path(22.99, -2.98) node [dot] {};
\path(23.93, -2.96) node [dot] {};
\path(25.02, -2.88) node [dot] {};
\path(25.97, -3.01) node [dot] {};
\path(27.17, -2.93) node [dot] {};
\path(28.13, -3.04) node [dot] {};
\path(28.98, -3.11) node [dot] {};
\path(29.94, -2.94) node [dot] {};
\path(30.97, -2.78) node [dot] {};
\path(31.95, -2.80) node [dot] {};
\path(21.92, -4.07) node [dot] {};
\path(22.99, -4.07) node [dot] {};
\path(24.00, -4.01) node [dot] {};
\path(25.04, -3.89) node [dot] {};
\path(25.91, -4.18) node [dot] {};
\path(27.09, -4.07) node [dot] {};
\path(27.90, -4.09) node [dot] {};
\path(28.96, -4.02) node [dot] {};
\path(29.92, -3.98) node [dot] {};
\path(31.15, -4.17) node [dot] {};
\path(31.95, -3.95) node [dot] {};
\path(21.79, -5.05) node [dot] {};
\path(22.78, -4.98) node [dot] {};
\path(24.03, -4.85) node [dot] {};
\path(24.95, -4.97) node [dot] {};
\path(26.18, -5.07) node [dot] {};
\path(27.18, -4.93) node [dot] {};
\path(27.95, -5.06) node [dot] {};
\path(28.99, -5.14) node [dot] {};
\path(30.14, -5.04) node [dot] {};
\path(31.01, -4.93) node [dot] {};
\path(31.95, -4.87) node [dot] {};

\end{tikzpicture}}
		\vspace{0.06cm}
		\caption{Two clouds\\(covers for $\delta=1.11$)}
		\label{two_clouds-nodes}
	\end{subfigure}
	\hfill
	\begin{subfigure}{0.495\columnwidth}
		\centering
		\input{figures/scaling-cover-01}
		\caption{Multiple resolutions\\(covers for $\delta=1.11$)}
		\label{scaling-cover-01}
	\end{subfigure}
	\hspace{-0.25cm}
	\begin{subfigure}{0.495\columnwidth}
		\centering
		\input{figures/scaling-cover-02}
		\vspace{0.4cm}
		\caption{Multiple resolutions\\(covers for $\delta=2.07$)}
		\label{scaling-cover-02}
	\end{subfigure}
	\hfill
	\begin{subfigure}{0.495\columnwidth}
		\centering
		\vspace{0.1cm}

\def \thisplotscale {0.14}
\def \unit {\thisplotscale cm}

\tikzstyle{dot} = [ellipse, 
                   inner sep=0pt, 
                   fill=pennblue, 
                   anchor = center,
                   minimum height = 0.4*\unit, 
                   minimum width  = 0.4*\unit]

{\footnotesize\begin{tikzpicture}[x = 1*\unit, y=1*\unit, font=\footnotesize]

	\draw [rounded corners,fill =pennred!30] (21,-5)--(21,5)--(11,5)--(11,1.5)--(5.2,1.5)--(5.2,-1.5)--(11,-1.5)--(11,-5)--(21,-5);
	\node at (20,7) {$C_{2}$};

	\draw [rounded corners,fill = pennblue!30, opacity=0.8] (-5,-5)--(-5,5)--(5,5)--(5,1)--(10.9,1)--(10.9,-1)--(5,-1)--(5,-5)--(-5,-5);
	\node at (5,6) {$C_{1}$};

   \foreach \i in {-5,...,5}{
      \foreach \j in {-5,...,5}{   
         \path (0.8*\i,0.8*\j) node [dot] {};
      }
   }
   \foreach \i in {-5,...,5}{  
         \path (8,0) ++ (0.8*\i,0) node [dot] {};
   }

   \foreach \i in {-5,...,5}{
      \foreach \j in {-5,...,5}{   
         \path (16,0) ++ (0.8*\i,0.8*\j) node [dot] {};
      }
   }

\end{tikzpicture}}
		\vspace{0.5cm}
		\caption{Dumbbell\\(covers for $\delta=2.17$)}
		\label{dumbbell-cover}
	\end{subfigure}
	
	\vspace{0.5cm}
	
	\begin{subfigure}{0.66\columnwidth}
		\centering
		\includegraphics[width=0.9\columnwidth]{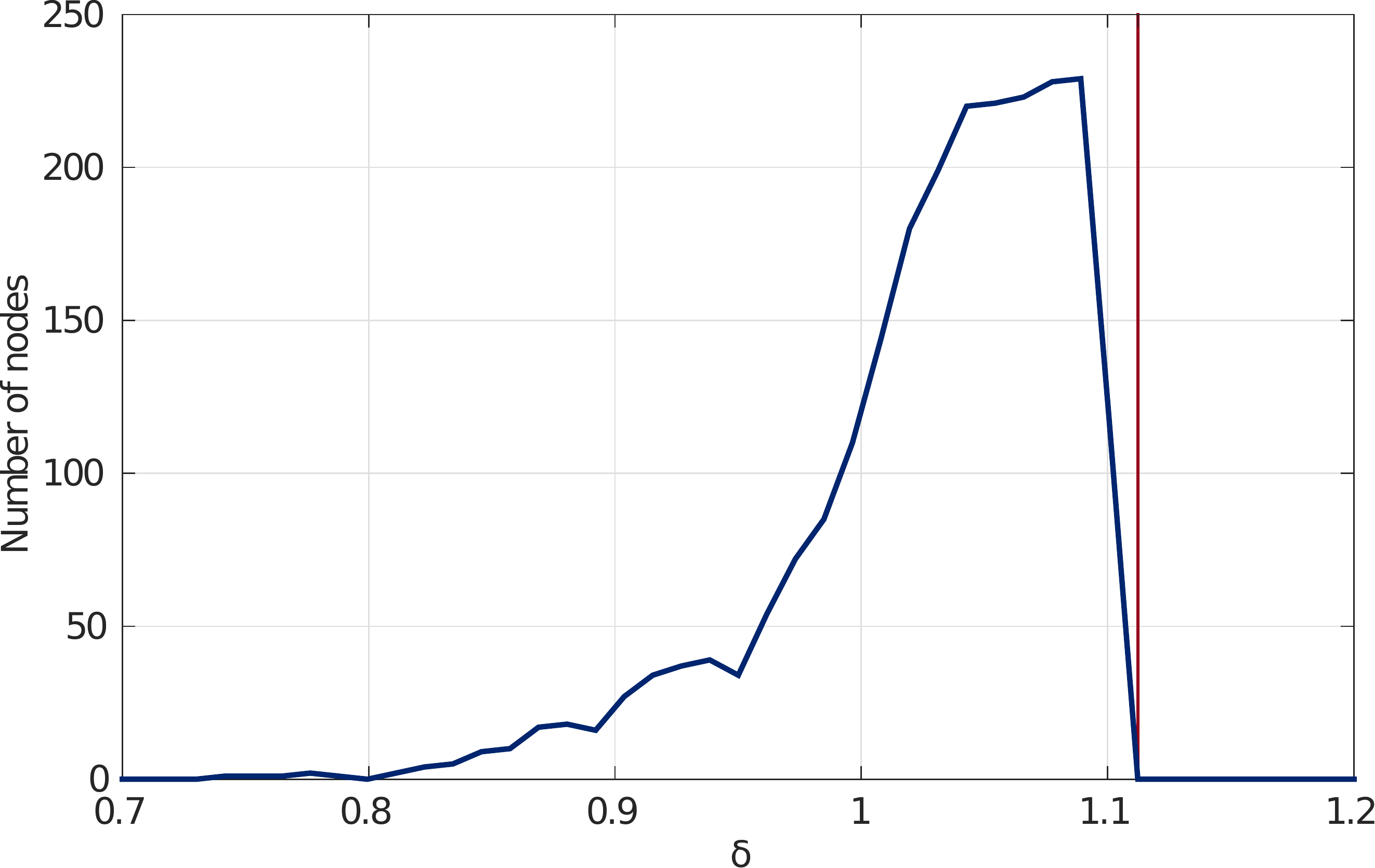}
		\caption{Two clouds: overlapping function}
		\label{two_clouds-ol}
	\end{subfigure}
	\hfill
	\begin{subfigure}{0.66\columnwidth}
		\centering
		\includegraphics[width=0.9\columnwidth]{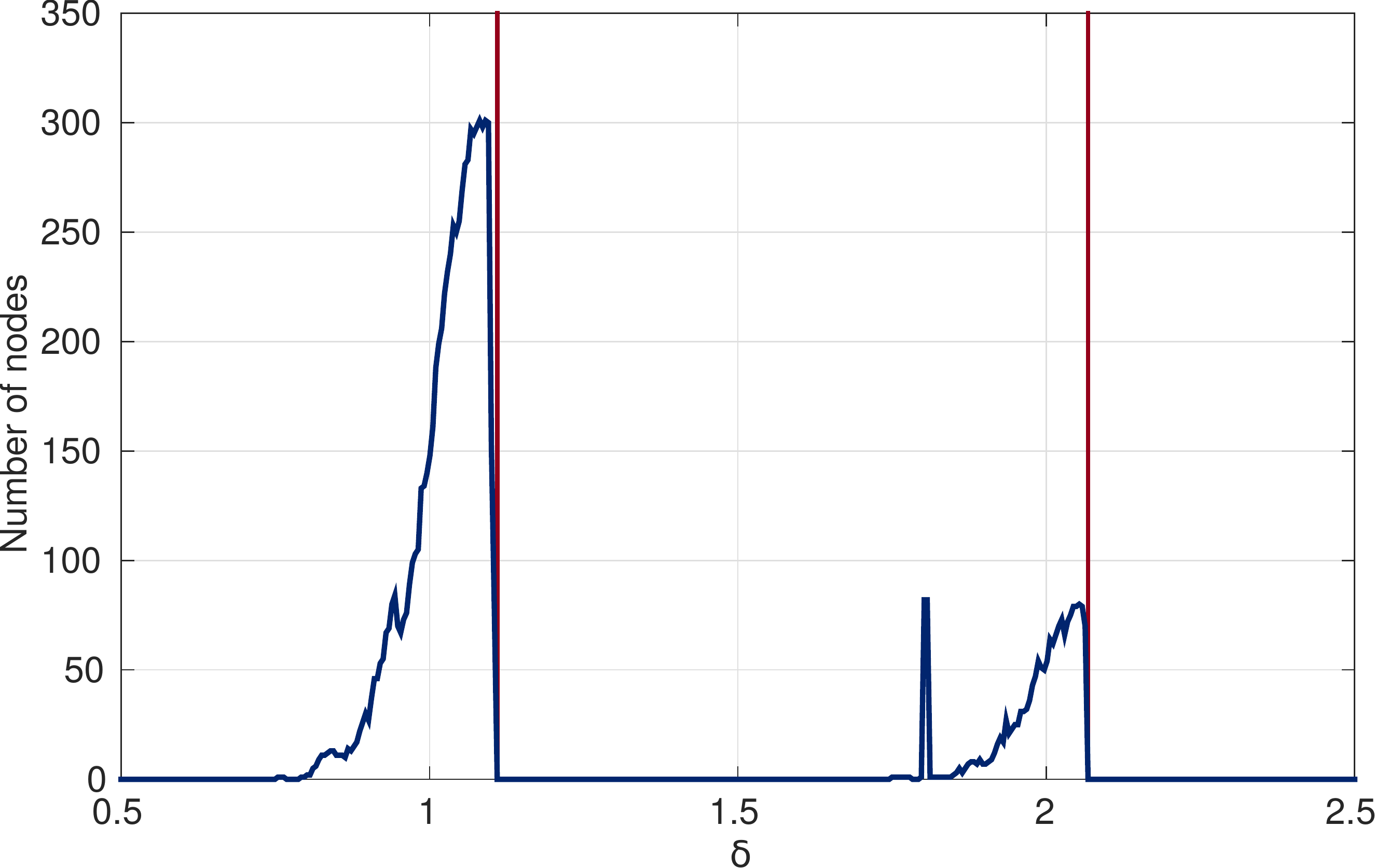}
		\caption{Multiple resolutions: overlapping function}
		\label{scaling-ol}
	\end{subfigure}
	\hfill
	\begin{subfigure}{0.66\columnwidth}
		\centering
		\includegraphics[width=0.9\columnwidth]{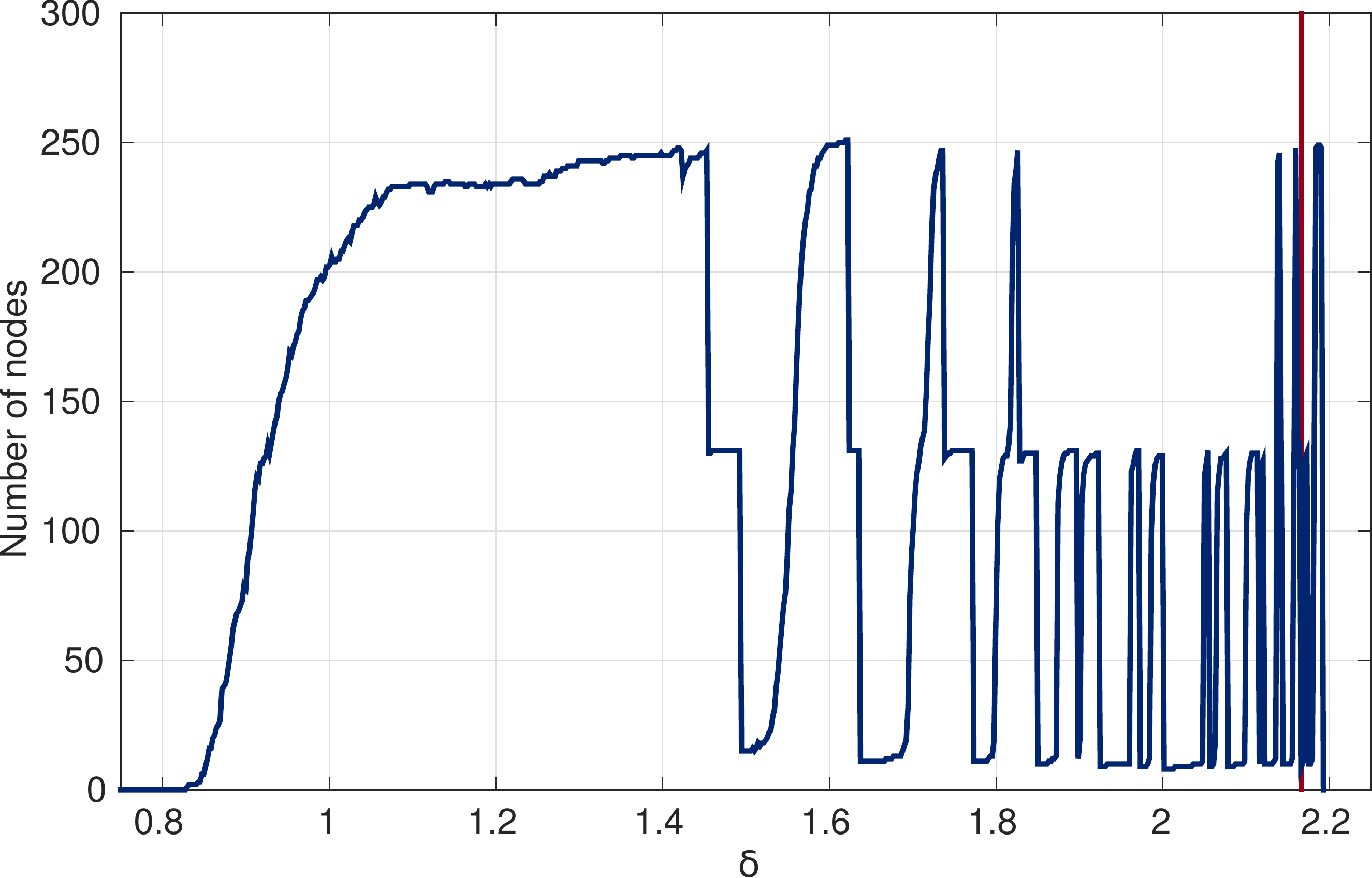}
		\caption{Dumbbell: overlapping function}
		\label{dumbbell-ol}
	\end{subfigure}

	\vspace{0.5cm}	
		
	\begin{subfigure}{0.66\columnwidth}
		\centering
		\includegraphics[width=0.9\columnwidth]{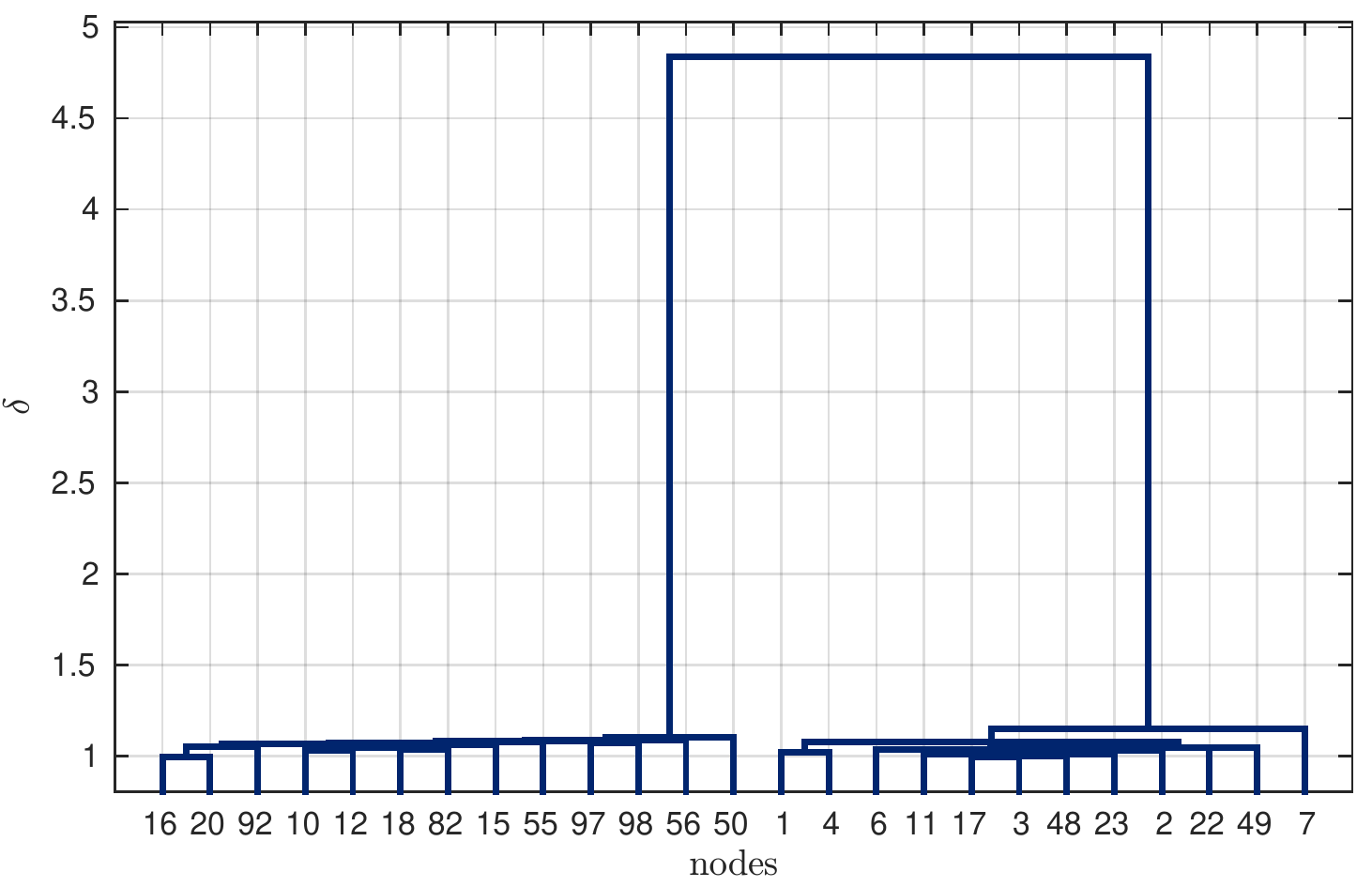}
		\caption{Two clouds: dendrogram}
		\label{two_clouds-dendro}
	\end{subfigure}
	\hfill
	\begin{subfigure}{0.66\columnwidth}
		\centering
		\includegraphics[width=0.9\columnwidth]{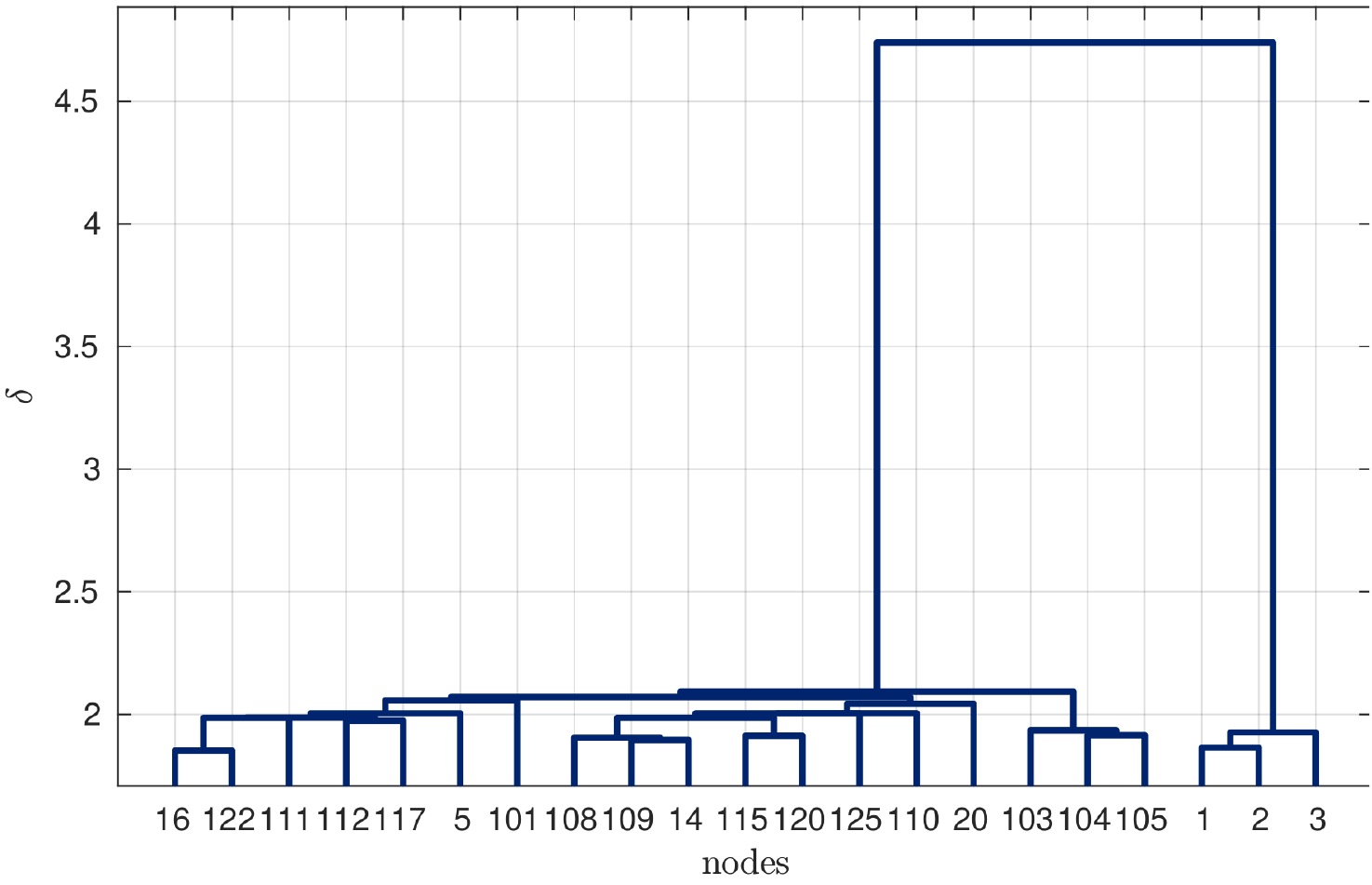}
		\caption{Multiple resolutions: dendrogram}
		\label{scaling-dendro}
	\end{subfigure}
	\hfill
	\begin{subfigure}{0.66\columnwidth}
		\centering
		\includegraphics[width=0.9\columnwidth]{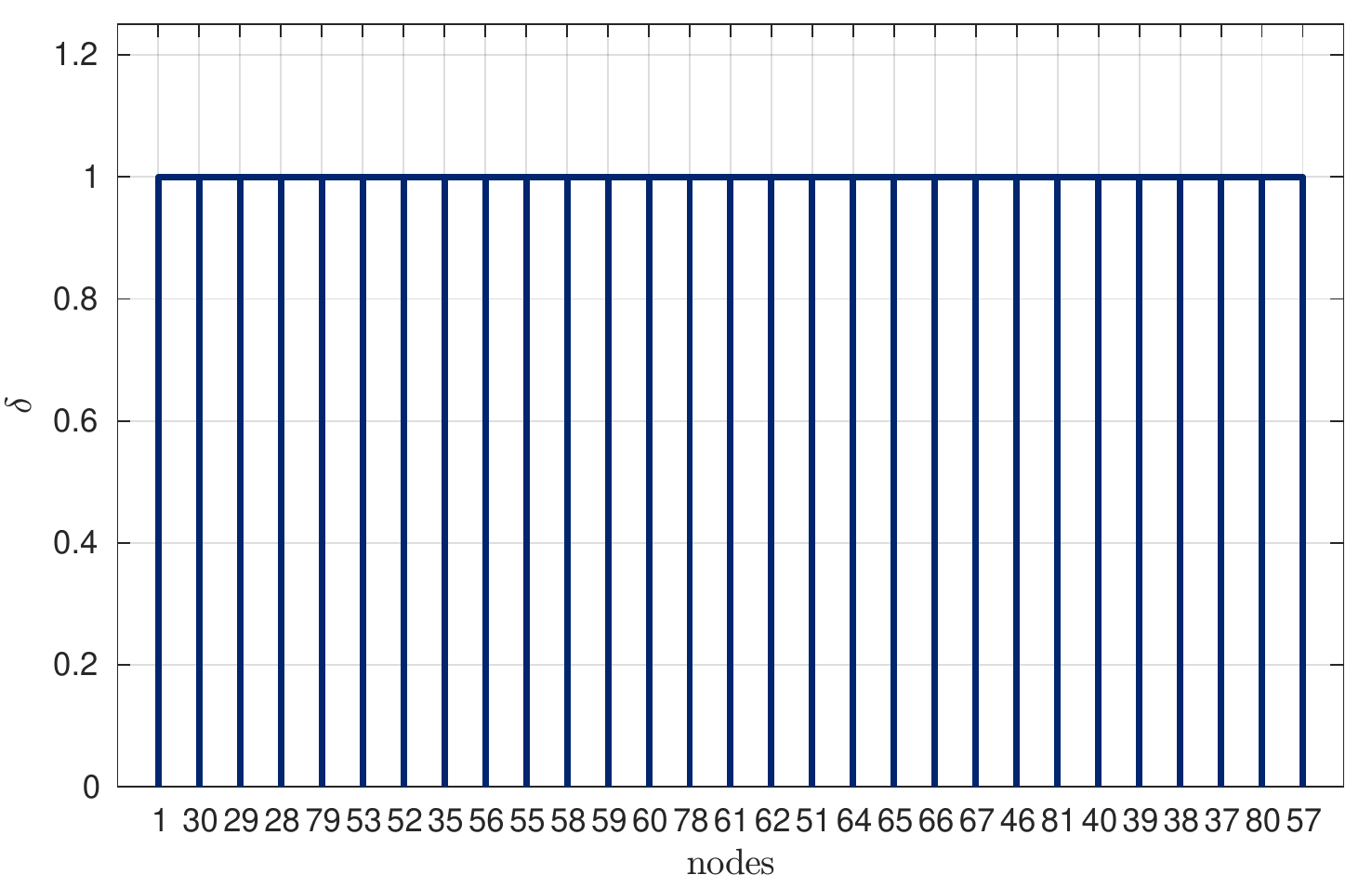}
		\caption{Dumbbell: dendrogram}
		\label{dumbbell-dendro}
	\end{subfigure}
	\caption{Synthetic experiments. \subref{two_clouds-nodes}-\subref{dumbbell-cover} Location of nodes for the three synthetic networks considered: \subref{two_clouds-nodes} two clouds, \subref{scaling-cover-01}-\subref{scaling-cover-02} multiple resolutions, and \subref{dumbbell-cover} dumbbell. Also shown are the coverings obtained at different values of $\delta$. \subref{two_clouds-ol}-\subref{dumbbell-ol} Overlapping functions for each one of the mentioned networks. Shown in red are the values of $\delta$ for which either a zero -- \subref{two_clouds-ol} and \subref{scaling-ol} -- or a minimum of local minima -- \subref{dumbbell-ol} -- is attained. \subref{two_clouds-dendro}-\subref{dumbbell-dendro} Reduced dendrograms resulting from applying single linkage to each of the three synthetic networks, respectively. These dendrograms have been simplified to show only a few illustrative nodes for the sake of clarity (i.e. nodes that are grouped at resolution levels close to $0$ are not shown, and only a representative node in each of these groups is shown).} 
	\label{fig:synthetic}
\end{figure*}

The first three illustrations of Algorithm~\ref{a:overlapping-clustering} are conducted on synthetic networks $N =(X , A_{X })$ where $X $ is a set of points embedded in $\reals^2$ and $A_{X }$ is the Euclidean distance between them. As a hierarchical clustering method $\ccalH$ we apply single linkage \cite{carlsson10} and average the clustering output of $J=100$ different noisy realizations of $N $. The noisy $\tdN$ are obtained by perturbing the positions of the nodes in $\reals^2$ with zero-mean gaussian noise with standard deviation $\sigma$ obtained as $10^{-1}$ of the smallest distance between nodes unless otherwise stated.

For comparison, we include the output of single linkage hierarchical clustering. We observe that, for clusterable datasets, the cut metric approach proposed in this work coincides with the ultrametric clustering (Sections~\ref{sub:two_clouds} and \ref{sub:scaling}). Furthermore, for the dumbbell network (Section~\ref{sub:dumbbell}), the proposed method effectively overcomes the chaining effect.

\subsection{Two clouds}
	\label{sub:two_clouds}

Consider the simple network portrayed in Fig.~\ref{two_clouds-nodes}, where there are $121$ nodes on each cloud. The average distance between adjacent nodes within each of the clouds is $d_{1}=1$ whereas the distance between the centers of both clouds is $d_{2}=22$. The overlapping function is found in Fig.~\ref{two_clouds-ol}.
As expected, this network is clusterable (see Definition~\ref{def:clusterability}), i.e., for $\delta=1.11$ a non-trivial (neither all nodes together nor all separated) partition $K_{X }(\delta)=\{C_{1},C_{2}\}$ is obtained; see Fig.~\ref{two_clouds-nodes}. The resulting dendrogram obtained from applying single linkage to this network is shown in Fig.~\ref{two_clouds-dendro}. By looking at resolution $\delta=1.26$ we obtain the same set of covers as for the cut metric algorithm; see Fig.~\ref{two_clouds-nodes}.
Observe that the value of $\delta$ that generates this partition that identifies both clouds is in the order of $d_{1}$.

\subsection{Multiple resolutions}
	\label{sub:scaling}

Consider now the network depicted in Fig.~\ref{scaling-cover-01} which consists of five clouds of $81$ points each. Within each of the four clouds on the left, the average distance between adjacent nodes is $d_{1}=1$ whereas the average distance between neighboring nodes in different clouds is $d_{2}=2$. The fifth cloud, on the right, has an average distance between adjacent nodes within the cloud of $d_{2}$ and the distance between the center of the right cloud and that of the other four clouds is $d_3 = 22$. The resulting overlapping function is found in Fig.~\ref{scaling-ol}. We have also applied single linkage to this network and obtained the dendrogram shown in Fig.~\ref{scaling-dendro}.

The network is found to be clusterable. First, for $\delta=1.11$ we obtain the covering $K_{X }(\delta)=\{C_{1},C_{2},\ldots,C_{85}\}$, as illustrated in Fig.~\ref{scaling-cover-01}. In this case, each subset $C_{1},\ldots,C_{4}$ contains one of the clouds on the left and each of the subsets $C_{5},\ldots,C_{85}$ contains a single point of the fifth cloud. Intuitively, since $\delta = 1.11$ is close to $d_{1}$ but smaller than $d_{2}$, only nodes that are at distances around or below $d_{1}$ are grouped together. The same result is obtained when looking at the output of the single linkage clustering for resolution $\delta=1.7387$.

Additionally, the network is also clusterable for $\delta=2.07$ with the resulting covering $K_{X }(\delta)=\{C_{1},C_{2}\}$ portrayed in Fig.~\ref{scaling-cover-02}. In this case, there are only two subsets: $C_{1}$ contains the four clouds on the left, and $C_{2}$ contains the right cloud. This is reasonable since $\delta=2.07$ is slightly greater than both the inter-cloud distance on the left and the inter-point distance on the right but smaller than the minimum distance between points in the left and right clouds. In the dendrogram, for resolution $\delta=2.1198$ we get the same partition.

\subsection{A solution to single linkage's chaining effect}
	\label{sub:dumbbell}

Consider the network formed by the set of $171$ nodes in Fig.~\ref{dumbbell-cover} where the distance between adjacent nodes is $d=1$. For this simulation, the value of $\sigma$ used is equal to the minimum distance between nodes. Unlike the previous examples considered, the expected clustering output is not unequivocal since it is not clear how to group the nodes along the line connecting both point clouds. More specifically, it is not clear whether these points should constitute a cluster in itself or if this putative group should have overlap with the ones corresponding to the point clouds at its extremes. 
As a matter of fact, if one applies (non-overlapping) single linkage clustering to the network of interest, the output dendrogram shown in Fig.~\ref{dumbbell-dendro} only consists of the two trivial partitions: every node in a different cluster for resolutions $\delta <1$ or all nodes clustered together for $\delta \ge 1$. This phenomenon is known as chaining effect \cite{lance67general} and is an undesirable concomitant of the definition of single linkage. Nevertheless, via the introduction of dithering, non-trivial coverings of the network of interest can be recovered, thus, overcoming the aforementioned chaining effect.

The overlapping function is presented in Fig.~\ref{dumbbell-ol} and, in accordance with the intuition explained, the network is found not to be clusterable, i.e., no zeros are attained for intermediate resolutions. In this case, we focus on the resolution $\delta$ leading to the minimum number of overlapping nodes, i.e., we look for a covering associated with the minimum of all local minima of the overlapping function. For $\delta=2.3854$, the resulting covering is depicted in Fig.~\ref{dumbbell-cover} where there are only two subsets $C_{1}$ and $C_{2}$, one containing all the left cloud as well as the connecting line (except for the rightmost point in the line), while the other subset contains all the right cloud and the connecting line (except for the leftmost point). By admitting overlap between clusters and using the overlapping function to guide our analysis, useful insight on the structure of the network was recovered.

\begin{figure*}
	\centering
	\begin{subfigure}{0.66\columnwidth}
		\centering
		\input{figures/mnist_classif_1_7/mnist-all-17}
		\caption{Digits \texttt{1}, \texttt{7}: images}
		\label{digits-1-7}
	\end{subfigure}
	\hfill
	\begin{subfigure}{0.66\columnwidth}
		\centering
		\input{figures/mnist_classif_6_8_9/mnist-all-689}
		\caption{Digits \texttt{6}, \texttt{8}, \texttt{9}: images}
		\label{digits-6-8-9}
	\end{subfigure}
	\hfill
	\begin{subfigure}{0.66\columnwidth}
		\centering
		\input{figures/mnist_classif_0_1_2_7/mnist-all-0127}
		\caption{Digits \texttt{0}, \texttt{1}, \texttt{2}, \texttt{7}: images}
		\label{digits-0-1-2-7}
	\end{subfigure}
		
	\vspace{0.5cm}
	
	\begin{subfigure}{0.66\columnwidth}
		\centering
		\includegraphics[width=0.9\columnwidth]{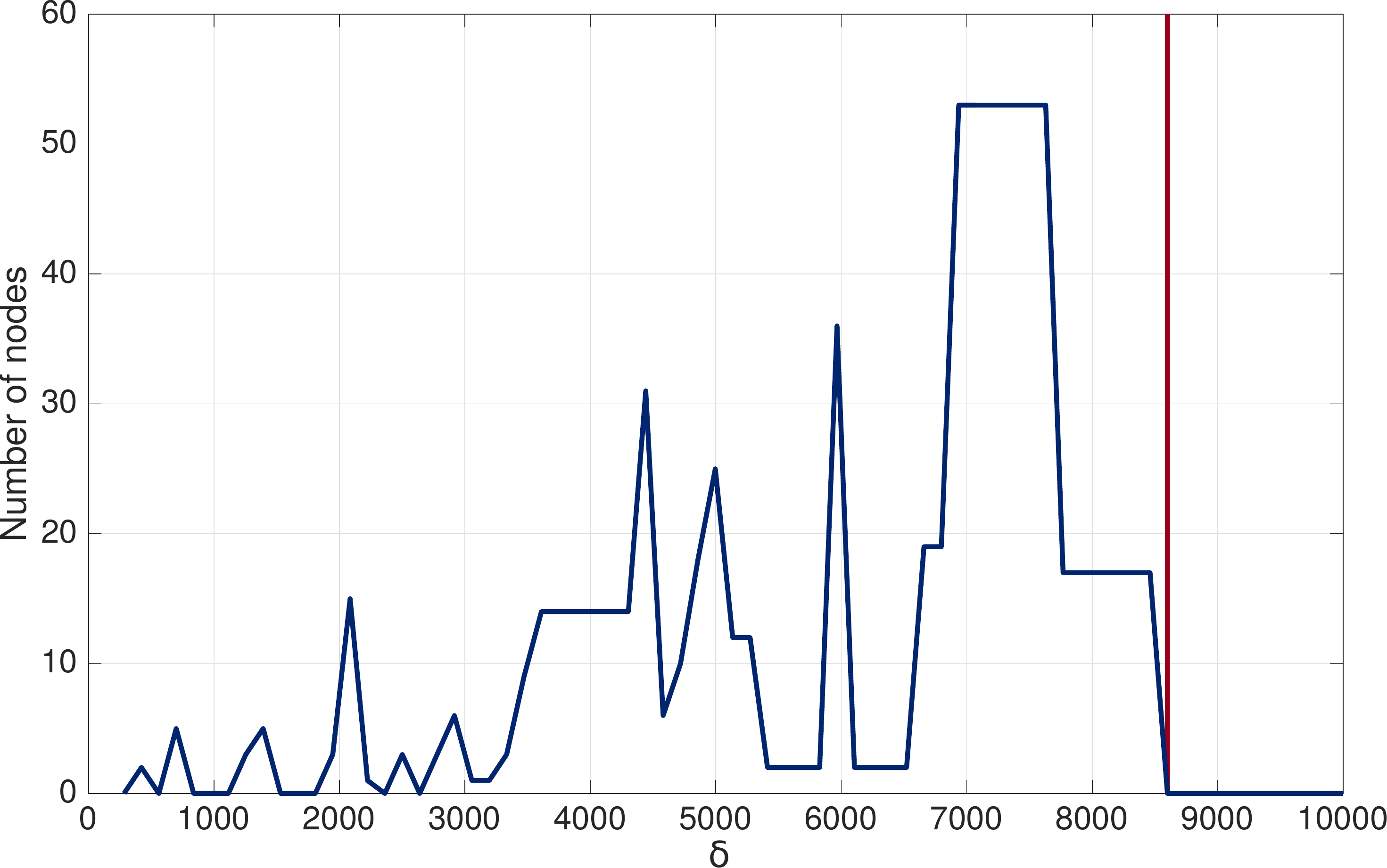}
		\caption{Digits \texttt{1}, \texttt{7}: overlapping function}
		\label{mnist_17-ol}
	\end{subfigure}
	\hfill
	\begin{subfigure}{0.66\columnwidth}
		\centering
		\includegraphics[width=0.9\columnwidth]{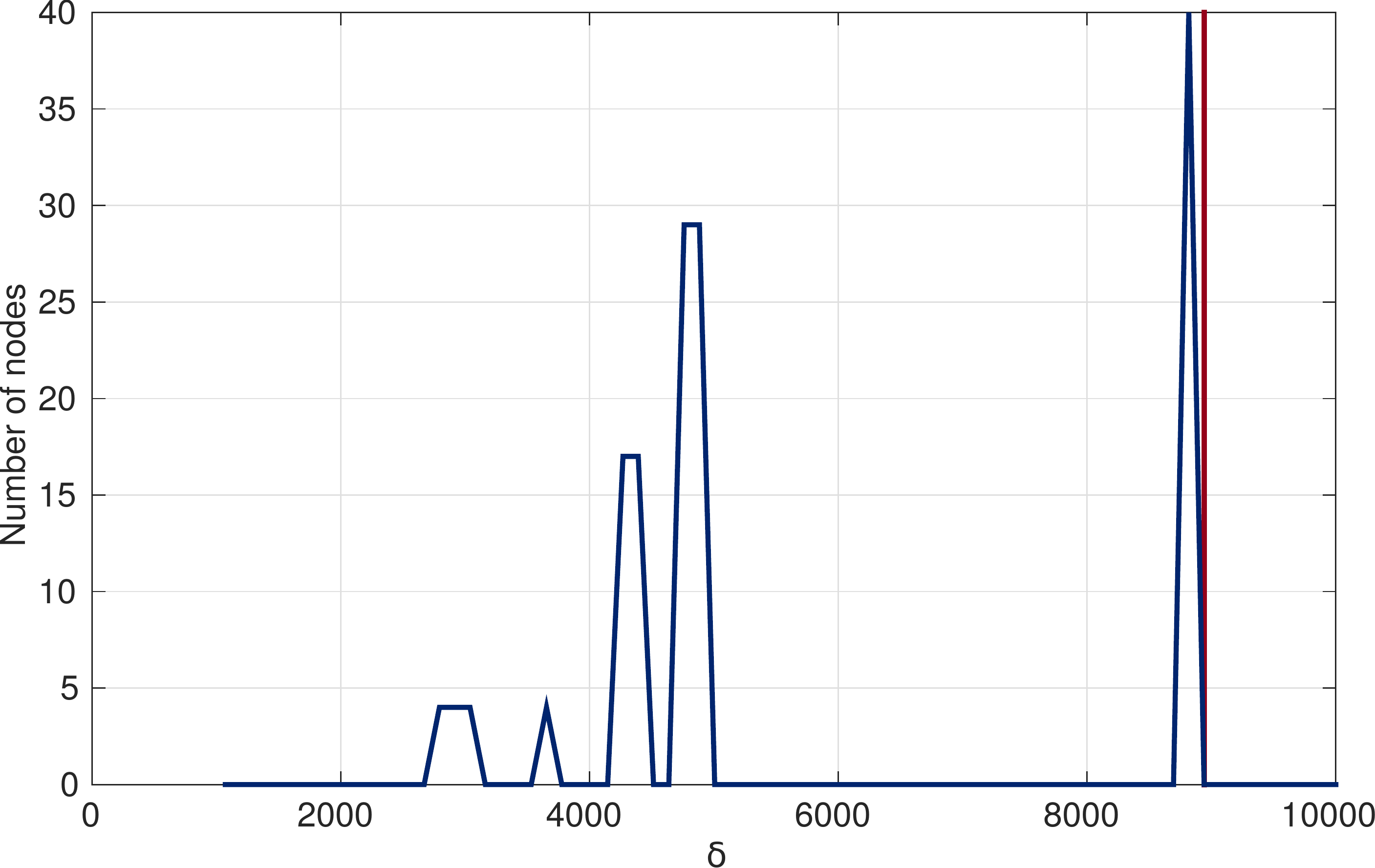}
		\caption{Digits \texttt{6}, \texttt{8}, \texttt{9}: overlapping function}
		\label{mnist_689-ol}
	\end{subfigure}
	\hfill
	\begin{subfigure}{0.66\columnwidth}
		\centering
		\includegraphics[width=0.9\columnwidth]{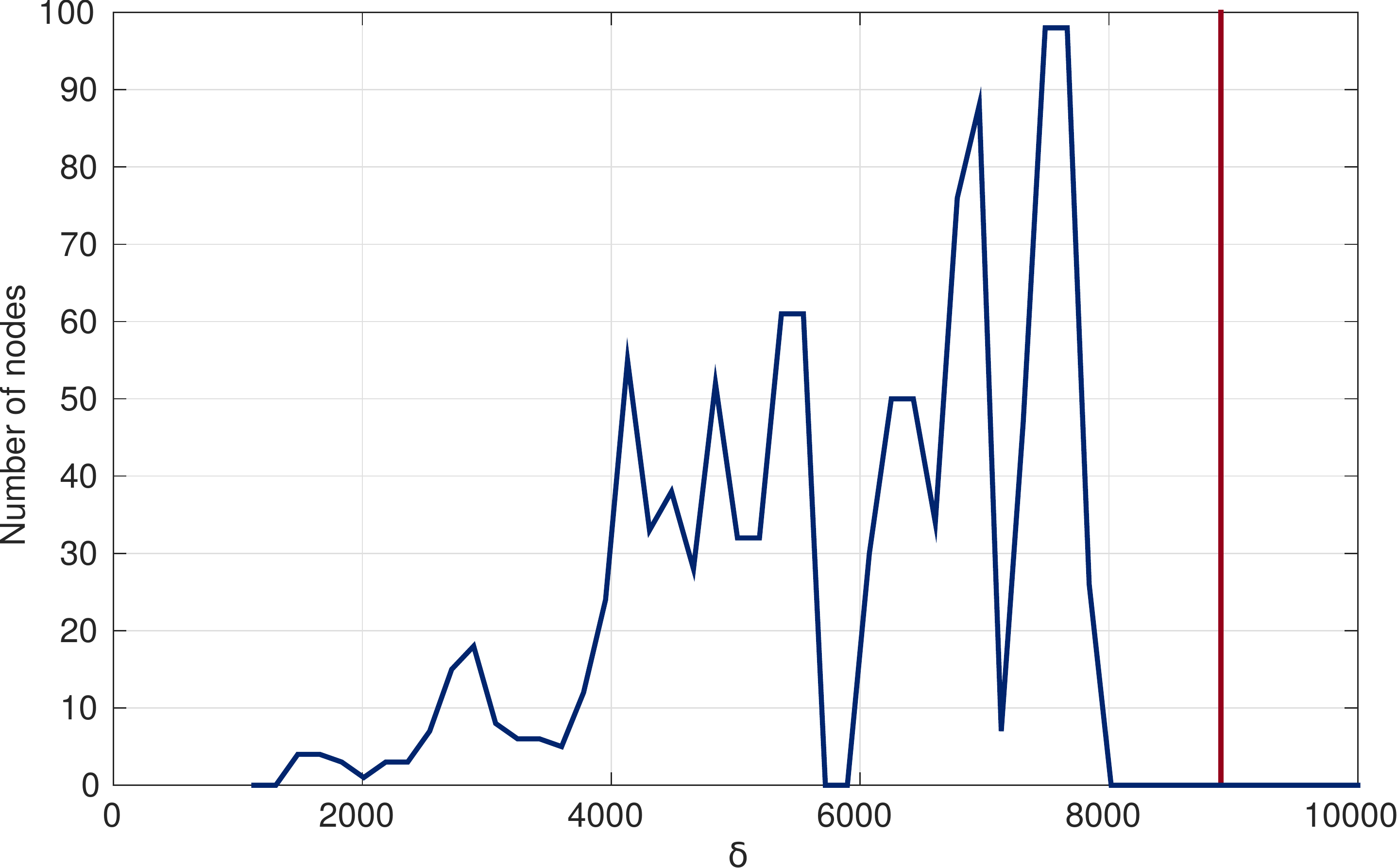}
		\caption{Digits \texttt{0}, \texttt{1}, \texttt{2}, \texttt{7}: overlapping function}
		\label{mnist_0127-ol}
	\end{subfigure}
	
	\vspace{0.5cm}
	
	\begin{subfigure}{0.66\columnwidth}
		\centering
		\includegraphics[width=0.9\columnwidth]{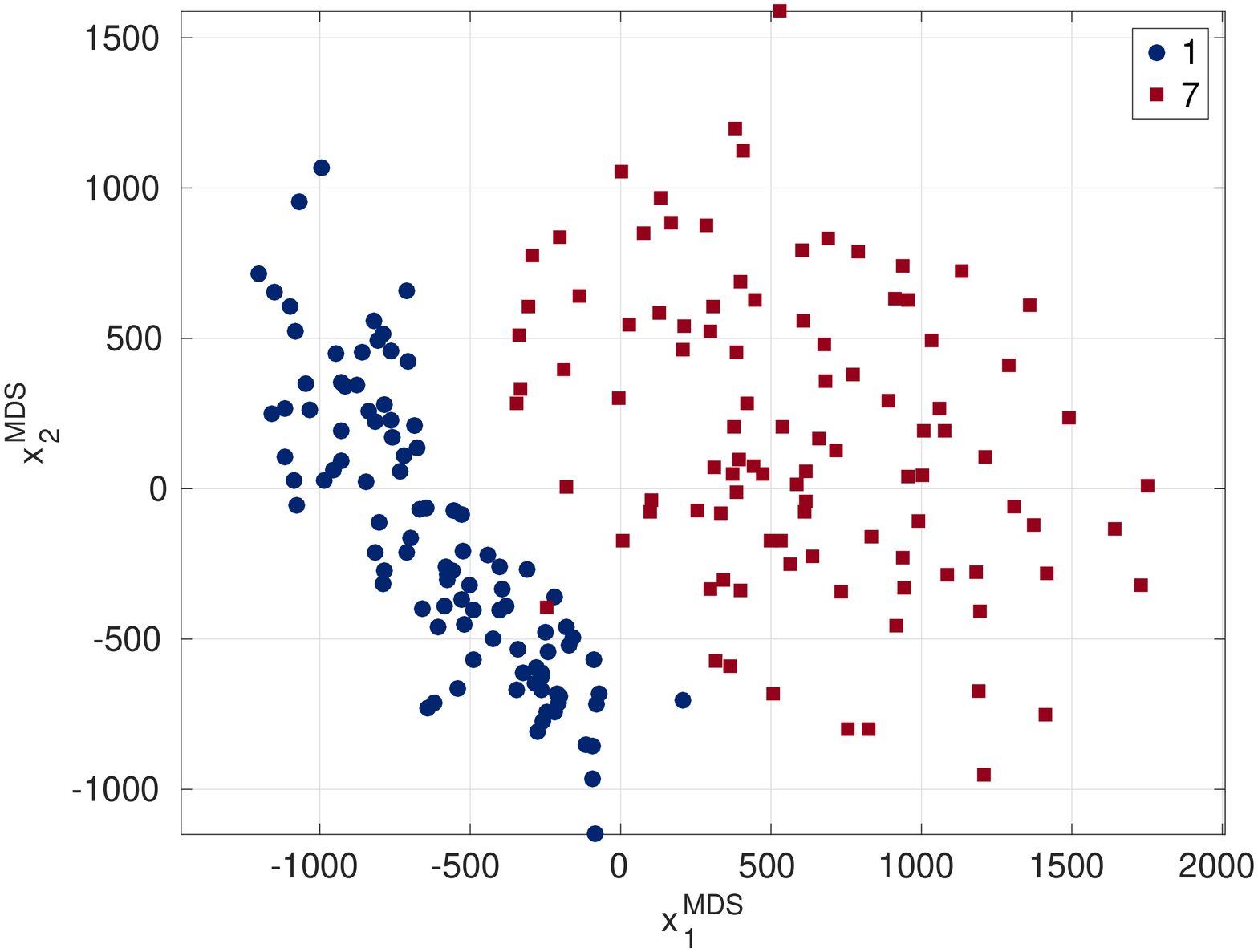}
		\caption{Digits \texttt{1}, \texttt{7}: MDS}
		\label{mnist_17-mds}
	\end{subfigure}
	\hfill
	\begin{subfigure}{0.66\columnwidth}
		\centering
		\includegraphics[width=0.9\columnwidth]{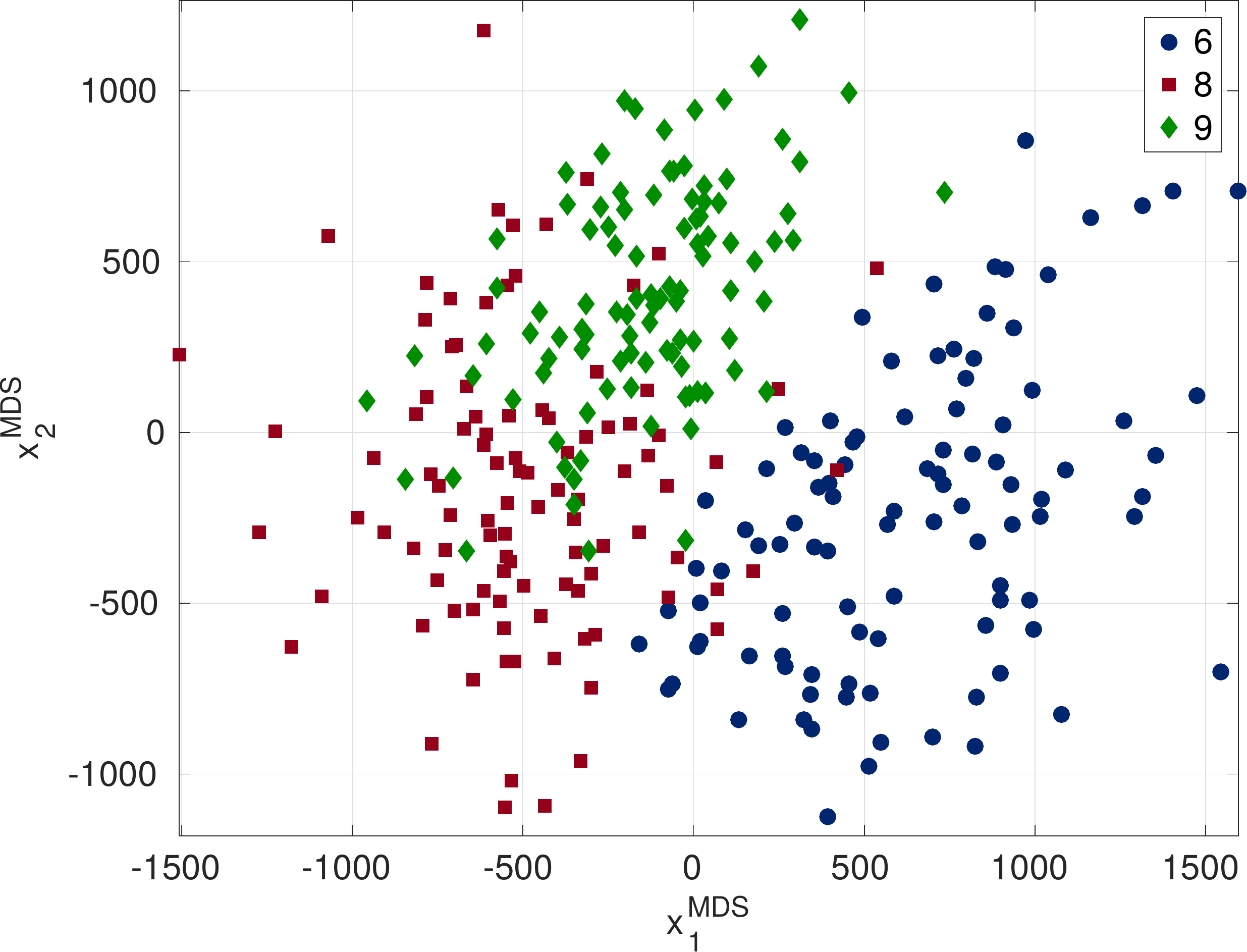}
		\caption{Digits \texttt{6}, \texttt{8}, \texttt{9}: MDS}
		\label{mnist_689-mds}
	\end{subfigure}
	\hfill
	\begin{subfigure}{0.66\columnwidth}
		\centering
		\includegraphics[width=0.9\columnwidth]{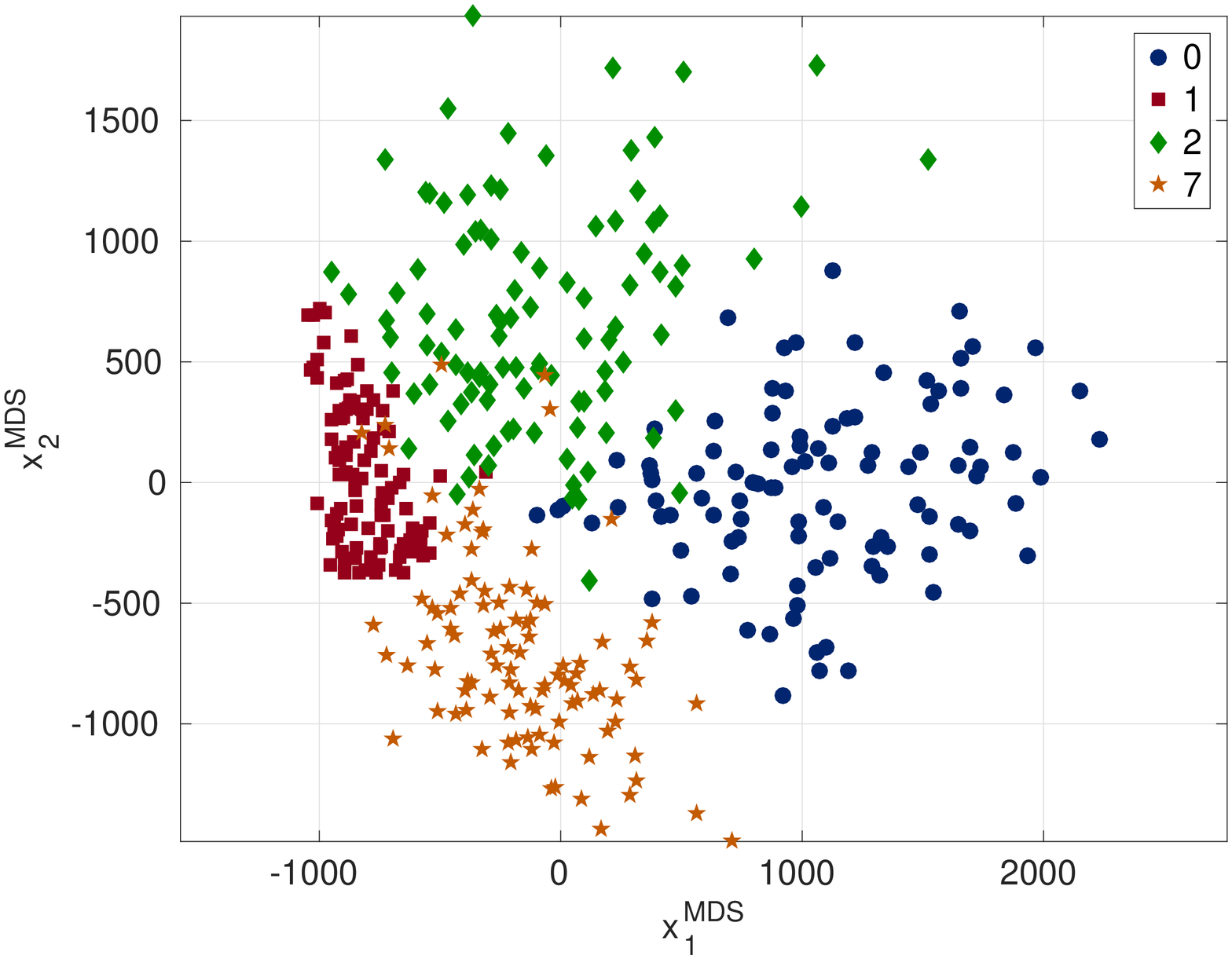}
		\caption{Digits \texttt{0}, \texttt{1}, \texttt{2}, \texttt{7}: MDS}
		\label{mnist_0127-mds}
	\end{subfigure}
	
	\vspace{0.5cm}
	
	\begin{subfigure}{0.66\columnwidth}
		\centering
		\includegraphics[width=0.9\columnwidth]{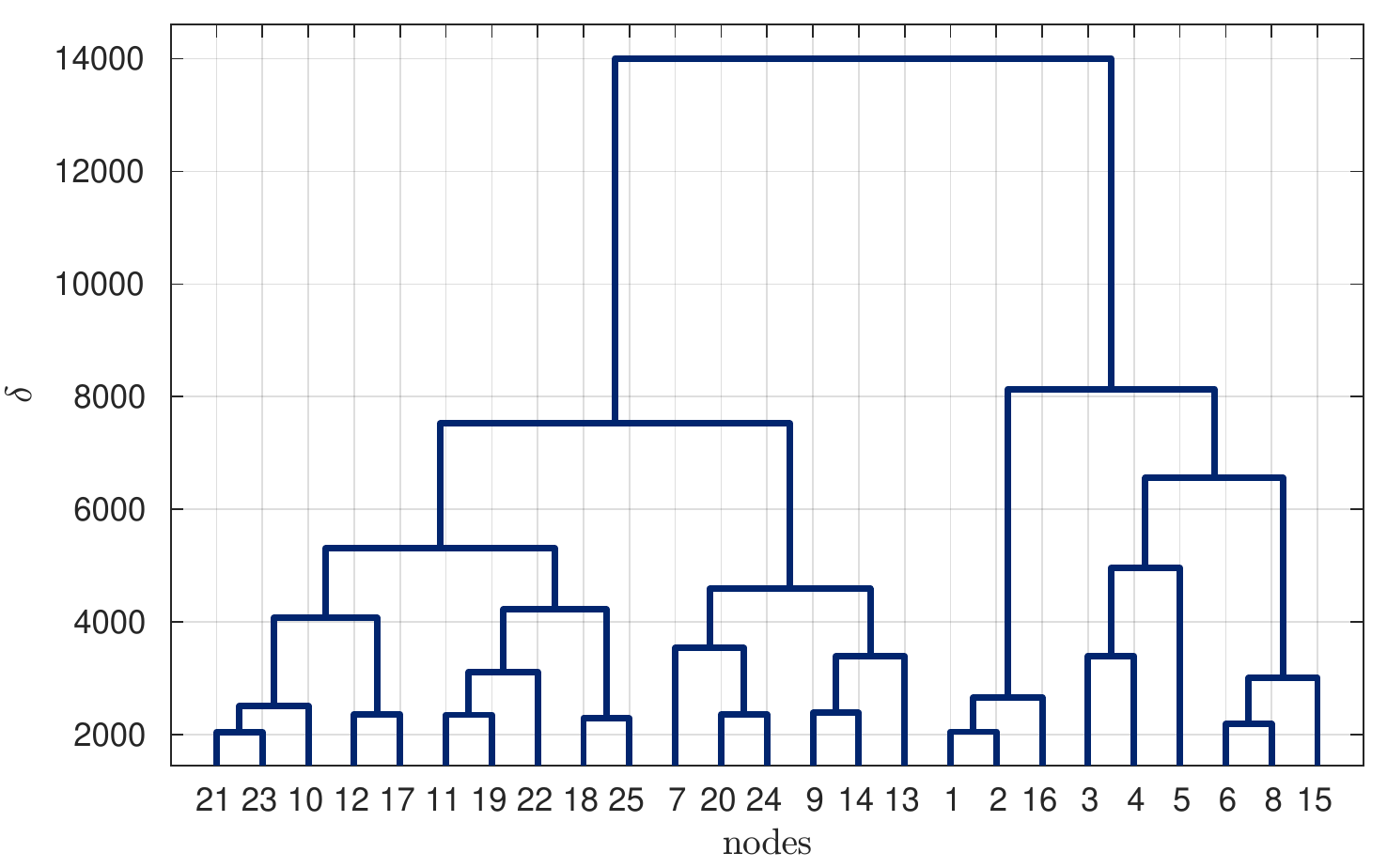}
		\caption{Digits \texttt{1}, \texttt{7}: dendrogram}
		\label{mnist_17-dendro}
	\end{subfigure}
	\hfill
	\begin{subfigure}{0.66\columnwidth}
		\centering
		\includegraphics[width=0.9\columnwidth]{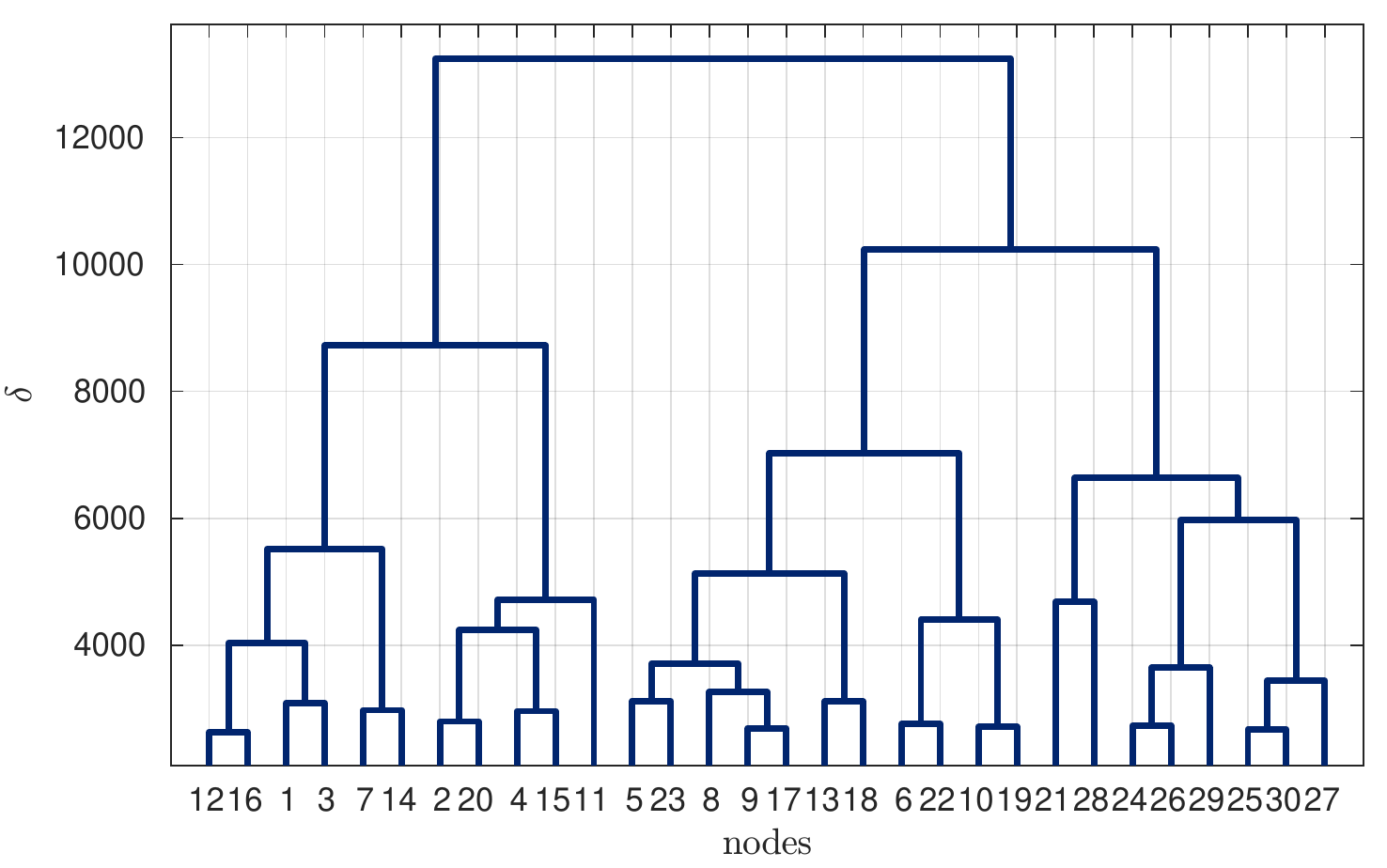}
		\caption{Digits \texttt{6}, \texttt{8}, \texttt{9}: dendrogram}
		\label{mnist_689-dendro}
	\end{subfigure}
	\hfill
	\begin{subfigure}{0.66\columnwidth}
		\centering
		\includegraphics[width=0.9\columnwidth]{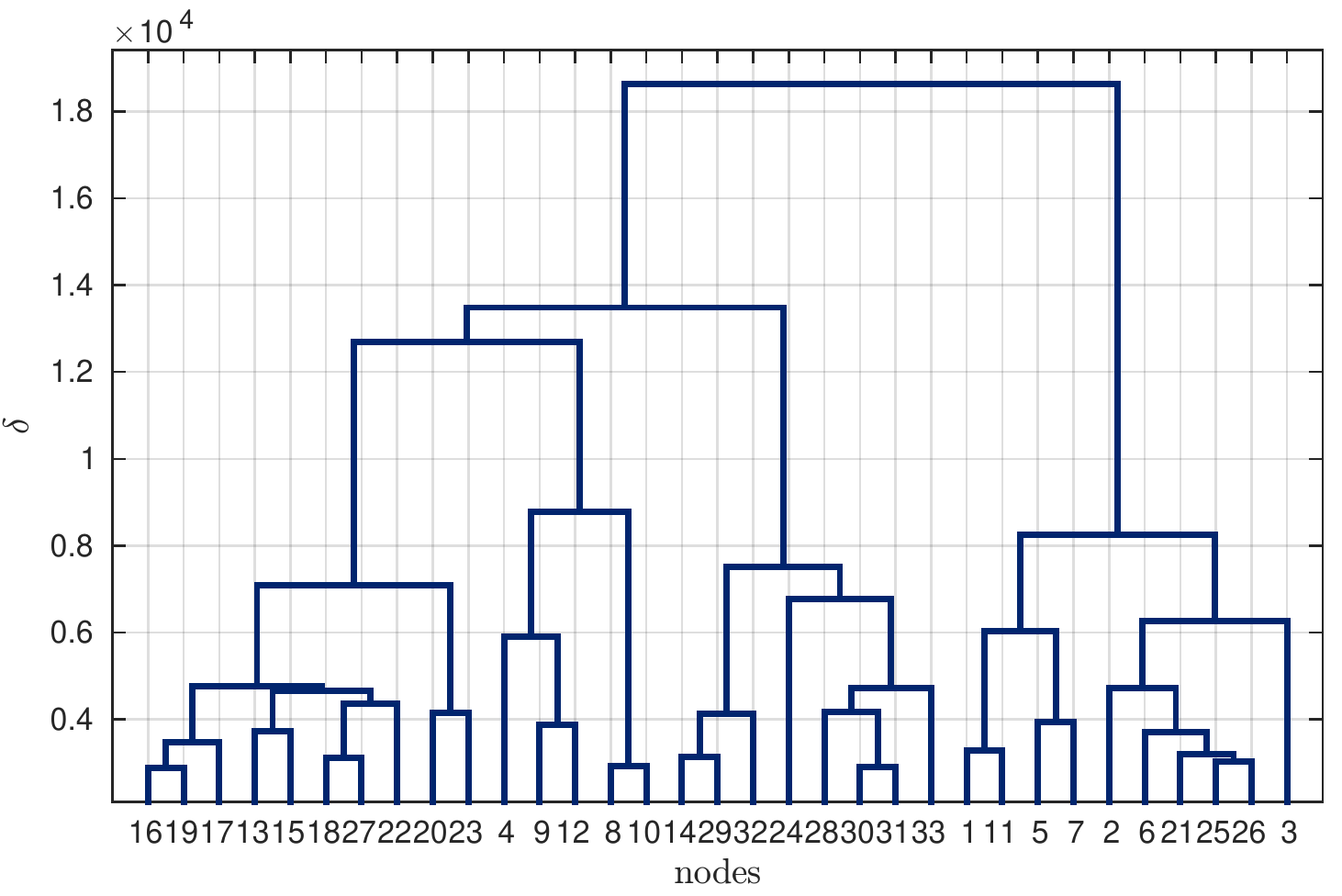}
		\caption{Digits \texttt{0}, \texttt{1}, \texttt{2}, \texttt{7}: dendrogram}
		\label{mnist_0127-dendro}
	\end{subfigure}
	\caption{Classification of handwritten digits. \subref{digits-1-7}-\subref{digits-0-1-2-7} Samples of images of the digits for each one of the three classification problems: \subref{digits-1-7} digits \texttt{1} and \texttt{7}, \subref{digits-6-8-9} digits \texttt{6}, \texttt{8}, and \texttt{9}, \subref{digits-0-1-2-7} digits \texttt{0}, \texttt{1}, \texttt{2}, and \texttt{7}. The first row shows samples of numbers that are typically classified correctly by the proposed method whereas the second row presents numbers that are hard to classify. \subref{mnist_17-ol}-\subref{mnist_0127-ol} Overlapping functions for each of the classification problems considered. \subref{mnist_17-mds}-\subref{mnist_0127-mds} Multidimensional Scaling (MDS) representation in $2$ dimensions of the points in the $20$-PCA domain. \subref{mnist_17-dendro}-\subref{mnist_0127-dendro} Simplified dendrograms resulting from applying Ward's linkage to each of the three classification problems.}
	\label{fig:mnist}
\end{figure*}

\section{Applications}
	\label{sec:apps}

We consider two real-world classification problems: determining handwritten digits from the MNIST image database (Section~\ref{subsec:mnist}), and identifying the authors and co-authors of famous plays (Section~\ref{subsec:author}).

\begin{figure*}
	\centering
	\begin{subfigure}{0.475\columnwidth}
		\centering
		\vspace{0.9cm}
		\includegraphics[width=0.9\columnwidth]{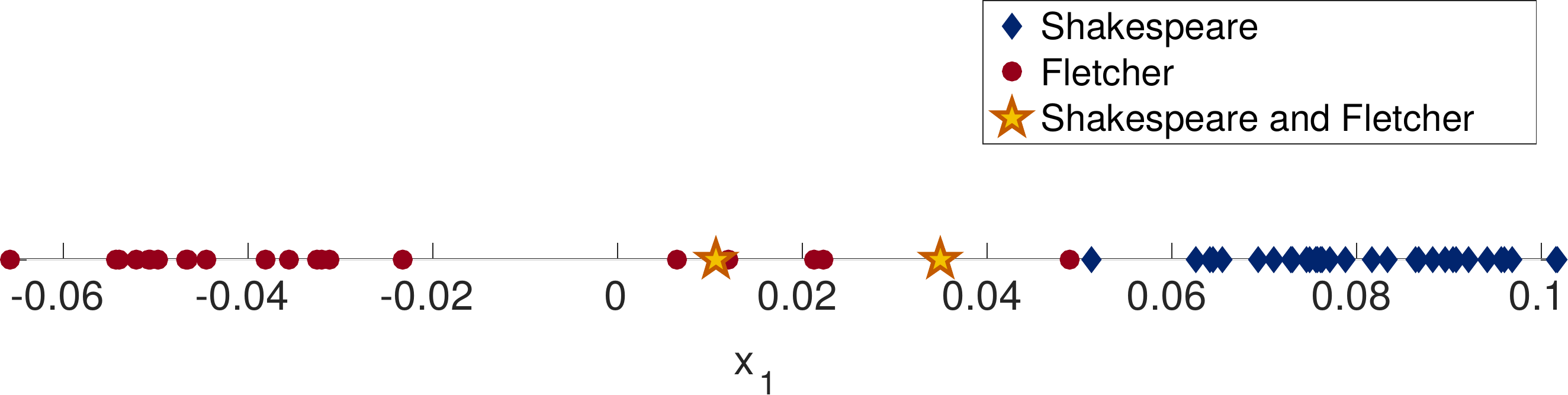}
		\vspace{0.9cm}
		\caption{SF: nodes}
		\label{s-f-mds}
	\end{subfigure}
	\hfill
	\begin{subfigure}{0.475\columnwidth}
		\centering
		\begin{subfigure}{\columnwidth}
			\centering
			\includegraphics[width=0.9\columnwidth]{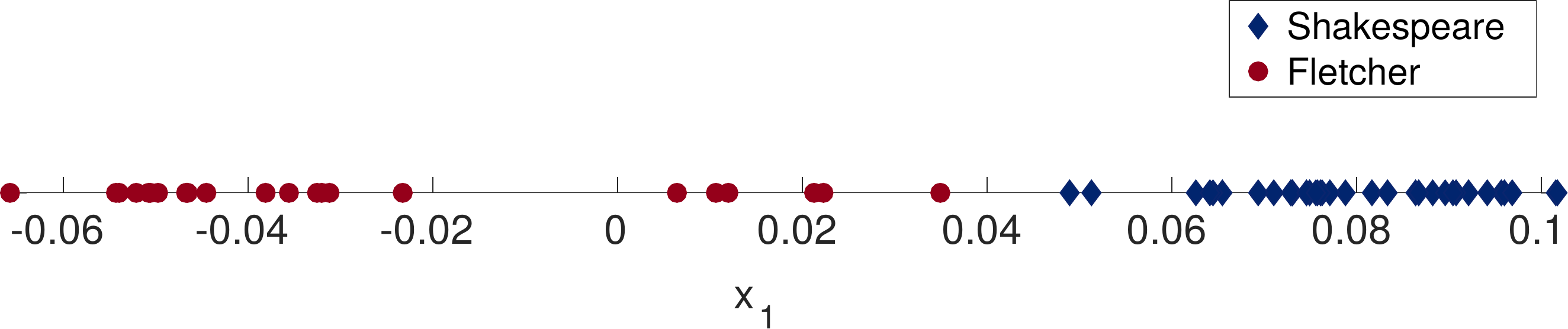}
			\caption{SF: covers for $\delta=0.2005$}
			\label{s-f-cover-02}
		\end{subfigure}
		\begin{subfigure}{\columnwidth}
			\centering
			\includegraphics[width=0.9\columnwidth]{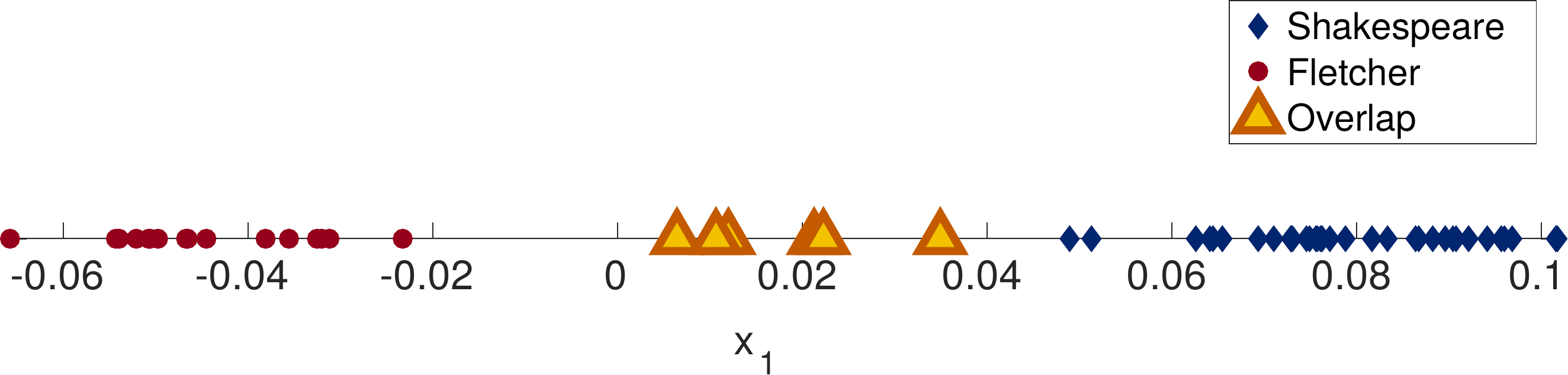}
			\caption{SF: covers for $\delta=0.5459$}
			\label{s-f-cover-01}
		\end{subfigure}
	\end{subfigure}
	\hfill
	\begin{subfigure}{0.475\columnwidth}
		\centering
		\includegraphics[width=0.9\columnwidth]{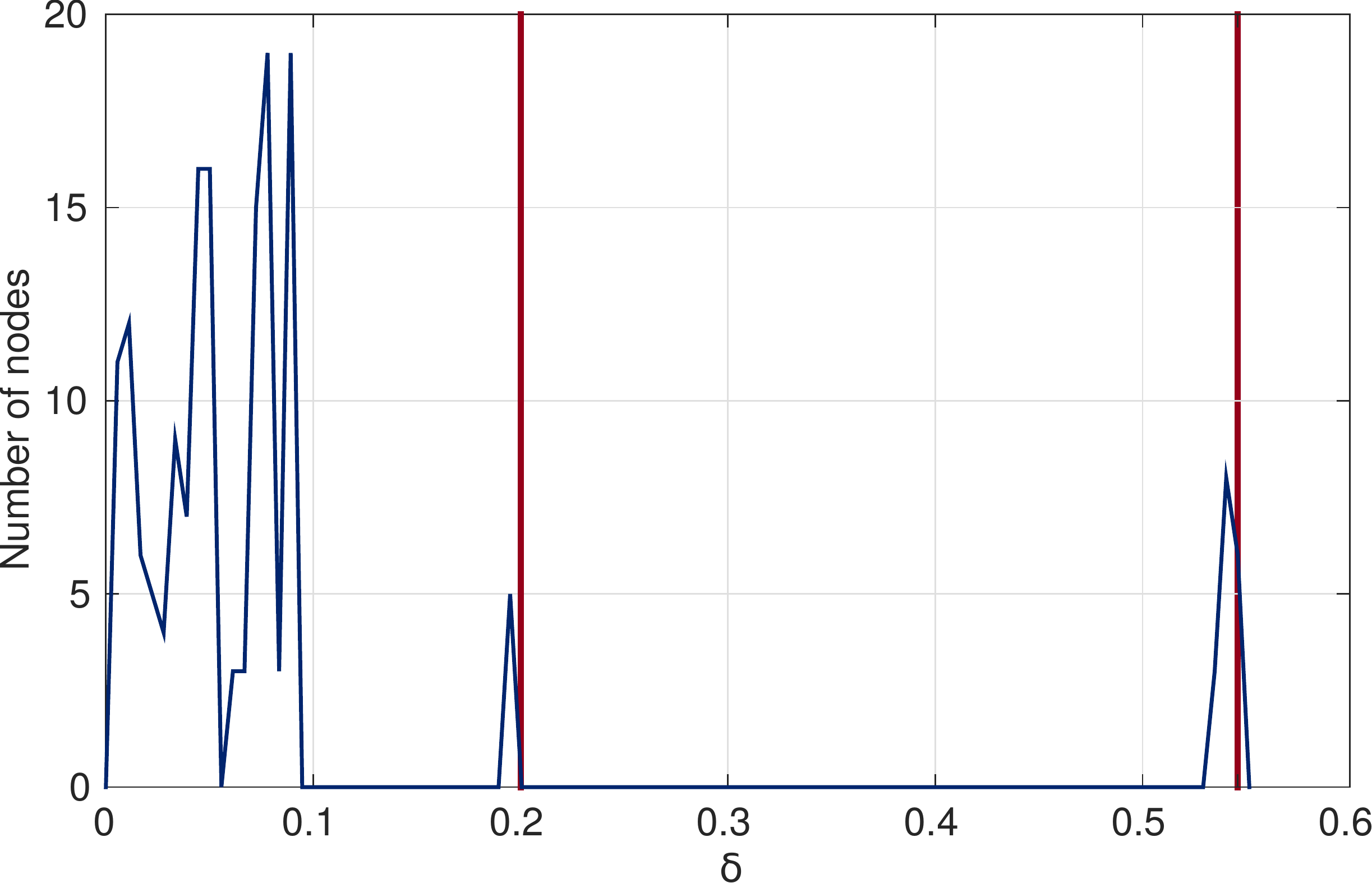}
		\caption{SF: overlapping function}
		\label{s-f-ol}
	\end{subfigure}
	\hfill
	\begin{subfigure}{0.475\columnwidth}
		\centering
		\includegraphics[width=0.9\columnwidth]{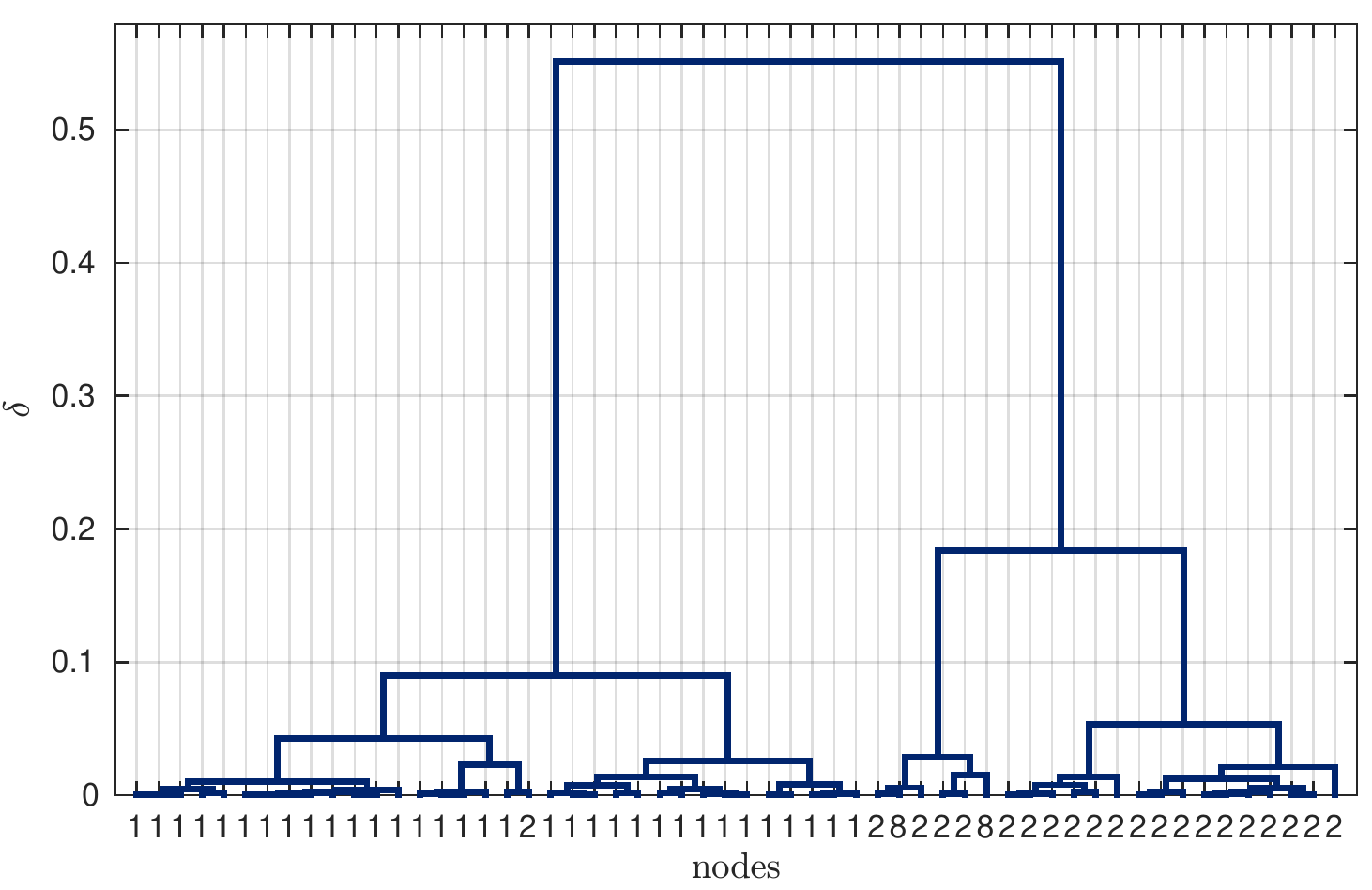}
		\caption{SF: dendrogram}
		\label{s-f-dendro}
	\end{subfigure}
	
	\vspace{0.5cm}
	
	\begin{subfigure}{0.475\columnwidth}
		\centering
		\includegraphics[width=0.9\columnwidth]{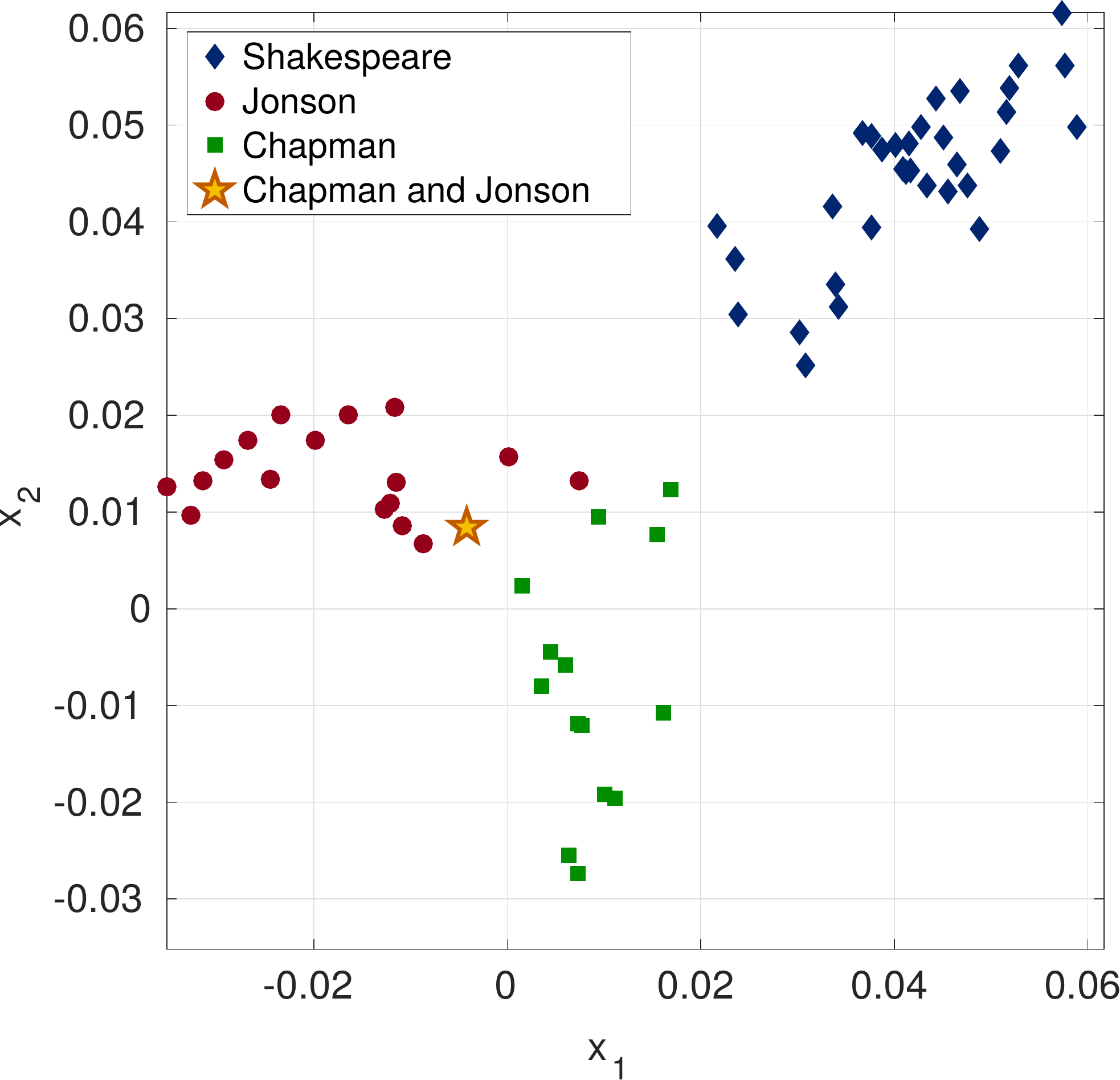}
		\caption{SCJ: nodes}
		\label{s-c-j-mds}
	\end{subfigure}
	\hfill
	\begin{subfigure}{0.475\columnwidth}
		\centering
		\includegraphics[width=0.9\columnwidth]{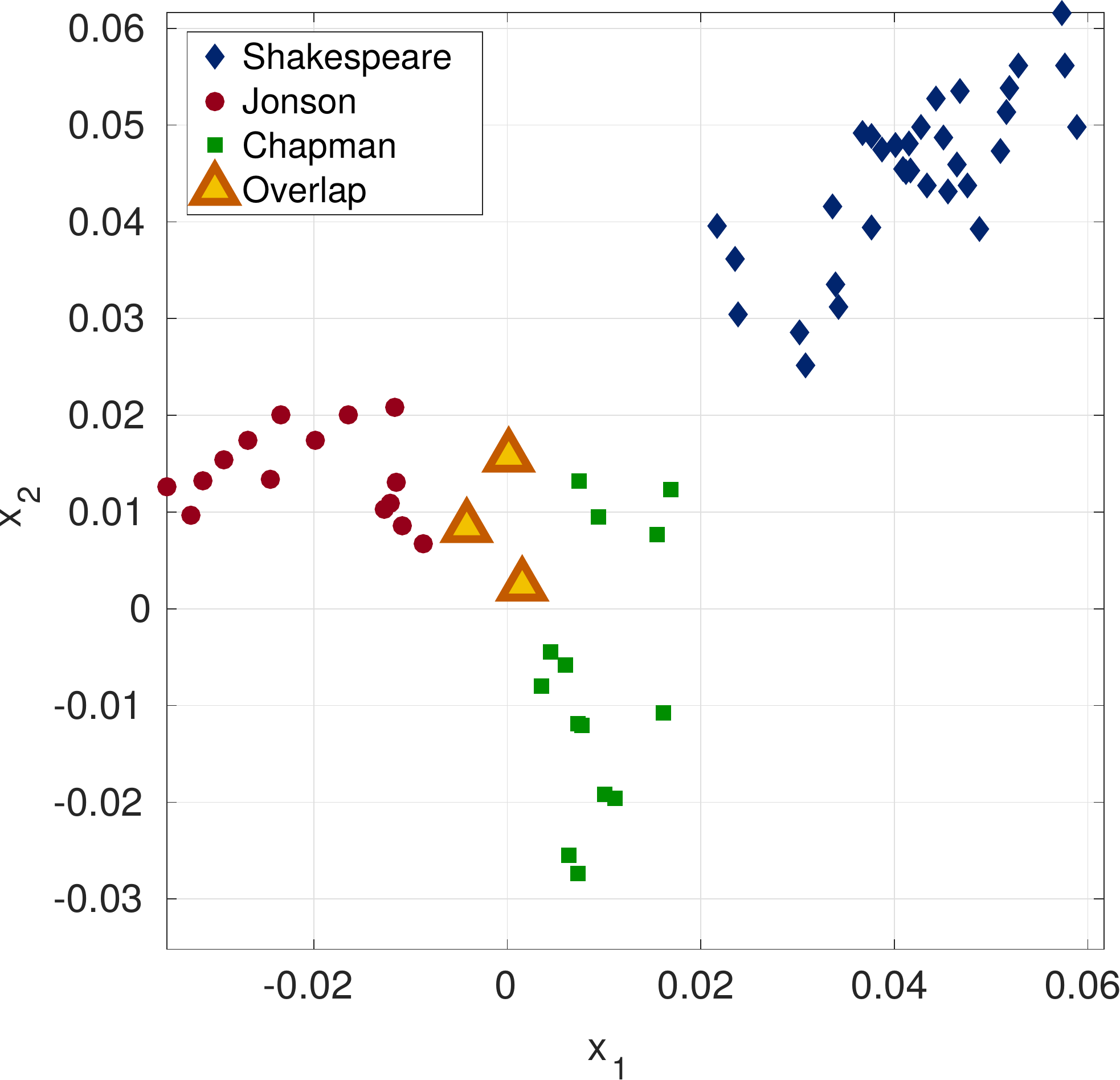}
		\caption{SCJ: covers for $\delta=0.0336$}
		\label{s-c-j-cover}
	\end{subfigure}
	\hfill
	\begin{subfigure}{0.475\columnwidth}
		\centering
		\vspace{0.6cm}
		\includegraphics[width=0.9\columnwidth]{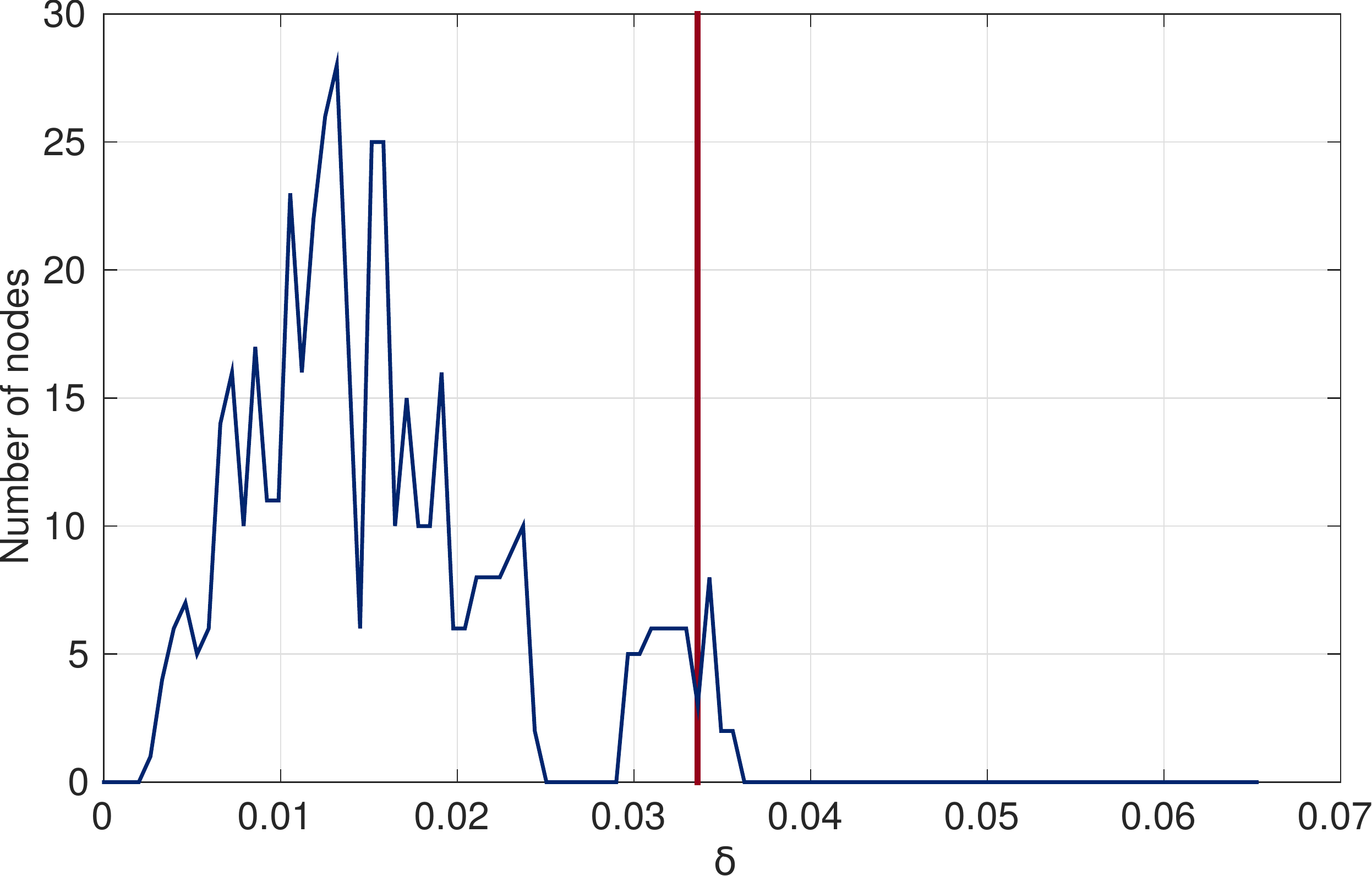}
		\vspace{0.6cm}
		\caption{SCJ: overlapping function}
		\label{s-c-j-ol}
	\end{subfigure}
	\hfill
	\begin{subfigure}{0.475\columnwidth}
		\centering
		\vspace{0.6cm}
		\includegraphics[width=0.9\columnwidth]{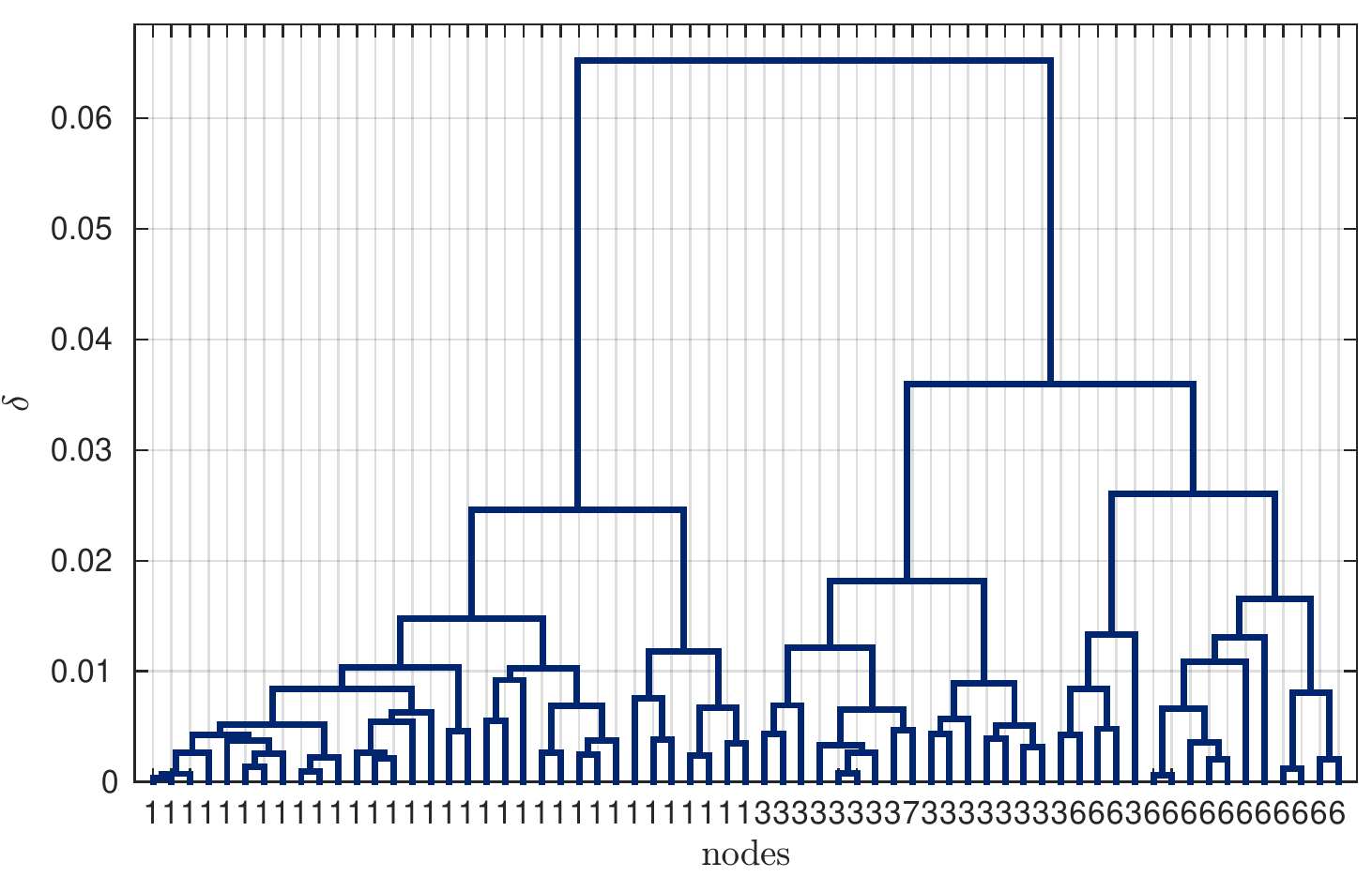}
		\vspace{0.6cm}
		\caption{SCJ: dendrogram}
		\label{s-c-j-dendro}
	\end{subfigure}
	\caption{Authorship attribution. \subref{s-f-mds}-\subref{s-f-dendro} Classifying plays from Shakespeare and Fletcher (SF). \subref{s-f-mds} Location of plays with correct labeling. \subref{s-f-cover-02} Covers obtained for the proposed algorithm at resolution $\delta = 0.1968$ and for Ward's linkage at resolution $\delta=0.1885$ showing no overlap. \subref{s-f-cover-01} Covers for $\delta=0.5460$ obtained by the proposed algorithm present overlap between Shakespeare and Fletcher. \subref{s-f-ol} Overlapping function showing the two resolutions of interest. \subref{s-f-dendro} Dendrogram resulting from applying Ward's linkage in which label $1$ represent Shakespeare's plays, label $2$ are Fletcher's plays and $8$ are the co-authored plays. \subref{s-c-j-mds}-\subref{s-c-j-dendro} Classifying plays from Shakespeare, Chapman and Jonson (SCJ). \subref{s-c-j-mds} Location of plays with correct labeling. \subref{s-c-j-cover} Covers obtained for the proposed method at resolution $\delta=0.0331$ showing overlap between Chapman and Jonson. \subref{s-c-j-ol} Overlapping function highlighting the local minimum of interest. \subref{s-c-j-dendro} Dendrogram obtained from applying UPGMA's linkage in which labels $1$, $3$ and $6$ correspond to plays authored solely by Shakespeare, Jonson and Chapman, respectively and where label $7$ is for the co-authored play by Chapman and Jonson.}
	\label{fig:authorship}
\end{figure*}

\subsection{Handwritten digit classification}
	\label{subsec:mnist}

The proposed hierarchical overlapping clustering algorithm is implemented to classify digits of the MNIST database \cite{mnist}. The database consists of black and white images $\{\bbG_{i}\}_{i \in \ccalI}$ of size $28 \times 28$ pixels.

In order to construct a network from the images, we first reshape each image $\bbG_{i}$ into a column vector $\bby_{i}$ of size $784$ and then apply a principal component analysis (PCA) transform \cite{pca} to each vector. For computing the PCA, $5,000$ training samples of each digit are used to estimate the mean $\hat{\bbmu}$ and the covariance matrix $\hat{\bbC}$. The resulting transformed vectors are denoted $\bby^{\textrm{PCA}}_{i}$ for $i \in \ccalI$. A total of $n$ images are considered for classification, and only the first $20$ components of the associated transformed vectors $\bby^{\textrm{PCA}}_{i}$ are used while the rest are discarded. The reduced vectors $\bbx_{i} \in \reals^{20}$ for $i=1,\ldots,n$ constitute the node set $X=\{\bbx_{1},\ldots,\bbx_{n}\}$ of our network. Each element of the (symmetric) dissimilarity function is computed as the Euclidean distance between vectors in the $20$-PCA space, i.e., $A_{X }(\bbx_{i},\bbx_{j})=\|\bbx_{i}-\bbx_{j}\|_{2}$.

In the application of Algorithm \ref{a:overlapping-clustering}, $J=100$ noise realizations are carried out for the dithering step in all simulations. The standard deviation of the noise $\sigma = k \, \min_{\bbx_i \neq \bbx_j} A_{X }(\bbx_i, \bbx_j)$ is computed as a factor $k$ of the minimum distance between nodes. The hierarchical (non-overlapping) clustering method $\ccalH$ used to obtain the ultrametrics is Ward's clustering method \cite{ward63}. We also include the dendrograms obtained from applying $\ccalH$ directly to the network. It is observed that the proposed method based on cut metrics outputs the same clusters as the hierarchical (non-overlapping) method since the MNIST handwritten digit database is a clusterable set (i.e., each image is associated to only one digit).

\subsubsection{Two-digit classification. Digits \texttt{1} and \texttt{7}}

First, the performance of the algorithm is analyzed for classification of just two digits: \texttt{1} and \texttt{7}. The total number of images considered is $n=200$ counting $100$ images of each digit, selected at random from the test set of the MNIST database. Images of the digits \texttt{1} and \texttt{7} from this database are illustrated in Fig.~\ref{digits-1-7}, where the top row shows typical images that are correctly classified when using the proposed method, and the bottom row shows numbers that are often confused.

The overlapping function obtained is displayed in Fig. \ref{mnist_17-ol} for $k = 0.01$ in the computation of $\sigma$. For $\delta=8656$, the covering $K_{X }(\delta)=\{C_{1},C_{2}\}$ with $C_{1}=\{\texttt{1} (\times 100), \texttt{7}\}$ and $C_{2}=\{\texttt{7} (\times 99)\}$ is obtained. Same result is obtained when applying the Ward's linkage method at resolution $\delta=8234$. In this case, the proposed algorithm makes one classification mistake, hence, the error rate is $0.5\%$ ($1$ misclassified image out of $200$). To help illustrate this example, a non-metric Multidimensional Scaling (MDS) \cite{mds} is performed on the vectors in $\reals^{20}$ to be able to depict them on $\reals^2$. This representation is shown in Fig. \ref{mnist_17-mds}. Notice that points representing images of different digits are clearly separated, except for one instance of $\texttt{7}$ that is projected on top of the points corresponding to digit $\texttt{1}$.

\subsubsection{Three-digit classification. Digits \texttt{6}, \texttt{8} and \texttt{9}}

As done before, $100$ images are chosen at random from the test dataset for each digit, resulting in $n=300$ images. As can be observed from Fig.~\ref{digits-6-8-9} that contains images of the numbers being classified, digits \texttt{8} and \texttt{9} are much harder to distinguish than digit \texttt{6}. The overlapping function is portrayed in Fig. \ref{mnist_689-ol} for $k=0.001$. The covering $K_{X }(\delta)=\{C_{1},C_{2},C_{3}\}$ for $\delta=9247$ yields an error rate of $2.67\%$ since $C_{1}=\{\texttt{6} (\times 100)\}$, $C_{2}=\{\texttt{8} (\times 95), \texttt{9} (\times 3)\}$ and $C_{3}=\{\texttt{8} (\times 5), \texttt{9} (\times 97)\}$. Application of Ward's linkage at resolution $8854$ yields the same three clusters. The MDS representation is depicted in Fig. \ref{mnist_689-mds}, confirming that the major concern is separating \texttt{8} and \texttt{9} since these two digits occupy approximately half of the plane and \texttt{6} occupies the other half.

\subsubsection{Four-digit classification. Digits \texttt{0}, \texttt{1}, \texttt{2} and \texttt{7}}

We again consider $100$ images per digit totalizing $n=400$ images randomly selected from the test dataset. Examples of these digits are in Fig.~\ref{digits-0-1-2-7} and the plot of the overlapping function can be found in Fig. \ref{mnist_0127-ol} for $k=0.005$. For $\delta=8981$ the resulting covering is $K_{X }(\delta)=\{C_{1},C_{2},C_{3},C_{4}\}$ with $C_{1}=\{\texttt{0} (\times 100), \texttt{2} (\times 3)\}$, $C_{2}=\{ \texttt{1} (\times 99), \texttt{7} (\times 2) \}$, $C_{3}=\{\texttt{1}, \texttt{2} (\times 86), \texttt{7} (\times 3)\}$, and $C_{4} = \{ \texttt{2} (\times 11), \texttt{7} (\times 95) \}$, resulting in an error rate of $5\%$. Same results are obtained after applying Ward's linkage at resolution $\delta=8887$. The MDS representation is shown in Fig. \ref{mnist_0127-mds} where we see that \texttt{2} is as sparsely located as \texttt{0} but presents larger overlap with the clouds corresponding to \texttt{1} and \texttt{7}. This constitutes the main source of error as can be seen by inspecting blocks $C_{3}$ and~$C_{4}$.

To wrap up this first application it is worth pointing out that, while both the proposed hierarchical overlapping clustering method using cut metrics as well as the hierarchical (non-overlapping) clustering method using ultrametrics yield the same coverings, in the former the overlapping function was the key tool used to determine which resolutions yield reasonable results. In other words, the zeros of the overlapping function were used to select adequate resolution values to obtain the coverings for the hierarchical overlapping clustering method.

\subsection{Authorship Attribution}
	\label{subsec:author}

We address the problem of authorship attribution \cite{segarra15}, where our objective is to attribute a given play to its rightful author. The authors considered in this experiment are William Shakespeare, John Fletcher, Ben Jonson, and George Chapman. We consider $33$ plays written by Shakespeare, $21$ by Fletcher, $17$ by Jonson, $14$ by Chapman, as well as $2$ plays co-authored by Shakespeare and Fletcher and $1$ co-authored by Jonson and Chapman. Following the procedure described in \cite{segarra15} based on word adjacency networks, for each play we are able to obtain a dissimilarity measure to the four authors of interest. Denote by $y_{ij}$ the dissimilarity of play $i$ to author $j$. The index $j$ represents the initial of each author, $j \in \{S,F,C,J\}$. To be more specific, if $y_{ij}$ is small, then play $i$ follows a function-word structure that resembles the one typically used by author $j$; see \cite{segarra15} for details. In what follows, the node set $X$ is comprised of the subset of plays corresponding to the authors under study. Our goal is to cluster together plays written by the same author, as well as identifying co-authored plays in the overlap between author-specific clusters. By comparing with hierarchical non-overlapping clustering, we observe that our method successfully identifies the co-authored plays as being part of overlapping clusters, while for clusterable resolutions, using cut metrics amounts to the same result as using ultrametrics.

\begin{figure*}
	\centering
	\begin{subfigure}{0.99\columnwidth}
		\centering

\def \thisplotscale {0.42}
\def \unit {\thisplotscale cm}

\def \yheight{0.75}
\def \xdisplaced{0}

\tikzstyle{blue vertex} = [ellipse, 
                   inner sep=0pt, 
                   fill=pennblue!30,
                   draw=black,
                   anchor = center,
                   minimum height = 1.75*\unit, 
                   minimum width  = 1.75*\unit]

\tikzstyle{green vertex} = [ellipse, 
                   inner sep=0pt, 
                   fill=penngreen!30,
                   draw=black,
                   anchor = center,
                   minimum height = 1.75*\unit, 
                   minimum width  = 1.75*\unit]

\tikzstyle{red vertex} = [ellipse, 
                   inner sep=0pt, 
                   fill=pennred!30,
                   draw=black,
                   anchor = center,
                   minimum height = 1.75*\unit, 
                   minimum width  = 1.75*\unit]

\def\xpos{1.5}
\def\ypos{1.5}
\def\xlbl{1.75}
\def\ylbl{1.75}

{\tiny
\begin{tikzpicture}[sloped, bend left=10, -stealth, shorten >=2, scale = \thisplotscale]

{\small
	
	\path(-3.604*\xpos,1.736*\ypos) node[blue vertex] (s1) {$x_{1}$} ++ (-\xlbl,\ylbl) node {Cymbeline};
	\path(-0.89*\xpos,3.9*\ypos) node[blue vertex] (s2) {$x_{2}$} ++ (0,\ylbl) node {Othello};
	\path(2.494*\xpos,3.127*\ypos) node[blue vertex] (s3) {$x_{3}$} ++ (0,\ylbl) node {Coriolanus};
	
	\path(4*\xpos,0) node[green vertex] (sm) {$x_{4}$} ++ (\xlbl,\ylbl) node {Edward III};
	
	\path(2.494*\xpos,-3.127*\ypos) node[red vertex] (m1) {$x_{5}$} ++ (0,-\ylbl) node {Tamburlaine};
	\path(-0.89*\xpos,-3.9*\ypos) node[red vertex] (m2) {$x_{6}$} ++ (0,-\ylbl) node {Edward II};
	\path(-3.604*\xpos,-1.736*\ypos) node[red vertex] (m3) {$x_{7}$} ++ (-\xlbl,-\ylbl) node {Jew of Malta};
	
}

	\path (s3) edge [midway, above] node {$0.19$} (s1);
	\path (s1) edge [midway, above] node {$0.19$} (s3);
	
	\path (s3) edge [midway, above] node {$0.18$} (s2);
	\path (s2) edge [midway, above] node {$0.21$} (s3);
	
	\path (s3) edge [midway, above] node {$0.17$} (m2);
	\path (m2) edge [midway, above] node {$0.26$} (s3);
	
	\path (s3) edge [pos=0.45, above] node {$0.16$} (m3);
	\path (m3) edge [midway, above] node {$0.25$} (s3);
	
	\path (s3) edge [midway, above] node {$0.19$} (m1);
	\path (m1) edge [midway, above] node {$0.26$} (s3);
	
	\path (s3) edge [midway, above] node {$0.19$} (sm);
	\path (sm) edge [midway, above] node {$0.22$} (s3);

	\path (s1) edge [midway, above] node {$0.18$} (s2);
	\path (s2) edge [midway, above] node {$0.20$} (s1);
	
	\path (s1) edge [midway, above] node {$0.19$} (m2);
	\path (m2) edge [midway, above] node {$0.27$} (s1);
	
	\path (s1) edge [midway, above] node {$0.18$} (m3);
	\path (m3) edge [midway, above] node {$0.26$} (s1);
	
	\path (s1) edge [pos=0.4, above] node {$0.20$} (m1);
	\path (m1) edge [pos=0.45, above] node {$0.30$} (s1);
	
	\path (s1) edge [midway, above] node {$0.19$} (sm);
	\path (sm) edge [pos=0.45, above] node {$0.21$} (s1);

	\path (s2) edge [pos=0.3, above] node {$0.19$} (m2);
	\path (m2) edge [midway, above] node {$0.26$} (s2);
	
	\path (s2) edge [midway, above] node {$0.19$} (m3);
	\path (m3) edge [midway, above] node {$0.24$} (s2);
	
	\path (s2) edge [midway, above] node {$0.23$} (m1);
	\path (m1) edge [pos=0.45, above] node {$0.29$} (s2);
	
	\path (s2) edge [midway, above] node {$0.22$} (sm);
	\path (sm) edge [midway, above] node {$0.23$} (s2);

	\path (m2) edge [midway, above] node {$0.20$} (m3);
	\path (m3) edge [midway, above] node {$0.18$} (m2);
	
	\path (m2) edge [midway, above] node {$0.20$} (m1);
	\path (m1) edge [midway, above] node {$0.20$} (m2);
	
	\path (m2) edge [midway, above] node {$0.22$} (sm);
	\path (sm) edge [midway, above] node {$0.17$} (m2);

	\path (m3) edge [midway, above] node {$0.22$} (m1);
	\path (m1) edge [midway, above] node {$0.24$} (m3);
	
	\path (m3) edge [pos=0.35, above] node {$0.22$} (sm);
	\path (sm) edge [midway, above] node {$0.19$} (m3);

	\path (m1) edge [midway, above] node {$0.24$} (sm);
	\path (sm) edge [midway, above] node {$0.19$} (m1);

\end{tikzpicture}
}
		\caption{Shakespeare and Marlowe: Directed Network}
		\label{s-m-net}
	\end{subfigure}
	\hfill
	\begin{subfigure}{0.99\columnwidth}
		\begin{subfigure}{0.495\textwidth}
			\centering

\def \thisplotscale {3.8}
\def \unit {\thisplotscale cm}

\def \yheight{0.75}
\def \xdisplaced{0}

\tikzstyle{blue dot} = 
	[ellipse,
	 inner sep = 0pt,
	 fill = pennblue,
	 anchor = center,
	 minimum height = 0.05*\unit,
	 minimum width  = 0.05*\unit]

\tikzstyle{green dot} = 
	[ellipse,
	 inner sep = 0pt,
	 fill = penngreen,
	 anchor = center,
	 minimum height = 0.05*\unit,
	 minimum width  = 0.05*\unit]

\tikzstyle{red dot} = 
	[ellipse,
	 inner sep = 0pt,
	 fill = pennred,
	 anchor = center,
	 minimum height = 0.05*\unit,
	 minimum width  = 0.05*\unit]

\tikzstyle{cover} = 
	[rectangle, rounded corners,
	 inner sep = 0pt,
	 anchor = center]

\def\xpos{0.2}
\def\ypos{0.2}
\def\xlbl{0.07}
\def\ylbl{0.08}

{\small
\begin{tikzpicture}[scale = \thisplotscale]

	
	\path (-1.5*\xpos,-0.5*\ylbl) node [cover,minimum height = 0.6*\unit,
		minimum width = 0.35*\unit,
		opacity=0.6,
		fill=pennblue!30,
		draw=pennblue] (CS) {} ++ (-3*\xlbl,4.5*\ylbl) node {$C_{1}$};
	
	
	\path (0,-0.5*\ylbl) node [cover,minimum height = 0.2*\unit,
		minimum width = 0.2*\unit,
		opacity=0.6,
		fill=penngreen!30,
		draw=penngreen] (CSM) {} ++ (0,-2*\ylbl) node {$C_{3}$};

	
	\path (1.5*\xpos,-0.5*\ylbl) node [cover,minimum height = 0.6*\unit,
		minimum width = 0.35*\unit,
		opacity=0.6,
		fill=pennred!30,
		draw=pennred] (CM) {} ++ (2*\xlbl,4.5*\ylbl) node {$C_{2}$};
	
	
	\path (CS.north) edge[bend left=40, -stealth] (CM.north);
	\path (CS.north) edge[bend left=70, -stealth] (CSM.north);
	\path (0.15*\xpos,0.75*\ylbl) edge[bend left=40, -stealth] (0.63*\xpos,\ypos);

	\path (0,0) node [green dot] (sm) {} ++ (0,-\ylbl) node {$x_{4}$};
	
	\path (-\xpos,-\ypos) node [blue dot] (s1) {} ++ (0,-\ylbl) node {$x_{1}$};
	\path (-2*\xpos,0) node [blue dot] (s2) {} ++ (0,-\ylbl) node {$x_{2}$};
	\path (-\xpos,\ypos) node [blue dot] (s3) {} ++ (0,-\ylbl) node {$x_{3}$};
	
	\path (\xpos,\ypos) node [red dot] (m1) {} ++ (0,-\ylbl) node {$x_{5}$};
	\path (2*\xpos,0) node [red dot] (m2) {} ++ (0,-\ylbl) node {$x_{6}$};
	\path (\xpos,-\ypos) node [red dot] (m3) {} ++ (0,-\ylbl) node {$x_{7}$};

\end{tikzpicture}
}
			\caption{Quasi-UM: $\delta=0.2127$} 
			\label{s-m-qum-01}
		\end{subfigure}
		\hfill
		\begin{subfigure}{0.495\textwidth}
			\centering

\def \thisplotscale {3.8}
\def \unit {\thisplotscale cm}

\def \yheight{0.75}
\def \xdisplaced{0}

\tikzstyle{blue dot} = 
	[ellipse,
	 inner sep = 0pt,
	 fill = pennblue,
	 anchor = center,
	 minimum height = 0.05*\unit,
	 minimum width  = 0.05*\unit]

\tikzstyle{green dot} = 
	[ellipse,
	 inner sep = 0pt,
	 fill = penngreen,
	 anchor = center,
	 minimum height = 0.05*\unit,
	 minimum width  = 0.05*\unit]

\tikzstyle{red dot} = 
	[ellipse,
	 inner sep = 0pt,
	 fill = pennred,
	 anchor = center,
	 minimum height = 0.05*\unit,
	 minimum width  = 0.05*\unit]

\tikzstyle{cover} = 
	[rectangle, rounded corners,
	 inner sep = 0pt,
	 anchor = center]

\def\xpos{0.2}
\def\ypos{0.2}
\def\xlbl{0.07}
\def\ylbl{0.08}

{\small
\begin{tikzpicture}[scale = \thisplotscale]

	
	\path (-\xpos,-0.5*\ylbl) node [cover,minimum height = 0.6*\unit,
		minimum width = 0.55*\unit,
		opacity=0.6,
		fill=pennblue!30,
		draw=penngreen] (CS) {} ++ (-3*\xlbl,4.5*\ylbl) node {$C_{1}$};

	
	\path (1.5*\xpos,-0.5*\ylbl) node [cover,minimum height = 0.6*\unit,
		minimum width = 0.35*\unit,
		opacity=0.6,
		fill=pennred!30,
		draw=pennred] (CM) {} ++ (2*\xlbl,4.5*\ylbl) node {$C_{2}$};
	
	
	\path (CS.north) edge[bend left=40, -stealth] (CM.north);

	\path (0,0) node [green dot] (sm) {} ++ (0,-\ylbl) node {$x_{4}$};
	
	\path (-\xpos,-\ypos) node [blue dot] (s1) {} ++ (0,-\ylbl) node {$x_{1}$};
	\path (-2*\xpos,0) node [blue dot] (s2) {} ++ (0,-\ylbl) node {$x_{2}$};
	\path (-\xpos,\ypos) node [blue dot] (s3) {} ++ (0,-\ylbl) node {$x_{3}$};
	
	\path (\xpos,\ypos) node [red dot] (m1) {} ++ (0,-\ylbl) node {$x_{5}$};
	\path (2*\xpos,0) node [red dot] (m2) {} ++ (0,-\ylbl) node {$x_{6}$};
	\path (\xpos,-\ypos) node [red dot] (m3) {} ++ (0,-\ylbl) node {$x_{7}$};

\end{tikzpicture}
}
			\caption{Quasi-UM: $\delta=0.2138$} 
			\label{s-m-qum-02}
		\end{subfigure}
		
		\begin{subfigure}{0.495\textwidth}
			\centering

\def \thisplotscale {3.8}
\def \unit {\thisplotscale cm}

\def \yheight{0.75}
\def \xdisplaced{0}

\tikzstyle{blue dot} = 
	[ellipse,
	 inner sep = 0pt,
	 fill = pennblue,
	 anchor = center,
	 minimum height = 0.05*\unit,
	 minimum width  = 0.05*\unit]

\tikzstyle{green dot} = 
	[ellipse,
	 inner sep = 0pt,
	 fill = penngreen,
	 anchor = center,
	 minimum height = 0.05*\unit,
	 minimum width  = 0.05*\unit]

\tikzstyle{red dot} = 
	[ellipse,
	 inner sep = 0pt,
	 fill = pennred,
	 anchor = center,
	 minimum height = 0.05*\unit,
	 minimum width  = 0.05*\unit]

\tikzstyle{cover} = 
	[rectangle, rounded corners,
	 inner sep = 0pt,
	 anchor = center]

\def\xpos{0.2}
\def\ypos{0.2}
\def\xlbl{0.07}
\def\ylbl{0.08}

{\small
\begin{tikzpicture}[scale = \thisplotscale]

	
	\path (-\xpos,-0.5*\ylbl) node [cover,minimum height = 0.6*\unit,
		minimum width = 0.55*\unit,
		opacity=0.6,
		fill=pennblue!30,
		draw=penngreen] (CS) {} ++ (-3*\xlbl,4.5*\ylbl) node {$C_{1}$};

	
	\path (1.5*\xpos,-0.5*\ylbl) node [cover,minimum height = 0.6*\unit,
		minimum width = 0.35*\unit,
		opacity=0.6,
		fill=pennred!30,
		draw=pennred] (CM) {} ++ (2*\xlbl,4.5*\ylbl) node {$C_{2}$};
	
	
	\path (\xpos,-0.5*\ypos-0.5*\ylbl) node [cover,minimum height = 0.37*\unit,
		minimum width = 0.52*\unit,
		opacity=0.6,
		fill=pennorange!30,
		draw=pennorange] (CO) {} ++ (0.5*\xlbl,1.5*\ylbl) node {$C_{3}$};
	
	
	\path (CS.north) edge[bend left=40, -stealth] (CM.north);
	\path (CS.north) edge[bend left=50, -stealth] (0.5*\xpos,0.5*\ylbl);

	\path (0,0) node [green dot] (sm) {} ++ (0,-\ylbl) node {$x_{4}$};
	
	\path (-\xpos,-\ypos) node [blue dot] (s1) {} ++ (0,-\ylbl) node {$x_{1}$};
	\path (-2*\xpos,0) node [blue dot] (s2) {} ++ (0,-\ylbl) node {$x_{2}$};
	\path (-\xpos,\ypos) node [blue dot] (s3) {} ++ (0,-\ylbl) node {$x_{3}$};
	
	\path (\xpos,\ypos) node [red dot] (m1) {} ++ (0,-\ylbl) node {$x_{5}$};
	\path (2*\xpos,0) node [red dot] (m2) {} ++ (0,-\ylbl) node {$x_{6}$};
	\path (\xpos,-\ypos) node [red dot] (m3) {} ++ (0,-\ylbl) node {$x_{7}$};

\end{tikzpicture}
}
			\caption{Quasi-CM: $\delta=0.2572$} 
			\label{s-m-qcm}
		\end{subfigure}
		\hfill
		\begin{subfigure}{0.495\textwidth}
			\centering
			\vspace{0.35cm}
			\includegraphics[width=0.9\textwidth]{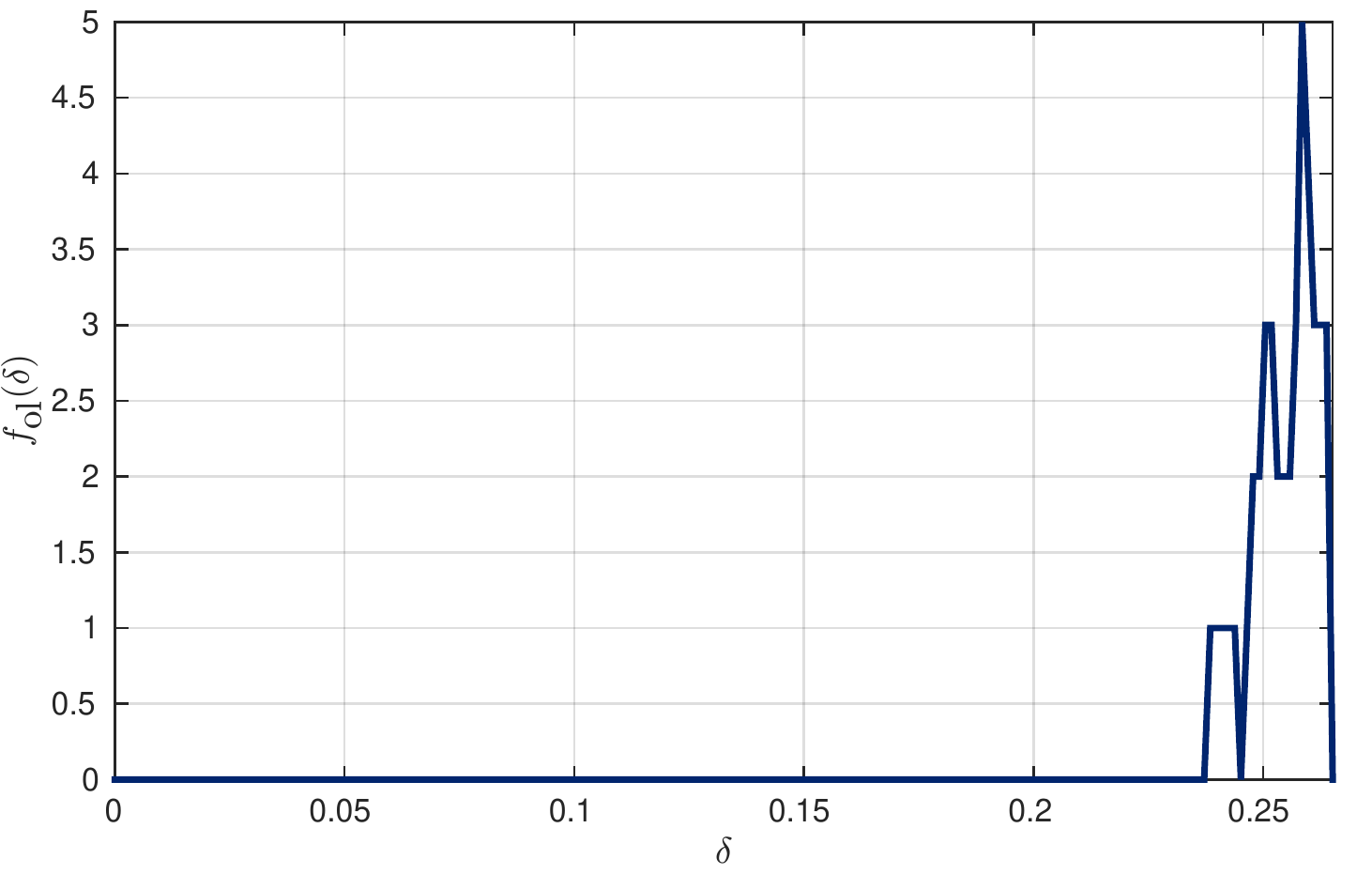}
			\caption{Overlapping Function}
			\label{s-m-ol}
		\end{subfigure}
		
	\end{subfigure}
	\caption{Authorship attribution as a directed network. \subref{s-m-net} Plays by Shakespeare in blue and Marlowe in red, and one (allegedly) co-authored play, in green. The weight of the directed edges is also shown. \subref{s-m-qum-01}-\subref{s-m-qum-02} Coverings obtained by applying directed single linkage (DSL) to the network. For resolution $\delta=0.2127$ the co-authored play forms a separate cluster, while for resolution $\delta=0.2138$ this play is incorporated to Shakespeare's plays. \subref{s-m-qcm} Coverings obtained by applying the proposed hierarchical overlapping quasi-clustering method with DSL as the quasi-ultrametric. We observe that for $\delta=0.2572$ covers are generated such that the co-authored play correctly belongs to both Shakespeare's and Marlowe's clusters. \subref{s-m-ol} Overlapping function resulting from applying the hierarchical overlapping  quasi-clustering algorithm.}
	\label{fig:authorship-directed}
\end{figure*}

\subsubsection{Shakespeare and Fletcher}

Denote by $\ccalS$ the set of $33$ plays written solely by Shakespeare, by $\ccalF$ the set of $21$ plays solely authored by Fletcher and by $\ccalS\ccalF$ the set containing the two co-authored plays. Define by $x_{i}$ the difference of dissimilarities of play $i$ to each author, $x_{i}=y_{iF}-y_{iS}$, for all $i \in \ccalS \cup \ccalF \cup \ccalS\ccalF$. The node set is then comprised of these $56$ points on $\reals$ representing the plays under study, i.e., $X=\{x_{i}\}_{i \in \ccalS \cup \ccalF \cup \ccalS\ccalF}$. The location of these points is depicted in Fig.~\ref{s-f-mds}. The dissimilarity function considered is the Euclidean distance $A_{X}(x_{k},x_{\ell})=|x_{k}-x_{\ell}|$ for every pair of plays $k$ and $l$.

In the application of Algorithm~\ref{a:overlapping-clustering}, we use Ward's linkage \cite{ward63} as the method $\ccalH$, we set $J=100$ dithering steps, and we fix $k=4$ in the determination of the noise amplitude; see Section~\ref{subsec:mnist}. The resulting overlapping function is displayed in Fig.~\ref{s-f-ol}, while the dendrogram obtained from just applying Ward's linkage is found in Fig.~\ref{s-f-dendro}. Two coverings of interest, achieved for resolutions $\delta=0.1968$ and $\delta=0.5460$ are presented in Figs.~\ref{s-f-cover-02} and \ref{s-f-cover-01}, respectively. Additionally, the clusters shown in Fig.~\ref{s-f-cover-02} are obtained by applying Ward's linkage at resolution $\delta=0.1185$.

In Fig.~\ref{s-f-cover-02} it is observed that the data does not have any overlap and that there are two resulting clusters $C_{1}=$\{Shakespeare (x 33), Fletcher\} and $C_{2}=$\{Fletcher (x 20), Shakespeare and Fletcher (x 2)\}. Thus, as a non-overlapping clustering method, it is seen that the proposed algorithm makes only one mistake but fails to identify the co-authored plays, assigning them to just one of the authors. By contrast, in Fig.~\ref{s-f-cover-01} the coverings $C_{1}=$\{Shakespeare (x 33), Fletcher (x 5), Shakespeare and Fletcher (x 2)\} and $C_{2}=$\{Fletcher (x 20), Shakespeare and Fletcher (x 2)\} \emph{do} overlap. More specifically, one error is committed and 6 plays belong to the overlap, among which the two co-authored plays are contained.

\subsubsection{Shakespeare, Chapman, and Jonson}

In this example, plays by Shakespeare, Chapman and Jonson are classified, including one play co-authored by the last two authors. Denote by $\ccalS$, $\ccalC$, $\ccalJ$, and $\ccalC\ccalJ$, the sets containing the plays solely authored by Shakespeare, Chapman, and Jonson, and the co-authored play, respectively. Mimicking the construction in the previous experiment, for a given play $i$ let $x_{i1}=y_{iC}-y_{iS}$ and $x_{i2}=y_{iJ}-y_{iS}$ be the differences in similarities between Chapman and Shakespeare, and Jonson and Shakespeare, respectively. Define the vector $\bbx_{i}=[x_{i1},x_{i2}]^{T} \in \reals^{2}$ for each play $i$ to obtain the node set $X=\{\bbx_{i}\}_{i \in \ccalS \cup \ccalC \cup \ccalJ \cup \ccalC\ccalJ}$ containing the $65$ plays from all three authors. The position of these nodes in $\reals^{2}$ is illustrated in Fig.~\ref{s-c-j-mds}. In this case, the dissimilarity function is given by the Euclidean distance between the points, i.e., $A_{X}(\bbx_{k},\bbx_{\ell})=\|\bbx_{k}-\bbx_{\ell}\|_{2}$, for plays $k$ and $l$.

In running Algorithm~\ref{a:overlapping-clustering} we select UPGMA \cite{upgma58} as the hierarchical method $\ccalH$, we set $J=100$ dithering steps and the noise amplitude is chosen for $k =1$. The resulting overlapping function is depicted in Fig.~\ref{s-c-j-ol} and the dendrogram obtained from directly applying UPGMA linkage is shown in Fig.~\ref{s-c-j-dendro}. A covering obtained from the proposed method for $\delta=0.0336$ is displayed in Fig.~\ref{s-c-j-cover}.
This covering is comprised of the blocks $C_{1}=$\{Shakespeare (x 33)\}, $C_{2}=$\{Jonson (x 16), Chapman and Jonson\}, $C_{3}=$\{Jonson (x 2), Chapman (x 14)\} and $C_{4}=$\{Jonson, Chapman, Chapman and Jonson\}. Note that block $C_{1}$ contains all of Shakespeare's plays and nothing else, thus making this classification perfect. Block $C_{2}$ contains all Jonson plays but one and also the co-authored play, and $C_{3}$ contains all Chapman plays and 2 Jonson plays, one of which overlaps with $C_{2}$. If these were all the outputs, then the algorithm would be making one full mistake (Jonson play classified as Chapman's), the co-authored play would not be recognized (as it is classified as Jonson's) and attention would be drawn to a Jonson play that is being classified as both Chapman and Jonson. However, the algorithm outputs a fourth covering $C_{4}$ that contains one play by Jonson (\emph{``Sejanus''}), one play by Chapman (\emph{``May Day''}) and the co-authored play (\emph{``Eastward Ho''}). Thus, by outputting this fourth cluster, now there are three overlapping plays: one by each author and the co-authored one. Therefore, the algorithm successfully recognizes the co-authored play and draws attention to two other plays that are hard to classify. By applying UPGMA hierarchical (non-overlapping) method we obtain at resolution $\delta=0.0357$ just three clusters $D_{X}(\delta)=\{C_{1},C_{2},C_{3}\}$ which are $C_{1}=\{$Shakespeare (x 33)$\}$, $C_{2}=\{$Jonson (x 15), Chapman and Jonson (x 1)$\}$ and $C_{3}=\{$Jonson (x2), Chapman (x14)$\}$. This clustering incurs in an error rate of $4.6\%$ and fails to identify the co-authored play.

\subsubsection{Shakespeare and Marlowe}

As a last example, we use plays by Shakespeare and Marlowe to illustrate the application of the hierarchical overlapping quasi-clustering method (Section~\ref{sec:quasi}). In this case, the network is constructed using directed edges based on the relative entropy between plays, which is asymmetric in nature, instead of using the distance to an author profile; see \cite{segarra15}. The resulting network and the corresponding directed dissimilarities $A_{X}$ are shown in Fig.~\ref{s-m-net}. For applying the algorithm we use quasi-ultrametrics obtained from applying directed single linkage (DSL) \cite{carlsson14icml} to dithered versions of the network. We carry out $J=100$ perturbations with noise given by $0.5$ times the smallest nonzero entry of the dissimilarity function $A_{X}$. For the sake of comparison, we also include results obtained from directly applying DSL to the original network.

By applying DSL we observe two relevant quasi-partitions, one for resolution $\delta=0.2127$ (Fig.~\ref{s-m-qum-01}) and one for $\delta=0.2138$ (Fig.~\ref{s-m-qum-02}). In the first case, we have three clusters, one consisting of Shakespeare's plays, one containing all Marlowe's plays, and a separate third cover including only the (allegedly) co-authored play \cite{merriam1993marlowe}. We observe that there is an edge between Shakespeare's plays and the other two covers. This can be intuitively interpreted as an indication that Shakespeare's writing is rich enough to contain part of Marlowe's stylistic fingerprint, but not vice versa; see \cite{segarra15}. Same happens for the edge between the co-authored play and Marlowe's plays, since the co-authored play also contains Shakespeare's style. When increasing the resolution to $\delta=0.2138$ we note that the co-authored play is incorrectly included as a Shakespeare play. However, when applying the hierarchical overlapping quasi-clustering algorithm, for resolution $\delta=0.2572$ we obtain the quasi-coverings illustrated in Fig.~\ref{s-m-qcm}. We observe that we have three covers, one containing all Shakespeare's plays and the co-authored play, one containing all Marlowe's plays, and the last cover containing two Marlowe's plays and the co-authored play. This third cover indicates that the co-authored play can be classified both as Shakespeare's and as Marlowe's. The resulting overlapping function is shown in Fig.~\ref{s-m-ol}.


\section{Conclusions}
	\label{sec:conclusions}

The proposed hierarchical overlapping clustering method obtains a collection of coverings of the data in such a way that a nested structure exists among the coverings and nodes can simultaneously belong to more than one cluster. The essence of the method is drawn from the connection of cut metrics, tolerance relations, and nested collections of coverings, that are respective generalizations of ultrametrics, equivalence relations, and dendrograms, typical of hierarchical (non-overlapping) clustering methods. A systematic method for obtaining cut metrics by dithering the dissimilarity function of a network is also proposed. The overlapping function is introduced as a tool for gaining insight on the structure of the data. It is also helpful in identifying meaningful resolutions within the nested collection of coverings.
Additionally, a hierarchical overlapping \emph{quasi}-clustering method was proposed. The objective of such method is to design a grouping algorithm that accommodates for asymmetries in the data, inherent to directed networks. We devised an algorithm for constructing quasi-cut metrics and showed how to obtain the output of the quasi-clustering method from these.

The hierarchical overlapping clustering method was applied to three synthetic networks to establish its performance in controlled and intuitive scenarios, as well as to illustrate how overlap can solve single linkage's chaining effect. Moreover, the method was applied to classification of handwritten digits obtained from the MNIST database showing satisfactory results in clusterable datasets. Finally, the proposed algorithm was used to classify plays by author, and it was shown that it succeeds in recognizing co-authored plays. By comparing with pertinent hierarchical non-overlapping algorithms, it was observed that for clusterable resolutions, the proposed algorithm using cut metrics coincides with the use of ultrametrics.




\bibliographystyle{IEEEtran}
\bibliography{myIEEEabrv,bib-cut-metrics}
%

%




\end{document}